\def\isReadyToSubmit{0}   
\def\isCameraReady{1}     
\def\isUsenix{0}          
\def\isACM{1}
            \renewcommand\footnotetextcopyrightpermission[1]{}
        \newcommand{\subparagraph}{}
\titlespacing{\section}{0pt}{*.4}{*.1}
\titlespacing{\subsection}{0pt}{*.4}{*.1}
\titlespacing{\subsubsection}{0pt}{*.4}{*.1}
\patchcmd{\ttlh@hang}{\parindent\z@}{\parindent\z@\leavevmode}{}{}
\patchcmd{\ttlh@hang}{\noindent}{}{}{}
\let\oldenumerate\enumerate
\renewcommand{\enumerate}{
    \oldenumerate
    \setlength{\itemsep}{.5pt}
    \setlength{\parskip}{0pt}
    \setlength{\parsep}{0pt}
}
\let\olditemize\itemize
\renewcommand{\itemize}{
    \olditemize
    \setlength{\itemsep}{1pt}
    \setlength{\parskip}{0pt}
    \setlength{\parsep}{0pt}
}
\renewcommand{\ALG@beginalgorithmic}{\footnotesize}
\newcommand{\ie}{{\em i.e.,~}}
\newcommand{\eg}{{\em e.g.,~}}
\def\F{Fig.~}
\def\Thm{Thm.~}
\def\A{Alg.~}
\newcommand{\ar}[3]{} 
\newcommand{\heading}[1]{{\vspace{0pt}\noindent\bf{#1}}} 
\gdef\xxxmark{%
    \expandafter\ifx\csname @mpargs\endcsname\relax 
        \expandafter\ifx\csname @captype\endcsname\relax 
            \marginpar{\textcolor{red}{xxx~}}
        \else
            \textcolor{red}{xxx~}
        \fi
    \else
        \textcolor{red}{xxx~}
    \fi}
\gdef\xxx{\@ifnextchar[\xxx@lab\xxx@nolab}
\long\gdef\xxx@lab[#1]#2{{\bf [\xxxmark \textcolor{red}{#2} ---{\sc #1}]}}
\long\gdef\xxx@nolab#1{{\bf [\xxxmark \textcolor{red}{#1}]}}
    \long\gdef\xxx@lab[#1]#2{}\long\gdef\xxx@nolab#1{}
\gdef\edit{\@ifnextchar[\edit@lab\edit@nolab}
\long\gdef\edit@lab[#1]#2{[\textcolor{red}{#2} ---{\sc #1}]}
\long\gdef\edit@nolab#1{[\textcolor{red}{#1}]}
    \long\gdef\edit@lab[#1]#2{[#2]}
\newcommand{\ignore}[1]{}
\definecolor{grey}{rgb}{0.5,0.5,0.5}
\bf\color{black},
\algnewcommand{\algorithmicvariables}{\textbf{global variables}}
\def\ie{{i.e.}\xspace}
\def\eg{{e.g.}\xspace}
\def\whp{w.h.p.\xspace}
\def\sysname{Turbo\xspace}
\def\pmw{PMW\xspace}
\def\pmwbypass{PMW-Bypass\xspace}
\def\pmwbypasstree{tree-structured PMW-Bypass\xspace}
\def\Pmwbypasstree{Tree-structured PMW-Bypass\xspace}
\def\exactcache{Exact-Cache\xspace}
\def\exactcachetree{tree-structured \exactcache\xspace}
\def\nonpartitioneddb{non-partitioned database\xspace}
\def\Nonpartitioneddb{Non-partitioned database\xspace}
\def\Partitionedstaticdb{Partitioned static database\xspace}
\def\partitionedstreamingdb{partitioned streaming database\xspace}
\def\Partitionedstreamingdb{Partitioned streaming database\xspace}
\def\nonpartitioneddbimprovement{$1.7-15.9\times$\xspace}  
\def\partitionedstaticdbimprovement{$1.9-4.7\times$\xspace}  
\def\partitionedstreamingdbimprovement{$1.9-5.4\times$\xspace}  
\def\theoreticalconvergence{worst-case convergence\xspace}
\def\Theoreticalconvergence{Worst-case convergence\xspace}
\def\empiricalconvergence{empirical convergence\xspace}
\def\Empiricalconvergence{Empirical convergence\xspace}
\newcommand{\lap}[1]{\operatorname{Lap}(#1)}
\def\kzipf{k_{\text{zipf}}}
\def\betaconv{\rho}
\def\lr{\text{\em lr}}
\def\E{\mathbb{E}}
\def\covid19{Covid\xspace}
\def\citibike{CitiBike\xspace}
\def\abccorp{ABC\xspace}
\begin{document}

\date{\today}

\title{\LARGE Turbo: Effective Caching in Differentially-Private Databases}


\author{Kelly Kostopoulou*}
\affiliation{%
  \institution{Columbia University}
  \city{}
  \country{}}
\email{kelkost@cs.columbia.edu}

\author{Pierre Tholoniat*}
\affiliation{%
  \institution{Columbia University}
  \city{}
  \country{}}
\email{pierre@cs.columbia.edu}

\author{Asaf Cidon}
\affiliation{%
  \institution{Columbia University}
  \city{}
  \country{}}
\email{asaf.cidon@columbia.edu}

\author{Roxana Geambasu}
\affiliation{%
  \institution{Columbia University}
  \city{}
  \country{}}
\email{roxana@cs.columbia.edu}

\author{Mathias L\'ecuyer}
\affiliation{%
  \institution{University of British Columbia}
  \city{}
  \country{}}
\email{mathias.lecuyer@ubc.ca}


\begin{abstract}

Differentially-private (DP) databases allow for privacy-preserving analytics over sensitive datasets or data streams.  In these systems, {\em user privacy} is a limited resource that must be conserved with each query.  We propose {\em \sysname}, a novel, state-of-the-art caching layer for linear query workloads over DP databases.  \sysname builds upon private multiplicative weights (\pmw), a DP mechanism that is powerful in theory but ineffective in practice, and transforms it into a highly-effective caching mechanism, {\em \pmwbypass}, that uses prior query results obtained through an external DP mechanism to train a \pmw to answer arbitrary future linear queries accurately and ``for free'' from a privacy perspective.  Our experiments on public \covid19 and CitiBike datasets show that \sysname with \pmwbypass conserves \nonpartitioneddbimprovement more budget compared to vanilla PMW and simpler cache designs, a significant improvement.  Moreover, \sysname provides support for range query workloads, such as timeseries or streams, where opportunities exist to further conserve privacy budget through DP parallel composition and warm-starting of \pmw state.  Our work provides a theoretical foundation and general system design for effective caching in DP databases.

\end{abstract}

\settopmatter{printfolios=true}
\maketitle

\def\thefootnote{*}\footnotetext{These authors contributed equally to this work.\\
  Extended version of the SOSP' 23 paper.
}\def\thefootnote{\arabic{footnote}}

\ifnum\isCameraReady=1
  \pagestyle{plain} 
\fi



\section{Introduction}
\label{sec:introduction}

\abccorp collects lots of user data from its digital products to analyze trends, improve existing products, and develop new ones. To protect user privacy, the company uses a restricted interface that removes personally identifiable information and only allows queries over aggregated data from multiple users. Internal analysts use interactive tools like Tableau to examine static datasets and run jobs to calculate aggregate metrics over data streams. Some of these metrics are shared with external partners for product integrations. However, due to data reconstruction attacks on similar ``anonymized'' and ``aggregated'' data from other sources, including the US Census Bureau~\cite{garfinkel2019understanding} and Aircloak~\cite{cohen2020linear}, the CEO has decided to pause external aggregate releases and severely limit the number of analysts with access to user data statistics until the company can find a more rigorous privacy solution.

The preceding scenario, while fictitious, is representative of what often occurs in industry and government, leading to obstacles to data analysis or incomplete privacy solutions. In 2007, Netflix withdrew ``anonymized'' movie rating data and canceled a competition due to de-anonymization attacks~\cite{narayanan2008robust}. In 2008, genotyping aggregate information from a clinical study led to the revelation of participants' membership in the diagnosed group, prompting the National Institutes of Health to advise against the public release of statistics from clinical studies~\cite{nih_genomic_data_sharing_policy}. In 2021, New York City excluded demographic information from datasets released from their CitiBike bike rental service, which could reveal sensitive user data~\cite{citibikeData}. The city's new, more restrained data release not only remains susceptible to privacy attacks but also prevents analyses of how demographic groups use the service.

Differential privacy (DP) provides a rigorous solution to the problem of protecting user privacy while analyzing and sharing statistical aggregates over a database. DP guarantees that analysts cannot confidently learn anything about any individual in the database that they could not learn if the individual were not in the database. Industry and government have started to deploy DP for various use cases~\cite{dp_use_cases}, including publishing trends in Google searches related to \covid19~\cite{Bavadekar2020Sep}, sharing LinkedIn user engagement statistics with outside marketers~\cite{rogers2020linkedin}, enabling analyst access to Uber mobility data while protecting against insider attacks~\cite{chorus}, and releasing the US Census' 2020 redistricting data~\cite{censusTopDown}. To facilitate the application of DP, industry has developed a suite of systems, ranging from specialized designs like the US Census TopDown~\cite{censusTopDown} and LinkedIn Audience Engagements~\cite{rogers2020linkedin} to more general DP SQL systems, like GoogleDP~\cite{zetasql}, Uber Chorus~\cite{chorus}, and Tumult Analytics~\cite{tumult}.  

DP systems face a significant challenge that hinders their wider adoption: they struggle to handle large workloads of queries while maintaining a reasonable privacy guarantee. This is known as the ``running out of privacy budget'' problem and affects any system, whether DP or not, that aims to release multiple statistics from a sensitive dataset. A seminal paper by Dinur and Nissim~\cite{dinur2003revealing} proved that releasing too many accurate linear statistics from a dataset fundamentally enables its reconstruction, setting a lower bound on the necessary error in queries to prevent such reconstruction. Successful reconstructions of the US Census 2010 data~\cite{garfinkel2019understanding} and Aircloak's data~\cite{cohen2020linear} from the aggregate statistics released by these entities exemplify this fundamental limitation. DP, while not immune to this limitation, provides a means of bounding the reconstruction risk. DP randomizes the output of a query to limit the influence of individual entries in the dataset on the result. Each new DP query increases this limit, consuming part of a {\em global privacy budget} that must not be exceeded, lest individual entries become vulnerable to reconstruction.

Recent work proposed treating the global privacy budget as a {\em system resource} that must be managed and conserved, similar to traditional resources like CPU~\cite{privatekube}.
When computation is expensive, {\em caching} is a go-to solution: it uses past results to save CPU on future computations.
Caches are ubiquitous in all computing systems -- from the processor to operating systems and databases -- enabling scaling to much larger workloads than would otherwise be afforded with fixed resources.
In this paper, we thus ask: {\em How should caching work in DP systems to significantly increase the number of queries they can support under a privacy guarantee?}  While DP theory has explored algorithms to reuse past query results to save privacy budget in future queries, there is no general DP caching system that is effective in common practical settings.

We propose {\em \sysname}, the first general and effective caching layer for DP SQL databases that boosts the number of linear queries (such as sums, averages, counts) that can be answered accurately under a fixed, global DP guarantee.  In addition to incorporating a traditional {\em exact-match cache} that saves past DP query results and reuses them if the same query reappears, \sysname builds upon a powerful theoretical construct, known as {\em private multiplicative weights (PMW)}~\cite{pmw}, that leverages past DP query results to learn a histogram representation of the dataset that can go on to answer {\em arbitrary} future linear queries for free once it has converged.
While PMW has compelling convergence guarantees in theory, we find it ineffective in practice, being overrun even by an exact-match cache.

We make three main contributions to PMW design to boost its effectiveness and applicability.
First, we develop {\em \pmwbypass}, a variant of PMW that bypasses it during the privacy-expensive learning phase of its histogram, and switches to it once it has converged to reap its free-query benefits.
This change requires a new mechanism for updating the histogram despite bypassing the PMW, plus new theory to justify its convergence.  The \pmwbypass technique is highly effective, significantly outperforming both the exact-match cache and vanilla PMW in the number of queries it can support.  Second, we optimize our mechanisms for workloads of range queries that do not access the entire database. These types of queries are typical in timeseries databases and data streams.
For such workloads, we organize the cache as a tree of multiple \pmwbypass objects and demonstrate that this approach outperforms alternative designs.  Third, for streaming workloads, we develop warm-starting procedures for tree-structured \pmwbypass histograms, resulting in faster convergence. 

We formally analyze each of our techniques, focusing on privacy, per-query accuracy, and convergence speed.
Each technique represents a contribution on its own and can be used separately, or, as we do in \sysname, as part of the first {\em general, effective, and accurate DP-SQL caching design}.  We prototype \sysname on TimescaleDB, a timeseries database, and use Redis to store caching state. We evaluate \sysname on workloads based on \covid19 and CitiBike datasets. We show that \sysname significantly improves the number of linear queries that can be answered with less than $5\%$ error (w.h.p.) under a global $(10, 0)$-DP guarantee, compared to not having a cache and alternative cache designs.  Our approach outperforms the best-performing baseline in each workload by 1.7 to 15.9 times, and even more significantly compared to vanilla PMW and systems with no cache at all (such as most existing DP systems).  These results demonstrate that our \sysname cache design is both general and effective in boosting workloads in DP SQL databases and streams, making it a promising solution for companies like \abccorp that seek an effective DP SQL system to address their user data analysis and sharing concerns. We make \sysname available open-source at \url{https://github.com/columbia/turbo}, part of a broader set of infrastructure systems we are developing for DP, all described here: \url{https://systems.cs.columbia.edu/dp-infrastructure/}.

\section{Background}
\label{sec:background}

\heading{Threat model.}
We consider a threat model known as {\em centralized differential privacy}: one or more untrusted analysts query a dataset or stream through a restricted, aggregate-only interface implemented by a trusted database engine of which \sysname is a trusted component.
The goal of the database and \sysname is to provide accurate answers to the analysts' queries without compromising the privacy of individual users in the database.
The two main adversarial goals that an analyst may have are membership inference and data reconstruction. Membership inference is when the adversary wants to determine whether a known data point is present in the dataset.  Data reconstruction involves reconstructing unknown data points from a known subset of the dataset. To achieve their goals, the adversary can use composition attacks to single out contributions from individuals, collude with other analysts to coordinate their queries, link anonymized records to public datasets, and access arbitrary auxiliary information {\em except for} timing side-channel information.
Previous research demonstrated attacks under this threat model~\cite{narayanan2008robust, de2013unique, ganta2008composition, cohen2020linear, garfinkel2019understanding, homer2008resolving}.

\heading{Differential privacy (DP).}
DP~\cite{Dwork:2006:CNS:2180286.2180305} randomizes aggregate queries over a dataset to prevent membership inference and data reconstruction~\cite{wasserman2010statistical,dong2022gaussian}. DP randomization (a.k.a. noise) ensures that the probability of observing a specific result is stable to a change in one datapoint (e.g., if user $x$ is removed or replaced in the dataset, the distribution over results remains similar).
More formally, a query $Q$ is $(\epsilon, \delta)$-DP if, for any two datasets $D$ and $D'$ that differ by one datapoint, and for any result subset $S$ we have: $\mathbb{P}(Q(D) \in S) \leq e^\epsilon \mathbb{P}(Q(D') \in S) + \delta$.
$\epsilon$ quantifies the privacy loss due to releasing the DP query's result (higher means less privacy), while $\delta$ can be interpreted as a failure probability and is set to a small value.

Two common mechanisms to enforce DP are the $\textrm{Laplace}$ and $\textrm{Gaussian}$ mechanisms.
They add noise from an appropriately scaled Laplace/Gaussian distribution to the true query result, and return the noisy result.  As an example, for counting queries and a database of size $n$, adding noise from $\textrm{Laplace}(0, 1/n\epsilon)$, ensures $(\epsilon, 0)$-DP (a.k.a. pure DP); adding noise from $\textrm{Gaussian}(0, \sqrt{2\ln(1.25/\delta)} / n\epsilon)$ ensures $(\epsilon, \delta)$-DP.
The accuracy for such queries can be controlled probabilistically by converting it into the $(\epsilon, \delta)$ parameters.

Answering multiple queries on the same data fundamentally degrades privacy~\cite{dinur2003revealing}.
DP quantifies this over a sequence of DP queries using the {\em composition property}, which in its basic form states that releasing two $(\epsilon_1, \delta_1)$-DP and $(\epsilon_2, \delta_2)$-DP queries is $(\epsilon_1+\epsilon_2, \delta_1+\delta_2)$-DP. When queries access disjoint data subsets, their composition is $(\max(\epsilon_1,\epsilon_2), \max(\delta_1,\delta_2))$-DP and is called {\em parallel composition}. Using composition, one can enforce a global $(\epsilon_G, \delta_G)$-DP guarantee over a workload, with each DP query ``consuming'' part of a {\em global privacy budget} that is defined upfront as a system parameter~\cite{rogers2016privacy}.

Good values of the global privacy budget in interactive DP SQL systems remain subject for debate~\cite{choosing_epsilon}, but generally, an ideal value for strong theoretical guarantees is $\epsilon_G=0.1$, while $\epsilon_G=1$ are considered acceptable. Larger values are often considered vacuous semantically, since individuals' privacy risk grows with $e^{\epsilon_G}$.  
In this paper, we aim to achieve values of $\epsilon_G=1$ or smaller over a query workload.

\heading{Private multiplicative weights (\pmw).}
\pmw is a DP mechanism to answer online linear queries with bounded error~\cite{pmw}.
We defer detailed description of PMW, plus an example illustrating its functioning, to \S\ref{sec:detailed-design} and only give here an overview.
PMW maintains an approximation of the dataset in the form of a {\em histogram}: estimated counts of how many times any possible data point appears in the dataset.
When a query arrives, PMW estimates an answer using the histogram and computes the {\em error of this estimate} against the real data in a DP way, using a DP mechanism called {\em sparse vector (SV)}~\cite{privacybook} (described shortly).
If the estimate's error is low, it is returned to the analyst, consuming no privacy budget (i.e., the query is answered ``for free'').
If the estimate's error is large, then \pmw executes the DP query on the data with the Laplace/Gaussian mechanism, consuming privacy budget as needed. It returns the DP result and also uses it to update the histogram for more accurate estimates to future queries.

An additional cost in using \pmw comes from the SV, a well-known DP mechanism that can be used to test the error of a sequence of query estimates against the ground truth with DP guarantees and limited privacy budget consumption~\cite{privacybook}.  We refer the reader to textbook descriptions of SV for detailed functioning~\cite{privacybook} and provide here only an overview of its semantics. SV is a stateful mechanism that receives queries and estimates for their results one by one, and assesses the error between these estimates and the ground-truth query results. While the estimates have error below a preset threshold with high probability, SV returns success and consumes {\em zero privacy}. However, as soon as SV detects a large-error estimate, it requires a {\em reset}, which is a privacy-expensive operation that re-initializes state within the SV to continue the assessments.
In common SV implementations, a reset costs as much as $3\times$ the privacy budget of executing one DP query on the data.

The theoretical vision of \pmw is as follows.
Under a stream of queries, \pmw first goes through a ``training'' phase, where its histogram is inaccurate, requiring frequent SV resets and consuming budget.
Failed estimation attempts update the histogram with low-error results obtained by running the DP query.
Once the histogram becomes sufficiently accurate, the SV tests consistently pass, thereby ameliorating the initial training cost.
Theoretical analyses provide a compelling {\em worst-case convergence} guarantee for the histogram, determining a worst-case number of updates required to train a histogram that can answer \emph{any future linear query} with low error~\cite{hardt2010multiplicative}.
However, no one has examined whether this worst-case bound is practical and if \pmw outperforms natural baselines, such as an exact-match cache.
\section{\sysname Overview}
\label{sec:overview}

\sysname is a caching layer that can be integrated into a DP SQL engine, significantly increasing the number of linear queries that can be executed under a fixed, global $(\epsilon_G,\delta_G)$-DP guarantee.
We focus on {\em linear queries} like sums, averages, and counts (defined in \S\ref{sec:detailed-design}), which are widely used in interactive analytics and constitute the class of queries supported by approximate databases such as BlinkDB~\cite{blinkdb}.
These queries enable powerful forms of caching like \pmw, and also allow for accuracy guarantees, which are important when doing approximate analytics, as one does on a DP database.

\subsection{Design Goals}
\label{sec:goals}

In designing \sysname, we were guided by several goals:
\begin{itemize}
    \item[{\bf (G1)}] {\em Guarantee privacy:} \sysname must satisfy ($\epsilon_G, \delta_G$)-DP.

    \item[{\bf (G2)}] {\em Guarantee accuracy:} \sysname must ensure {\em $\mathit{(\alpha, \beta)}$-accuracy} for each query, defined for $\alpha > 0$, $\beta \in (0,1)$ as follows: if $R'$ and $R$ are the returned and true results, then $|R' - R| \le \alpha$ with $(1-\beta)$ probability.  If $\beta$ is small, a result is {\em $\mathit{\alpha}$-accurate w.h.p.} (with high probability).

    \item[{\bf (G3)} and {\bf (G4)}] {\em Provide \theoreticalconvergence guarantees but optimize for \empiricalconvergence:} We aim to maintain \pmw's theoretical convergence (G3), but we prioritize for {\em \empiricalconvergence} speed, a new metric that measures, on a workload, the number of updates needed to answer most queries for free (G4).

    \item[{\bf (G5)}] {\em Improve privacy budget consumption:} We aim for {\em significant improvements} in privacy budget consumption compared to both not having a cache and having an exact-match cache or a vanilla \pmw.

    \item[{\bf (G6)}] {\em Support multiple use cases:}  \sysname should benefit multiple important workload types, including static and streaming databases, and queries that arrive over time.

    \item[{\bf (G7)}] {\em Easy to configure:} \sysname should include few knobs with fairly stable performance. 
\end{itemize}

(G1) and (G2) are strict requirements.
(G3) and (G4) are driven by our belief that DP systems should not only possess meaningful theoretical properties but also be optimized for practice.
(G5) is our main objective.
(G6) requires further attention, given shortly.
(G7) is driven by the limited guidance from \pmw literature on parameter tuning.
\pmw meets goals (G1-G3) but falls significantly short for (G4-G7).
\sysname achieves all goals; we provide theoretical analyses for (G1-G3) in \S\ref{sec:detailed-design} and empirical evaluations for (G4-G7) in \S\ref{sec:evaluation}.

\subsection{Use Cases}
\label{sec:use-cases}

The DP literature is fragmented, with different algorithms developed for different use cases.
We seek to create a {\em general system} that supports multiple settings, highlighting three here:

\heading{(1) Non-partitioned databases} are the most common use case in DP. A group of untrusted analysts issue queries over time against a static database, and the database owner wishes to enforce a global DP guarantee.
\sysname should allow a larger workload of queries compared to existing approaches.

\heading{(2) and (3) Partitioned databases} are less frequently investigated in DP theory literature, but important to distinguish in practice~\cite{pinq,parallel_composition_2022}.
When queries tend to access different data ranges, it is worth partitioning the data and accounting for consumed privacy budget in each partition separately through DP's parallel composition.
This lowers privacy budget consumption in each partition and permits more non- or partially-overlapping queries against the database.
This kind of workload is inherent in {\em timeseries} and {\em streaming databases}, where analysts typically query the data by {\em windows of time}, such as how many new Covid cases occurred in the week after a certain event, or what is the average age of positive people over the past week.
We distinguish two cases:

{\bf (2) \Partitionedstaticdb}, where the database is static and partitioned by an attribute that tends to be accessed in ranges, such as time, age, or geo-location. All partitions are available at the beginning. Queries arrive over time and most are assumed to run on some range of interest, which can involve one or more partitions. \sysname should provide significant benefit not only compared to the baseline caching techniques, but also compared to not having partitioning.

    {\bf (3) \Partitionedstreamingdb,} where the database is partitioned by time and partitions arrive over time. In such workloads, queries tend to run {\em continuously} as new data becomes available. Hence, new partitions see a similar query workload as preceding partitions.  \sysname should take advantage of this similarity to further conserve privacy.

For all three use cases, we aim to support {\em online workloads} of queries that are not all known upfront.
As \S\ref{sec:related-work} reviews, most works on optimizing global privacy budget consumption operate in the {\em offline setting}, where all queries are known upfront.  For that setting, algorithms are known to answer all queries simultaneously with optimal use of privacy budget. However, this setting is unrealistic for real use cases, where analysts adapt their queries based on previous results, or issue new queries for different analyses. In such cases, which correspond to the {\em online setting}, we require adaptive algorithms that accurately answer queries on-the-fly. \sysname does this by making effective use of PMW, as we next describe.

\subsection{\sysname Architecture}
\label{sec:architecture}

\begin{figure}[t]
    \includegraphics[width=\linewidth]{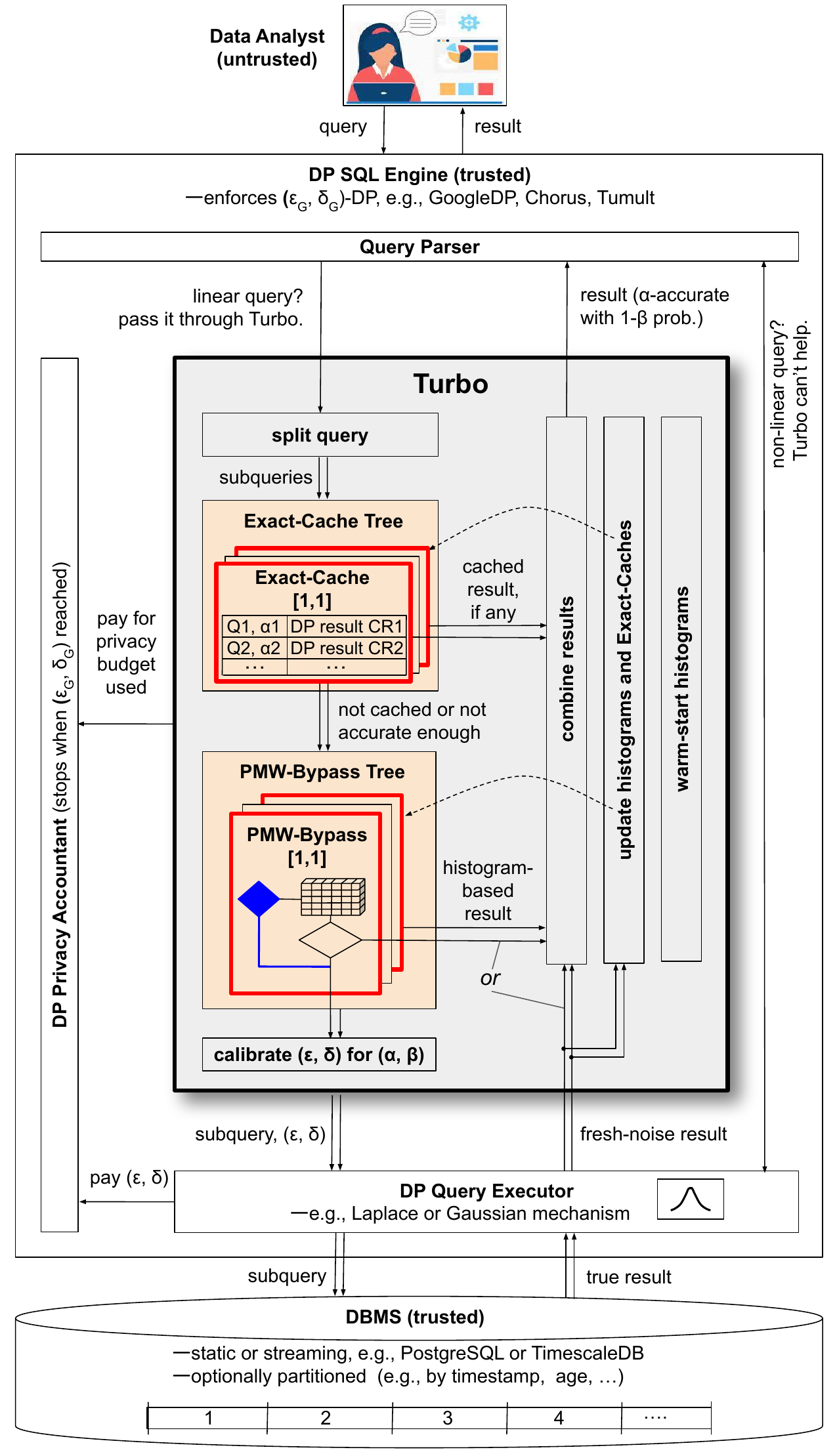}
    \caption{{\bf \sysname architecture.}
    }
    \label{fig:architecture}
\end{figure}

\F\ref{fig:architecture} shows the \sysname architecture.
It is a caching layer that can be added to a DP SQL engine, like GoogleDP~\cite{zetasql}, Uber Chorus~\cite{chorus}, or Tumult Analytics~\cite{tumult}, to boost the number of linear queries that can be answered accurately under a fixed global DP guarantee. The filled components indicate our additions to the DP SQL engine, while the transparent components are standard in DP SQL engines.
Here is how a typical DP SQL engine works {\em without \sysname}. Analysts issue queries against the engine, which is trusted to enforce a global $(\epsilon_G, \delta_G)$-DP guarantee. The engine executes the queries using a DP query executor, which adds noise to query results with the Laplace/Gaussian mechanism and consumes a part of the global privacy budget.  A budget accountant tracks the consumed budget; when it runs out, the DP SQL engine either stops responding to new queries (as do Chorus and Tumult Analytics) or sacrifices privacy by ``resetting'' the budget (as does LinkedIn Audience Insights). We assume the former.

\sysname intercepts the queries before they go into the DP query executor and performs a very proactive form of caching for them, reusing prior results as much as possible to avoid consuming privacy budget for new queries.
\sysname's architecture is organized in two types of components: {\em caching objects} (denoted in light-orange background in \F\ref{fig:architecture}) and {\em functional components} that act upon them (denoted in grey background).

\heading{Caching objects.}
\sysname maintains several types of caching objects.
First, the {\em \exactcache} stores previous queries and their DP results, allowing for direct retrieval of the result without consuming any privacy budget when the same query is seen again on the same database version.
Second, the {\em \pmwbypass} is an improved version of PMW that reduces privacy budget consumption during the training phase of its histogram (\S\ref{sec:pmw-bypass}). Given a query, \pmwbypass uses an effective heuristic to judge whether the histogram is sufficiently trained to answer the query accurately; if so, it uses it, thereby spending no budget.
Critically, \pmwbypass includes a mechanism to {\em externally update} the histogram even when bypassing it, to continue training it for future, free-budget queries.

\sysname aims to enable parallel composition for workloads that benefit from it, such as timeseries or streaming workloads, by supporting database partitioning. In theory, partitions could be defined by attributes with public values that are typically queried by range, such as time, age, or geo-location. In this paper, we will focus on partitioning by time.  \sysname uses a {\em tree-structured \pmwbypass} caching object, consisting of multiple histograms organized in a binary tree, to support linear range queries over these partitions effectively (\S\ref{sec:tree-structured-caching}). This approach conserves more privacy budget and enables larger workloads to be run when queries access only subsets of the partitions, compared to alternative methods.

\heading{Functional components.}
When \sysname receives a linear query through the DP SQL engine's query parser, it applies its caching objects to the query. If the database is partitioned, \sysname splits the query into multiple sub-queries based on the available tree-structured caching objects. Each sub-query is first passed through an \exactcache, and if the result is not found, it is forwarded to a \pmwbypass, which selects whether to execute it on the histogram or through direct Laplace/Gaussian.  For sub-queries that can leverage histograms, the answer is supplied directly without execution or budget consumption. For sub-queries that require execution with Laplace/Gaussian, the $(\epsilon, \delta)$ parameters for the mechanism are computed based on the $(\alpha, \beta)$ accuracy parameters, using the ``calibrate $(\epsilon, \delta)$ for $(\alpha, \beta)$'' functional component in \F\ref{fig:architecture}. Then, each sub-query and its privacy parameters are passed to the DP query executor for execution.

\sysname combines all sub-query results obtained from the caching objects to form the final result, ensuring that it is within $\alpha$ of the true result with probability $1-\beta$ (functional component ``combine results''). New results computed with fresh noise are used to update the caching objects (functional component ``update histograms and Exact-Caches''). Additionally, \sysname includes cache management functionality, such as ``warm-start of histograms,'' which reuses trained histograms from previous partitions to warm-start new histograms when a new partition is created (\S\ref{sec:warm-start}).
This mechanism is effective in streams where the data's distribution and query workload are stable across neighboring partitions. 
Theoretical and experimental analyses show that external histogram updates and warm-starting give convergence properties similar to, but slightly slower than, vanilla PMW.                 

\section{Detailed Design}
\label{sec:detailed-design}

We next detail the novel caching objects and mechanisms in \sysname, using different use cases from \S\ref{sec:use-cases} to illustrate each concept.  We describe \pmwbypass in the static, \nonpartitioneddb, then introduce partitioning for the \pmwbypasstree, followed by the addition of streaming to discuss warm-start procedures.
We focus on the Laplace mechanism and basic composition, thus only discussing pure $(\epsilon, 0)$-DP and ignoring $\delta$.
We also assume $\beta$ is small enough for \sysname results to count as $\alpha$-accurate w.h.p.
Appendix~\ref{sec:appendix:gaussian_rdp} extends all our theoretical results to $(\epsilon,\delta)$-DP, non-zero $\beta$, the Gaussian mechanism, and R\'enyi composition; in theory, all these should help to further conserve privacy budget, so we speculate they will be important for practice, but we leave their implementation and evaluation for future work.

\subsection{Notation}
\label{sec:notation}

Our algorithms require some notation.
Given a data domain $\mathcal{X}$, a database $x$ with $n$ rows can be represented as a histogram $h \in \mathbb{N}^\mathcal{X}$ as follows: for any data point $v \in \mathcal{X}$, $h(v)$ denotes the number of rows in $x$ whose value is $v$.
$h(v)$ is the {\em bin} corresponding to value $v$ in the histogram.
We denote $N=|\mathcal{X}|$ the size of the data domain and $n$ the size of the database.
When $\mathcal{X}$ has the form $\{0,1\}^d$, we call $d$ the data domain dimension.
Example: a database with $3$ binary attributes has domain $\mathcal{X} = \{0,1\}^3$ of dimension $d=3$ and size $N=8$; $h(0,0,1)$ is the number of rows that are equal to $(0,0,1)$.
\S\ref{sec:running-example} exemplifies a database, its dimensions, and its histogram.

We define {\em linear queries} as SQL queries that can be transformed or broken into the following form:

\noindent\begin{minipage}{.5\textwidth}
    \begin{lstlisting}[frame=none]
  SELECT AVG(*) FROM ( SELECT q(A, B, C, ...) FROM Table ),
\end{lstlisting}
\end{minipage}
where {\small \tt q} takes $d$ arguments (one for each attribute of {\small \tt Table}, denoted $A, B, C, ...$) and outputs a value in $[0,1]$. When {\small \tt q} has values in $\{0,1\}$, a query returns the {\em fraction of rows satisfying predicate {\small \tt q}}. To get raw counts, we multiply by $n$, which we assume is public information.
PMW (and hence \sysname) is designed to support only linear queries.
Examples of {\em non-linear} queries are: maximum, minimum, percentiles, top-k.

\subsection{Running Example}
\label{sec:running-example}

\begin{figure}[t]
    \includegraphics[width=\linewidth]{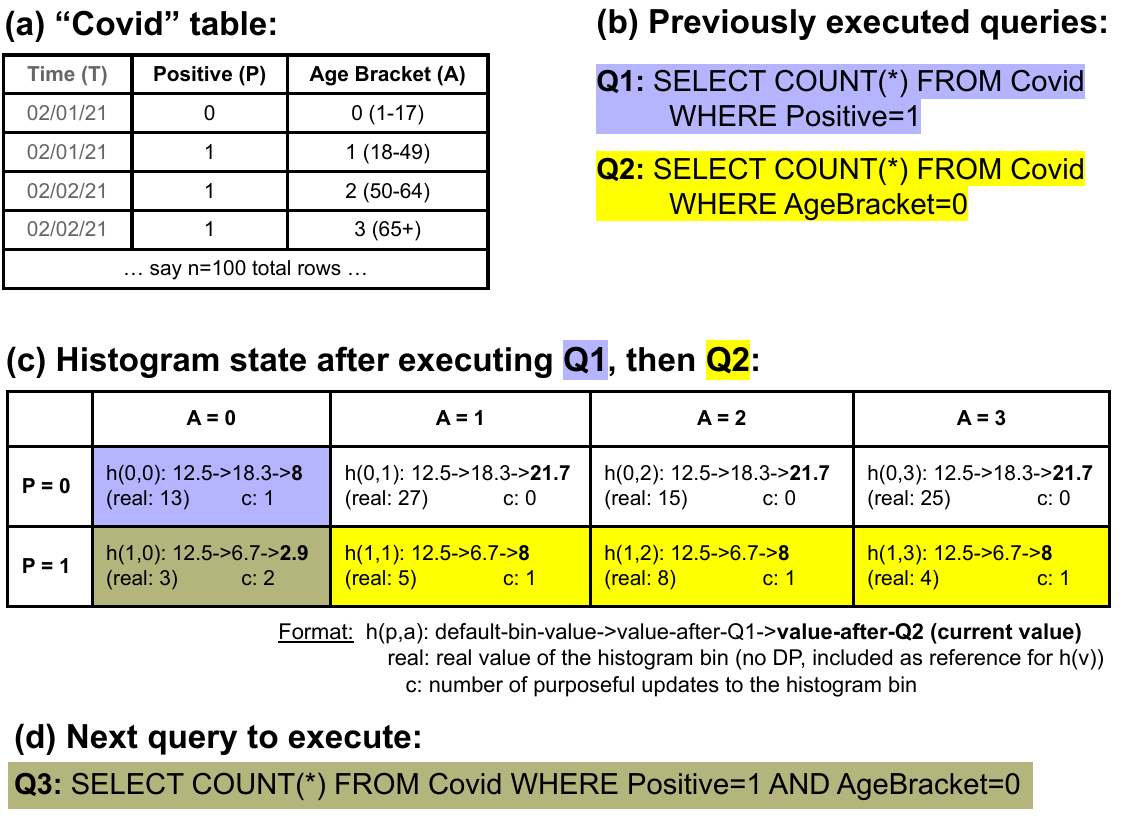}
    \caption{{\bf Running example}. (a) Simplified \covid19 tests dataset with $n=100$ rows and data domain size $N=8$ for the two non-time attributes, test outcome $P$ and subject's age bracket $A$. (b) Two queries that were previously run.
        (c) State of the histogram as queries are executed.
        (d) Next query to run.}
    \label{fig:running-example}
\end{figure}

\F\ref{fig:running-example} gives a running example inspired by our evaluation \covid19 dataset.
Analysts run queries against a database consisting of Covid test results over time.
\F\ref{fig:running-example}(a) shows a simplified version of the database, with only three attributes: the test's date, T; the outcome, P, which can be 0 or 1 for negative/positive; and subject's age bracket, A, with one of four values as in the figure.
The database could be either static or actively streaming in new test data. Initially, we assume it static and ignore the T attribute.
Our example database has $n=100$ rows and data domain size $N=8$ for P and A.

\F\ref{fig:running-example}(b) shows two queries that were previously executed.
While queries in \sysname return the {\em fraction} of entries satisfying a predicate, for simplicity we show raw counts.  $Q1$ requests the positivity rate and $Q2$ the fraction of tested minors. \F\ref{fig:running-example}(c) illustrates the histogram representation corresponding to the dataset, as estimated by the PMW algorithm, whose execution we discuss shortly.
\F\ref{fig:running-example}(d) shows the next query that will be executed, $Q3$, requesting the fraction of positive minors.
$Q3$ is not identical to either $Q1$ or $Q2$, but it is {\em correlated} with both, as it accesses data that overlaps with both queries.
Thus, while neither $Q1$'s nor $Q2$'s DP results can be used to directly answer $Q3$, intuitively, they both should help.
That is the insight that \pmw (and \pmwbypass) exploits through its query-by-query build-up of a DP histogram representation of the database that becomes increasingly accurate in bins that are accessed by more queries.

\F\ref{fig:running-example}(c) shows the state of the histogram after executing $Q1$ and $Q2$ but before executing $Q3$.
Each bin in the histogram stores an {\em estimation} of the number of rows equal to $(p,a)$.
This is the $h(p,a)$ field in the figure, for which we show the sequence of values it has taken following updates due to $Q1$ and $Q2$.
Initially, $h(p,a)$ in all bins is set assuming a uniform distribution over $P \times A$; in this case the initial value was $n/N = 12.5$.
The figure also shows the real (non-private) count for each bin (denoted {\em real}), which is {\em not} part of the histogram, but we include it as a reference.
As queries are executed, $h(p,a)$ values are updated with DP results, depending on which bins are accessed.
$Q1$ and $Q2$ have already been executed, and both are assumed to have resorted to the Laplace mechanism, so they both contributed DP results to specific bins (we specify the update algorithm later when discussing \A\ref{alg:pmw-bypass}).
$Q1$ accessed, and hence updated, data in the $P=1$ bins (the bottom row of the histogram).
$Q2$ did so in the $A=0$ bins (the left column of the histogram).
Through a renormalization step, t hese queries have also changed the other bins, though not necessarily in a query-informed way.
The $c$ variable in each bin shows the number of queries that have purposely updated that bin.
We can see that estimates in the $c>0$ bins are a bit more accurate compared to those in the $c=0$ bins.
The only bin that has been updated twice is $(P=1,A=0)$, as it lies at the intersection of both queries; that bin has diverged from its neighboring, singly-updated bins and is getting closer to its true value.
(Bin $(P=1,A=2)$, updated only once, is even more accurate purely by chance.)

Our last query, $Q3$, which accesses $(P=1,A=0)$, may be able to leverage its estimation ``for free,'' assuming the estimation's error is within $\alpha$ w.h.p.
Assessing that the error is within $\alpha$ -- privately, and without consuming privacy budget if it is -- is the purview of the SV mechanism incorporated in a \pmw.
The catch is that the SV consumes privacy budget, in copious amounts, if this test fails.
This is what makes vanilla \pmw impractical, a problem that we address next.
\subsection{\pmwbypass}
\label{sec:pmw-bypass}

\pmwbypass addresses practical inefficiencies of \pmw, which we illustrate with simple demonstration.

\heading{Demo experiment.}
Using a four-attribute \covid19 dataset with domain size 128 (so a bit larger than in our running example), we generate a query pool of over 34K unique queries by taking all possible combinations of values over the four attributes.
From this pool, we sample uniformly with replacement 35K queries to form a workload; there is therefore some identical repetition of queries but not much.
This workload is not necessarily realistic, but it should be an {\em ideal showcase} for \pmw: there are many unique queries relative to the small data domain size (giving the \pmw ample chance to train), and while most queries are unique, they tend to overlap in the data they touch (giving the \pmw ample chance to reuse information from previous queries).
We evaluate the cumulative privacy budget spent as queries are executed, comparing the case where we execute them through \pmw vs. directly with Laplace, with and without an exact-match cache.
\setlength{\columnsep}{2pt}
\begin{wrapfigure}{r}{0.66\linewidth}
	\vspace{-0.3cm}
	\begin{center}
		\includegraphics[width=\linewidth]{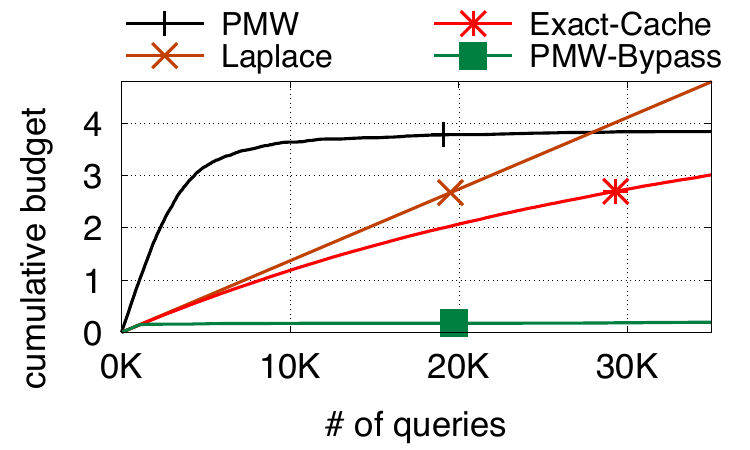}
	\end{center}
	\caption{{\bf Demo experiment.}} \label{fig:pmw-laplace-demo}
\end{wrapfigure}
\F\ref{fig:pmw-laplace-demo} shows the results.
As expected for this workload, the \pmw works, as it converges after roughly the first 10K queries and consumes very little budget afterwards.
However, before converging, the PMW consumes enormous budget.
In contrast, direct execution through Laplace grows linearly, but more slowly compared to PMW's beginning.
The \pmw eventually becomes better than Laplace, but only after $\approx 27K$ queries.

Moreover, if instead of always executing with Laplace, we trivially cached the results in an exact-match cache for future reuse if the same query reappeared -- a rare event in this workload -- then the \pmw would {\em never} become notably better than this simple baseline!
This happens for a workload that should be ideal for PMW.
\S\ref{sec:evaluation} shows that for other workloads, less favorable for PMW but more realistic, the outcome persists: {\em PMWs underperform even the simplest baselines in practice}.

We propose {\em \pmwbypass}, a re-design for PMWs that releases their power and makes them {\em very effective}.
We make multiple changes to PMWs, but the main one involves {\em bypassing} the PMW while it is training (and hence expensive) and instead executing directly with Laplace (which is less expensive).
Importantly, we do this while still updating the histogram with the Laplace results so that eventually the PMW becomes good enough to switch to it and reap its zero-privacy query benefits.
The \pmwbypass line in \F\ref{fig:pmw-laplace-demo} shows just how effective this design is in our demo experiment: \pmwbypass follows the low, direct-Laplace curve instead the PMW's up until the histogram converges, after which it follows the flat shape of PMW's convergence line.
In this experiment, as well as in others in \S\ref{sec:evaluation}, the outcome is the same: {\em our changes make PMWs very effective}.
We thus believe that \pmwbypass should replace PMW in most settings where the latter is studied, not just in our system's design.

\heading{\pmwbypass.}
\F\ref{fig:pmw-bypass} shows the functionality of \pmwbypass, with the main changes shown in blue and bold.
Without our changes, a vanilla PMW works as follows.
Given a query $Q$, \pmw first estimates its result using the histogram ($R1$) and then uses the SV protocol to test whether it is $\alpha$-accurate w.h.p.  The test involves comparing $R1$ to the {\em exact result} of the query executed on the database.  If a noisy version of the absolute error between the two is within a threshold comfortably far from $\alpha$, then $R1$ is considered accurate w.h.p. and outputted directly.  This is the good case, because the query need not consume {\em any privacy}.  The bad case is when the SV test fails. First, the query must be executed directly through Laplace, giving a result $R2$, whose release costs privacy.  But beyond that, the SV must be {\em reset}, which consumes privacy. In total, if the Laplace execution costs $\epsilon$, then releasing $R2$ costs $4*\epsilon$! This is what causes the extreme privacy consumption during the training phase for vanilla PMW, when the SV test mostly fails. Still, in theory, after paying handsomely for this histogram ``miss,'' $R2$ can be used to update the histogram (the arrow denoted ``update (R2)'' in \F\ref{fig:pmw-bypass}), in hopes that future correlated queries ``hit'' in the histogram.

\begin{figure}[t]
	\centering
	\includegraphics[width=\linewidth]{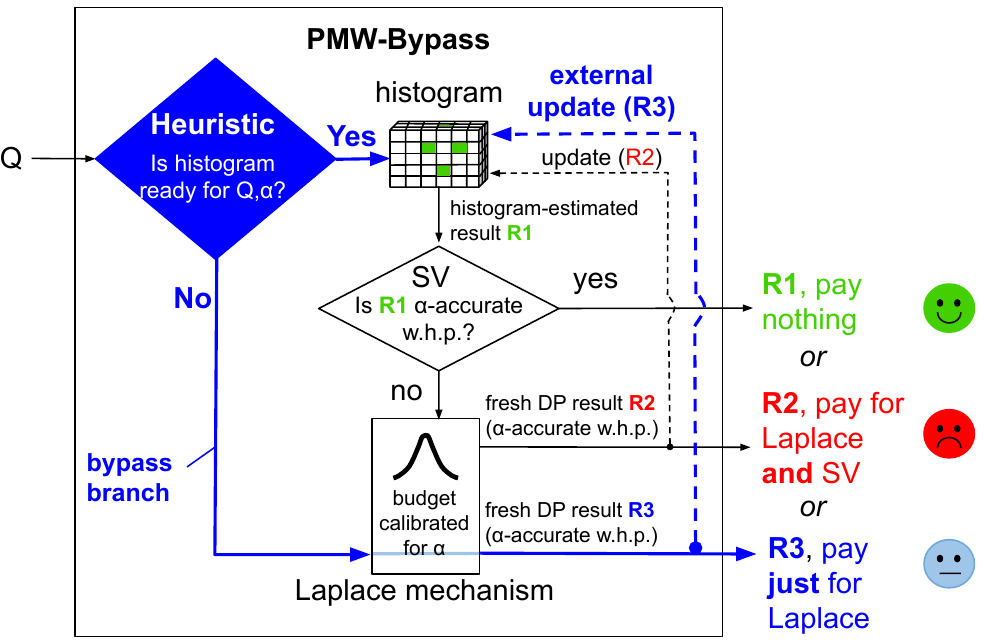}
	\captionof{figure}{{\bf \pmwbypass.} New components over vanilla PMW are in blue/bold.}  \label{fig:pmw-bypass}
	\vspace{-0.4cm}
\end{figure}

\pmwbypass adds three components to PMW: (1) a {\em heuristic} that assesses whether the histogram is likely ready to answer $Q$ with the desired accuracy; (2) a {\em bypass branch}, taken if the histogram is deemed not ready and  direct query execution with Laplace instead of going through (and likely failing) the SV test; and (3) an {\em external update} procedure that updates the histogram with the bypass branch result.
Given $Q$, \pmwbypass first consults the heuristic, which only inspects the histogram, so its use is free. Two cases arise:

\underline{Case 1:} If the heuristic says the histogram is ready to answer Q with $\alpha$-accuracy w.h.p., then the PMW is used, $R1$ is generated, and the SV is invoked to test $R1$'s actual accuracy.
If the heuristic's assessment was correct, then this test will succeed, and hence the free, $R1$ output branch will be taken.
Of course, no heuristic that lacks access to the raw data can guarantee that $R1$ will be accurate enough, so if
the heuristic was actually wrong, then the SV test will fail and the expensive $R2$ path is taken.
Thus, a key design question is whether there exist heuristics good enough to make \pmwbypass effective.
We discuss heuristic designs below, but the gist is that simple and easily tunable heuristics work well, enabling the significant privacy budget savings in \F\ref{fig:pmw-laplace-demo}.

\underline{Case 2:} If the heuristic says the histogram is not ready to answer Q with $\alpha$-accuracy w.h.p., then the bypass branch is taken and Laplace is invoked directly, giving result $R3$.  
Now, \pmwbypass must pay for Laplace, but because it bypassed the PMW, it does not risk an expensive SV reset.
A key design question here is whether we can still reuse $R3$ to update the histogram, even though we did not, in fact, consult the SV to ensure that the histogram is truly insufficiently trained for Q.
We prove that performing the same kind of update as the PMW would do, from outside the protocol, would break its theoretical convergence guarantee.
Thus, for \pmwbypass, we design an {\em external update} procedure that {\em can} be used to update the histogram with $R3$ while preserving the PMW's \theoreticalconvergence, albeit at slower speed.

\begin{algorithm}[t]
	\centering
	\begin{algorithmic}[1]
		\State \textbf{Cfg.:} $\textproc{PrivacyAccountant}$, $\textproc{Heuristic}$, accuracy params $(\alpha, \beta)$, histogram convergence params $\lr, \tau$, database $\textproc{Data}$ with $n$ rows.
		%
		\vspace{0.1cm}
		\Function{Update}{$h, q, s$}
		\State Update estimated values: $\forall v \in \mathcal{X}, g(v) \gets h(v)e^{s * q(v)}$
		\State Renormalize: $\forall v \in \mathcal{X}, h(v) \gets g(v) / \sum_{w \in \mathcal{X}} g(w)$
		\State \Return $h$
		\EndFunction
		\vspace{0.1cm}
		\Function{CalibrateBudget}{$\alpha, \beta$}
		\State \Return $\frac{4\ln(1/\beta)}{n\alpha}$
		\EndFunction
		\vspace{0.1cm}
		\State Initialize histogram $h$ to uniform distribution on $\mathcal{X}$
		\State $\epsilon \gets \textproc{CalibrateBudget}(\alpha, \beta)$\label{line:calibrate}
		\State $\textproc{PrivacyAccountant.pay}(3 \cdot \epsilon)$ \ \ \ \ \ \ \ \ // Pay to initialize first SV
		\While{$\textproc{PrivacyAccountant.HasBudget()}$}
		\State $\hat \alpha \gets \alpha/2 + \lap{1/\epsilon n}$ \ \ // SV reset \label{line:sv_reset}
		\State $SV \gets \textproc{NotConsumed}$
		\While{$SV == \textproc{NotConsumed}$}
		\State Receive next query $q$
		\If{$\textproc{Heuristic.IsHistogramReady}(h, q, \alpha, \beta)$} \label{line:heuristic}
		\State {\bf // Regular PMW branch:}
		\If{$|q(\textproc{data}) - q(h)| + \lap{1/\epsilon n} < \hat \alpha$} \ \ // SV test \label{line:sv_test}
		\State Output $R1 = q(h)$  \ \ \ \ \ \ \ \ \ \ \ \ \ \ \ \ \ \ \ \ \ \ \  \ \ \ \ \ \ \ \ \ {\textcolor{green}{$\rightarrow${\bf R1}, pay nothing}}
		\Else{}
		\State $\textproc{PrivacyAccountant.pay}(4*\epsilon)$ \ \textcolor{red}{$\rightarrow$ {\bf R2}, pay for}
		\State Output $R2 = q(\textproc{data}) + \lap{1/\epsilon n}$\ \ \ \ \ \ \ \textcolor{red}{Laplace, SV} \label{line:R2}
		\State // Update histogram (R2):
		\State $s \gets \begin{cases}
				\lr  & \text{if}\ R2 > q(h) \\
				-\lr & \text{if}\ R2 < q(h)
			\end{cases}$
		\State $h \gets \textproc{Update}{(h, q, s)}$
		\State $SV \gets \textproc{Consumed}$ \ \ // force SV reset
		\State $\textproc{Heuristic.Penalize}(q, h)$
		\EndIf
		\Else{}
		\State {\bf // Bypass branch:}
		\State $\textproc{PrivacyAccountant.pay}(\epsilon)$  \ \ \ \ \ \ \ \ \ \ \ \ {\textcolor{blue}{$\rightarrow$ {\bf R3}, pay for}}
		\State Output $R3 = q(\textproc{data}) + \lap{1/\epsilon n}$ \ \ \ \ \ \ \ \ \ \ \ \textcolor{blue}{Laplace} \label{line:R3}
		\State // External update of histogram (R3):
		\State $s \gets \begin{cases}
				\lr  & \text{if}\ R3 > q(h) + \tau\alpha                         \\
				-\lr & \text{if}\ R3 < q(h) - \tau\alpha                         \\
				0    & \text{otherwise\ \ // no updates if we're not confident!}
			\end{cases}$ \label{line:external_update}
		\State $h \gets \textproc{Update}{(h, q, s)}$
		\EndIf
		\EndWhile
		\EndWhile
	\end{algorithmic}
	\caption{\footnotesize\bf \pmwbypass algorithm.}
	\label{alg:pmw-bypass}
\end{algorithm}

\heading{Heuristic {$\textproc{IsHistogramReady}$}.}
One option to assess if a histogram is ready to answer a query accurately is to check if it has received at least $C$ updates, for some global threshold $C$.
However, this approach is often imprecise as it fails to detect histogram regions that might still be untrained.
Thus, we use a separate threshold value per bin, raising the question of how to configure all these thresholds.
To keep configuration easy (goal {\bf (G6)}), we use an {\em adaptive per-bin threshold}.
For each bin, we initialize its threshold $C$ with a value $C_0$ and increment $C$ by an additive step $S_0$ every time the heuristic errs (\ie, predicts it is ready when it is in fact not ready for that query).
While the threshold is too small, the heuristic gets penalized until it reaches a threshold high enough to avoid mistakes.
For queries that span multiple bins, we only penalize the least-updated bins to prevent a single, inaccurate bin from setting back the histogram from queries using accurate bins only.
With these thresholds, we only configure initial parameters $C_0$ and $S_0$, which we find experimentally easy to do (\S\ref{sec:evaluation:non-partitioned-database}).

\heading{External updates.}
While we want to bypass the PMW when the histogram is not ``ready'' for a query, we still want to update the histogram with the result from the Laplace execution (R3); otherwise, the histogram will never get trained.
That is the purpose of our external updates (lines 33-34 in \A\ref{alg:pmw-bypass}).
They follow a similar structure as a regular PMW update (lines 24-25 in \A\ref{alg:pmw-bypass}), with a key difference.
In vanilla PMW, the histogram is updated with the result $R2$ from Laplace {\em only when} the SV test fails.
In that case, \pmw updates the relevant bins in one direction or another, depending on the sign of the error $R2 - q(h)$.  For example, if the histogram is underestimating the true answer, then R2 will likely be higher than the histogram-based result, so we should increase the value of the bins (case $R2 > q(h)$ of line 24 in \A\ref{alg:pmw-bypass}).

In \pmwbypass, external updates are performed not just when the authoritative SV test finds the histogram estimation inaccurate, but also when our heuristic predicts it to be inaccurate even though it may actually be accurate. In the latter case, performing external updates in the same way as PMW updates would add bias into the histogram and forfeit its convergence guarantee.  To prevent this, in \pmwbypass, external updates are executed only when we are quite confident, based on the direct-Laplace result $R3$, that the histogram overestimates or underestimates the true result.  Line 33 shows the change: the term $\tau\alpha$ is a {\em safety margin} that we add to the comparison between the histogram's estimation and $R3$, to be confident that the estimation is wrong and the update warranted. This lets us prove \theoreticalconvergence akin to \pmw.
Finally, like regular PMW updates, external updates reuse the already DP result $R3$, hence they do not consume any additional privacy budget beyond what was already consumed to generate $R3$.

\heading{Learning rate.}
In addition to the bypass option, we make another key change to PMW design for practicality.
When updating a bin, we increase or decrease the bin's value based on a learning rate parameter, $\lr$, which determines the size of the update step taken (line 3 in \A\ref{alg:pmw-bypass}).
Prior PMW works fix learning rates that minimize theoretical convergence time, typically $\alpha/8$~\cite{complexity}.
However, our experiments show that larger values of $\lr$ can lead to much faster convergence, as dozens of updates may be needed to move a bin from its uniform prior to an accurate estimation.
However, increasing $\lr$ beyond a certain point can impede convergence, as the updates become too coarse.
Taking cue from deep learning, \pmwbypass uses a scheduler to adjust $lr$ over time.
We start with a high $\lr$ and progressively reduce it as the histogram converges.

\heading{Guarantees.}
{\bf (G1)} {\em Privacy:} \pmwbypass preserves $\epsilon_G$-DP across the queries it executes (\Thm\ref{thm:pmwbypass:privacy}). 
	{\bf (G2)} {\em Accuracy:} \pmwbypass is $\alpha$-accurate with $1-\beta$ probability for each query (\Thm\ref{thm:pmwbypass:accuracy}).
This property stems from how we calibrate Laplace budget $\epsilon$ to $\alpha$ and $\beta$.
This is function $\textproc{CalibrateBudget}$ in \A\ref{alg:pmw-bypass} (lines 6-7).
For $n$ datapoints, setting $\epsilon = \frac{4\ln(1/\beta)}{n\alpha}$ ensures that each query is answered with error at most $\alpha$ with probability $1-\beta$. 
	{\bf (G3)} {\em \Theoreticalconvergence:} 
If $\lr/\alpha < \tau \le 1/2$, then \whp \pmwbypass needs to perform at most
$\frac{\ln |\mathcal{X}|}{ \lr(\tau\alpha-\lr)/2}$
updates (\Thm\ref{thm:pmwbypass:convergence}).
\pmwbypass's worst-case convergence is thus similar to \pmw's, but roughly $1/2\tau$ times slower.
\S\ref{sec:evaluation:non-partitioned-database} confirms this empirically.

\subsection{Tree-Structured \pmwbypass}
\label{sec:tree-structured-caching}

We now switch to the partitioned-database use cases, focusing on time-based partitions, as in timeseries databases, whether static or dynamic.
Rather than accessing the entire database, analysts tend to query specific time windows, such as requesting the Covid positivity rate over the past week, or the fraction of minors diagnosed with Covid in the two weeks following school reopening.
This allows the opportunity to leverage DP's parallel composition: the database is partitioned by time (say a week's data goes in one partition), and privacy budget is consumed at the partition level. Queries can run at finer or coarser granularity, but they will consume privacy against the partition(s) containing the requested data.
With this approach, a system can answer more queries under a fixed global $(\epsilon_G, \delta_G)$-DP guarantee compared to not partitioning~\cite{privatekube,sage,pinq,roy2010airavat}.
We implement support for partitioning and parallel composition in \sysname through a new caching object called a {\em \pmwbypasstree}.

\begin{figure}[t]
    \includegraphics[width=\linewidth]{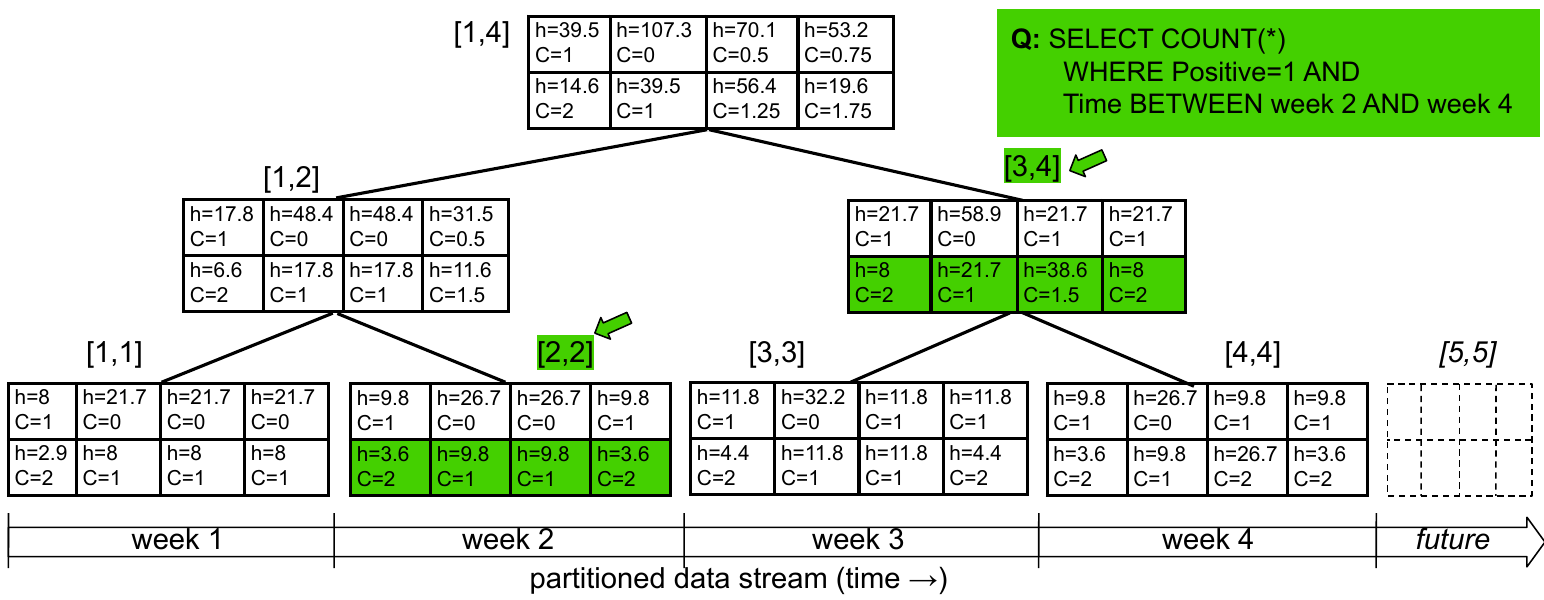}
    \caption{{\bf Example of tree-structured histograms.}
    }
    \label{fig:running-example-histogram-tree}
\end{figure}

\heading{Example.}
\F\ref{fig:running-example-histogram-tree} shows an extension of the running example in \S\ref{sec:running-example}, with the database partitioned by week. Denote $n_i$ the size of each partition.
A new query, $Q$, asks for the positivity rate over the past three weeks.
How should we structure the histograms we maintain to best answer this query?
One option would be to maintain {\em one histogram per partition} (i.e., just the leaves in the figure).
To resolve $Q$, we query the histograms for weeks 2, 3, 4.
Assume the query results in an update. Then, we need to update histograms, computing the answer with DP within our $\alpha$ error tolerance.
Updating histograms for weeks 2, 3, and 4 requires querying the result for each of them with parallel composition.
Given that $\textrm{Laplace}(1/n\epsilon)$ has standard deviation $\sqrt{2}/n\epsilon$, for week 4 for instance, we need noise scaled to $1/n_4\epsilon$.  Thus, we consume a fairly large $\epsilon$ for an accurate query to compensate for the smaller $n_4$.
Another option would be to use {\em one histogram per range} (\ie set of contiguous partitions), but that involves maintaining a large state that grows quadratically in the number of partitions.

Instead, our approach is to maintain a {\em binary-tree-structured set of histograms}, as shown in \F\ref{fig:running-example-histogram-tree}.  For each partition, but also for a binary tree growing from the partitions, we maintain a separate histogram.  To resolve $Q$, we split the query into two sub-queries, one running on the histogram for week 2 ([2,2]) and the other running on the histogram for the range week 3 to week 4 ([3,4]).  That last sub-query would then run on a larger dataset of size $n_3 + n_4$, requiring a smaller budget consumption to reach the target accuracy.

\begin{figure}[t]
    \includegraphics[width=0.8\linewidth]{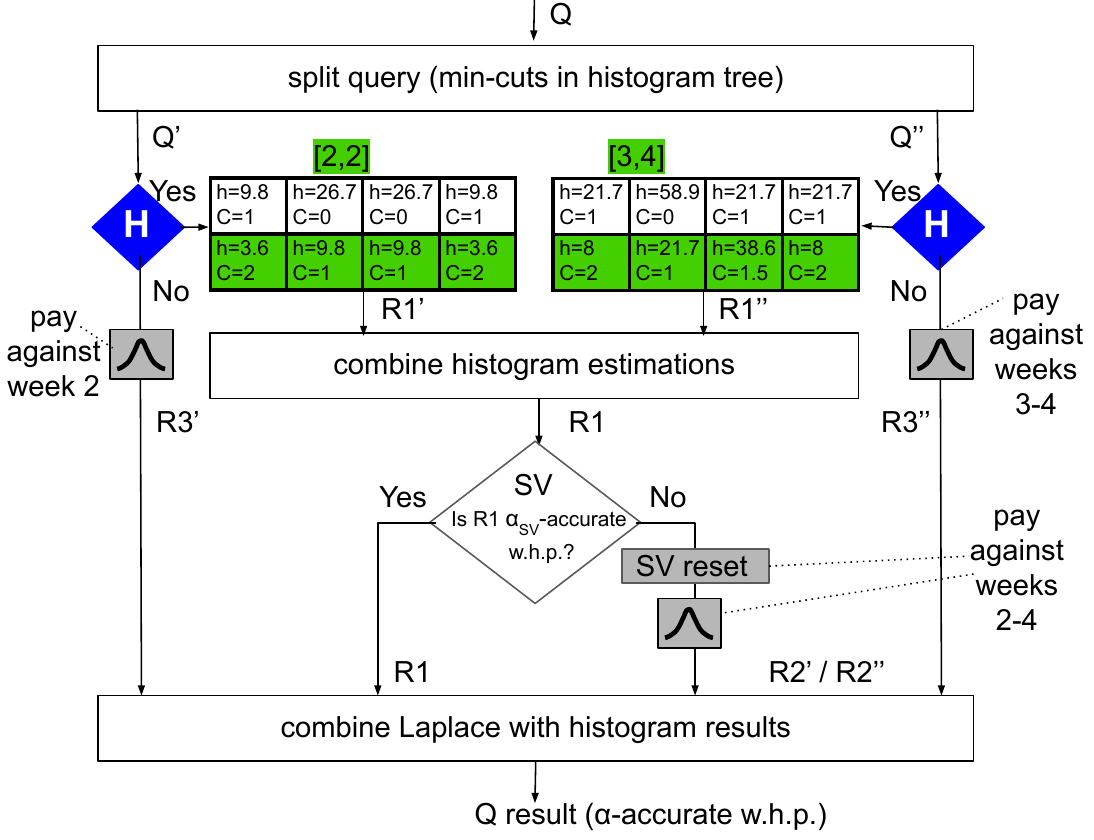}
    \caption{{\bf \Pmwbypasstree.}
    }
    \label{fig:tree-structured-pmw-bypass}
\end{figure}

\heading{Design.}
\F~\ref{fig:tree-structured-pmw-bypass} shows our design.
Given a query $Q$, we split it into sub-queries based on the histogram tree, applying the min-cuts algorithm to find the smallest set of nodes in the tree that covers the requested partitions.
In our example, this gives two sub-queries, $Q'$ and $Q''$, running on histograms [2,2] and [3,4], respectively.
For each sub-query, we use our heuristic to decide whether to use the histogram or invoke Laplace directly.
If both histograms are ``ready,'' we compute their estimations and combine them into one result, which we test with an SV against an accuracy goal.
In our example, there are only two sub-queries, but in general there can be more, some of which will use Laplace while others use histograms.
We adjust the SV's accuracy target to an $(\alpha_{SV}, \beta_{SV})$ calibrated to the aggregation that we will need to do among the results of these different mechanisms.
We pay for any Laplace's and SV resets against the queried data partitions and finally combine Laplace results with histogram-based results.
Each subquery updates the corresponding histograms of the tree (details in \A\ref{alg:pmwbypass-tree}) and increments $c$ for updated nodes.

\heading{Guarantees.}
{\bf (G1)} {\em Privacy} and {\bf (G2)} {\em accuracy} are unchanged (\Thm\ref{thm:pmwbypasstree:privacy},~\ref{thm:pmwbypasstree:accuracy}).
    {\bf (G3)} {\em \Theoreticalconvergence:} For $T$ partitions, if $\lr/\alpha < \tau \le 1/2$, then
\whp we perform at most $\frac{2T (\lceil\log T\rceil+1)\ln |\mathcal{X}|}{\eta ( \tau \alpha - \eta)/2}$ updates (\Thm\ref{thm:pmwbypasstree:convergence}).

\subsection{Histogram Warm-Start}
\label{sec:warm-start}

An opportunity exists in streams to warm-start histograms from previously trained ones to converge faster.  
Prior work on PMW initialization~\cite{pmwpub} only justifies using a public dataset close to the private dataset to learn a more informed initial value for histogram bins than a uniform prior.
We prove that warm-starting a histogram by copying an entire, trained histogram preserves the worst-case convergence.
In \sysname, we use two procedures: for new leaf histograms, we copy the previous partition's leaf node; for non-leaf histograms, we take the average of children histograms.
We also initialize the per-bin thresholds and update counters of each node.

\heading{Guarantees.}
{\bf (G1)} {\em Privacy} and {\bf (G2)} {\em accuracy} guarantees are unchanged.
    {\bf (G3)} {\em \Theoreticalconvergence:}
If there exists $\lambda \ge 1$ such that the initial histogram $h_0$ in \A\ref{alg:pmw-bypass} satisfies $\forall x \in \mathcal{X}, h_0(x) \ge \frac{1}{\lambda |\mathcal{X}|}$, then we show that each \pmwbypass converges, albeit at a slower pace (\Thm\ref{thm:warm-start}).
The same properties hold for the tree.

\section{Prototype Implementations}
\label{sec:implementation}

\begin{figure*}[t]
	\includegraphics[width=.8\linewidth]{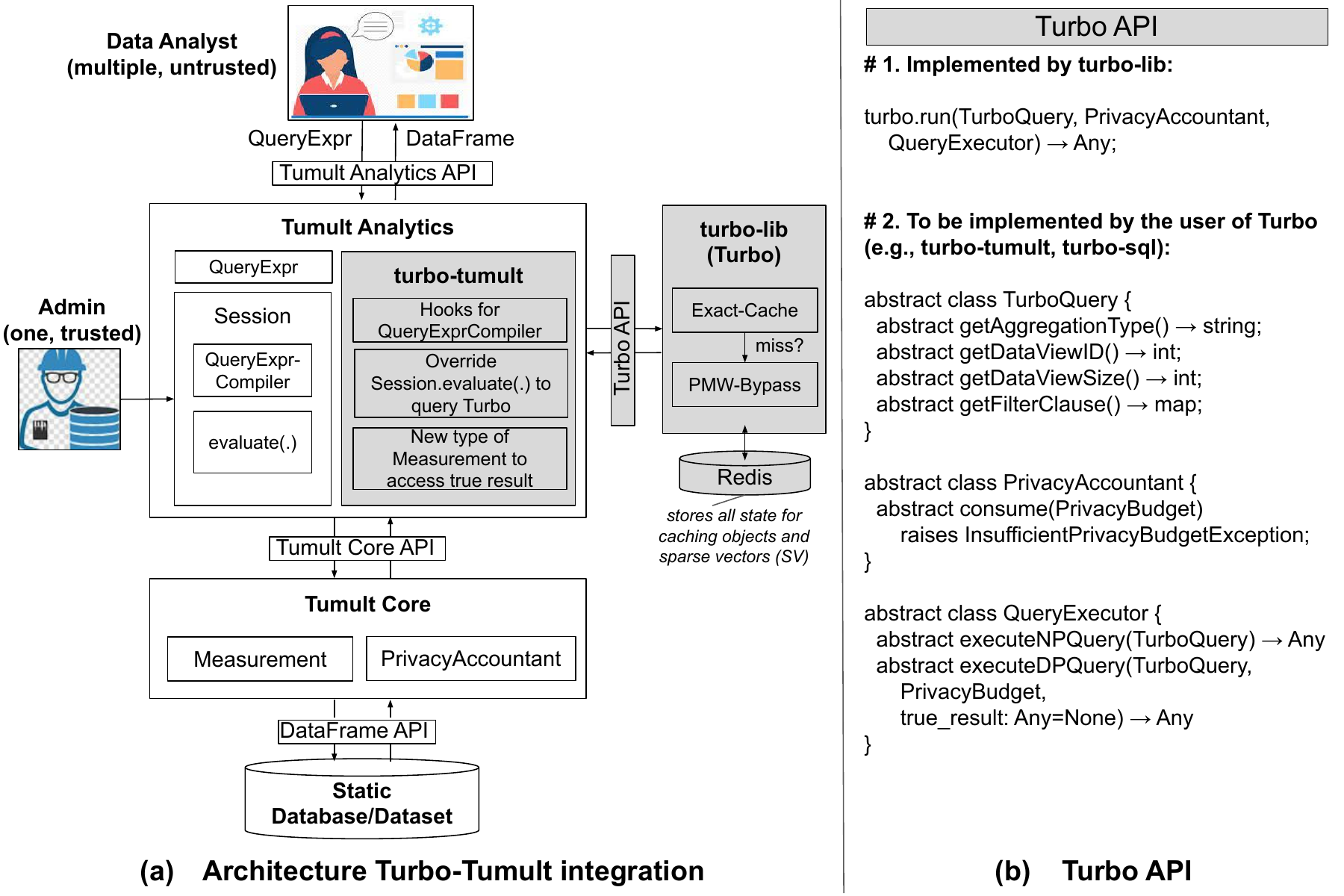}
	\caption{{\bf (a) \sysname integration into Tumult. (b) Turbo API.}
	}
	\label{fig:turbo-tumult-prototype}
\end{figure*}

We prototype \sysname in three components that we release open-source: (1) {\em turbo-lib}, a library that contains \sysname-specific functionality, notably the caching objects and functional components in the \sysname architecture (\F\ref{fig:architecture}); (2) {\em turbo-tumult}, a library that connects turbo-lib with Tumult Analytics, to add caching functionality into that existing DP system; and (3) {\em turbo-sql}, a basic standalone library to run a select subset of DP SQL queries through turbo-lib directly against a traditional, non-DP database, such as TimescaleDB or PostgreSQL.  The reason for both (2) and (3) is that Tumult provides a more complete database query engine, supporting a wide variety of Spark-SQL-like queries while having significant limitations with respect to parallel composition on partitioned databases.  Our integration with Tumult (2) shows that \sysname can be integrated with a real, existing DP system, while our standalone querying library (3) can let us experiment with both non-partitioned and partitioned databases, in both static- and streaming-DB settings.  We use a version of (3) (released through the SOSP'23 artifact) throughout our evaluation, but describe here predominantly our integration with Tumult, which can serve as a blueprint for integration with other existing DP systems in the future.  Finally, we separately release the artifact that we used in our evaluation and which was evaluated by the SOSP'23 artifact evaluation committee.  All are available from the repository: \url{https://github.com/columbia/turbo}.

\F\ref{fig:turbo-tumult-prototype}(a) shows the architecture of our Turbo-Tumult integration.
The grey boxes are Turbo-specific while the clear boxes are unchanged Tumult components.

\heading{Tumult overview.}
Without Turbo, Tumult functions as follows.  It consists of two main components: (1) Tumult Core, a library that implements primitive DP mechanisms and privacy accounting; and (2) Tumult Analytics, a layer on top of Tumult Core that exports a higher-level, Spark-SQL-like query interface on top of one or more static datasets or databases.
Tumult Core exports a low-level API consisting of a privacy accountant and a {\em measurement} abstraction, which is the Tumult terminology for a DP computation.  It implements the necessary methods to ``evaluate'' a measurement on top of a dataset and deduct its privacy budget against the accountant.  Tumult Analytics implements two main abstractions: (1) a {\em session}, which represents the context against which Tumult will enforce a global privacy budget across all queries issued against this session and (2) a Spark-SQL-like interface for analysts to construct queries that consists of multiple {\em transformations} chained one after another (such as filters, projections, joins, etc.) against one or more datasets, followed by a single {\em aggregate function} (such as an average, count, sum, median, percentiles, stdev, etc.), with potential for splitting and grouping the results by one or more attributes.
Compared to other DP SQL databases, it is our impression that Tumult supports a fairly wide range of SQL that can be handled with DP.

For the purposes of this paper, we will assume that an administrator creates a session upfront, specifying a global privacy budget to be enforced and hosts this session as a service to guard analysts' accesses to a sensitive dataset (or datasets) underneath.
Analysts, which can be many and are untrusted, send their query expressions for execution against the session.
The session is then responsible for executing each query by first compiling it into a measurement and then evaluating it through the Tumult Core, which will deduct the necessary privacy budget.
While the measurement abstraction is a quite general representation of a DP computation, Tumult Core and Analytics assume a Spark DataFrame-based API for interacting with the dataset(s) underneath.
Thus, measurements compiled through Tumult Analytics, will be Spark DataFrame queries -- to be executed through Spark -- in which Tumult Analytics transparently includes an additional operation that adds an appropriately scaled amount of noise to the result of the aggregation. A Tumult measurement encapsulates this compiled Spark DataFrame query, along with information regarding the privacy budget it is programmed to expend upon its execution.  Tumult Analytics hands over this measurement for Tumult Core, which executes the DataFrame query through Spark and deducts the measurement's reported privacy consumption through its privacy accountant.

\heading{Turbo-Tumult.}
The preceding describes Tumult and its main abstractions (relevant for this paper) {\em without Turbo}.  Tumult itself has no caching capabilities, so our integration aims to add Turbo as a caching layer in Tumult.  The integration consists of two components, denoted in grey in \F\ref{fig:turbo-tumult-prototype}(a).  First is turbo-lib, which contains the core Turbo functionality we described in this paper.  Turbo-lib exposes an API, Turbo API, consisting of the functions Turbo exports to and requires from any user of Turbo, such as turbo-tumult and turbo-sql.
Second is turbo-tumult, a small library that incorporates Turbo into Tumult by invoking and implementing different parts of the Turbo API.

Turbo-tumult takes a {\em light-touch approach} to incorporating Turbo into Tumult, which ensures that our system is easily adoptable. It manifests in two ways. First, we only extend, but do not modify, certain classes within Tumult Analytics and implement new types of measurements to extend, but not change, Tumult Core functionality.  Specifically, turbo-tumult  provides one type of externally visible abstraction: a new type of session, called {\em TurboSession}, which overrides the original's query evaluation function to: (1) incorporate a set of hooks into the query compiler such that certain information necessary for Turbo is extracted from the query, such as the dataset ID, the type of aggregate function, and the filtering conditions; and (2) if the query can be handled by Turbo, TurboSession passes it through turbo-lib instead of executing it directly on Tumult Core.  Turbo-lib then checks its own caching objects for an answer, but resorts to Tumult Core -- which it accesses back through the Turbo API, discussed shortly -- for execution of the query and for privacy budget deduction in the Tumult Core accountant. 

Second, we take a fail-to-Tumult approach for all queries.  Turbo supports a small subset of all queries supported by Tumult: e.g., we do not support joins, medians, percentiles, and a number of transformation functions allowed in Tumult.  Moreover, Turbo aims to control accuracy of the queries, and presently that accuracy must be specified upfront, when the cache (e.g., through TurboSession) service is created. Yet, analysts may wish to vary their accuracy targets per query, and in some cases may wish to specify privacy budgets rather than accuracies for a query.
Finally, we support only certain types of DP mechanisms and definitions in our prototype, specifically those relying on Laplace, whereas Tumult supports more.
Our approach to address these limitations without restricting analysts' interaction with Tumult is to consult the Turbo caches only when the queries exhibit properties we can handle, while resorting to Tumult-based execution when they do not.  As a result, an analyst interacting with a TurboSession will not be restricted in terms of their queries or accuracy demands compared to interacting with a vanilla Tumult session, but Turbo will only conserve privacy budget for those queries that it can handle.

The preceding two approaches for light-touch integration of Turbo into Tumult ensure that our system can be easily adopted.

\heading{Turbo API.}  
The Turbo API is the central component for integrating Turbo into real DP systems. Shown in \F\ref{fig:turbo-tumult-prototype}(b), it consists of two components: functionality that Turbo-lib implements and users invoke to take advantage of its caches (specifically, the {\small \tt run} function) and (2) several classes that users implement to provide Turbo with services it needs from the DP system with which it is integrated.  Turbo needs three types of services from the DP system. First, it needs the ability to extract certain information about the query, such as: the type of aggregation and filter chain; a unique ID for the dataset (or partition or view over the dataset or partition) on which the query is run, as Turbo's state is tied to a dataset/partition/view; and the number of records in that dataset (recall that our design assumes that the dataset size is public information).  This information is supplied by implementing the {\small \tt TurboQuery} interface, which wraps the original, DP-system-specific query structure into one that supplies the necessary information.  For example, our turbo-tumult library wraps query expressions into a TurboQuery with this enhanced functionality.

Second, Turbo is {\em not} a query engine, so it needs the ability to execute a query through the original DP system.  This is provided by implementing the QueryExecutor interface.  A peculiarity of Turbo in this context, which was easy to implement in Tumult but which we anticipate may be non-trivial to implement in other DP systems, is that Turbo needs not only the ability to execute the query in a DP way, but also the ability to execute it without DP.  Recall that Turbo's SV checks compares the histogram-based result to the {\em true result} of invoking the same query on the data without DP.  Turbo thus needs access to this true result, a piece of functionality that typically DP databases do not offer publicly, for good, safety-related reasons.  Still, in Tumult, due to its highly modular structure, we find that this functionality can be implemented without having to modify its code base.  Specifically, turbo-tumult implements {\small \tt QueryExecutor.executeNPQuery(.)} by defining a special type of measurement that does not, in fact, incorporate randomness into its aggregate and which reports as zero the privacy budget being used.  This measurement is executed against the Tumult Core and returns the true result of the query.  In turbo-lib, we take care to only leverage this sensitive result internally during the SV check in a DP way.  Moreover, to optimize query execution in the case that the SV fails, turbo-tumult implements {\small \tt QueryExecutor.executeDPQuery(.)} with the optional ability to reuse a non-private, true result previously obtained for the SV check.  This is achieved by implementing another type of measurement, which, when executed by Tumult Core, will only apply the randomness operation to the given {\small \tt true\_result} and report the appropriate amount of privacy budget to be deducted by the Tumult Core's accountant.

Third, Turbo needs the ability to deduct the privacy budget consumed by the SV reset.  This is supplied by implementing the Turbo PrivacyAccountant interface, with one function: {\small \tt consume}.  In turbo-tumult, we implement this interface by defining a third type of measurement, which does not perform any computation but just consumes privacy.  We believe that DP systems should export this kind of functionality to more naturally support extensions.

\heading{Turbo-lib.}
Turbo-lib implements the Turbo design described in this paper, with some notable restrictions.  First, we do not yet support partitioning in the turbo-lib implementation, though that support exists in our SOSP artifact release, as used in our evaluation.  Second, our implementation only supports count queries presently, although our histograms and exact-match caches can be extended to support other types of linear aggregations, such as sums, averages, standard deviation.
Third, we use Redis to store all state in Turbo, including the exact-match caches, PMW histograms, and SV state.
Redis can be replaced with a persistent, consistent and durable storage service for production use.

\heading{Turbo-sql.}
In addition to incorporating Turbo into the Tumult Analytics engine, we are also creating a basic, standalone, SQL DP database ourselves, which only supports the types of queries that Turbo supports, but which can support streaming and partitioning.  At the time of this writing, the most mature version of this library can be found in our SOSP artifact release, but we are working on a more modular version of this library that presently lacks support for streaming and partitioning.  The full-featured version of this library, which is what we use in our evaluation, receives simple linear SQL queries as strings, parses them, implements the Turbo API to first check for answers to them in the Turbo cache, and execute the queries -- DP through Laplace or non-DP (as needed by Turbo) -- using TimescaleDB, a streaming version of PostgreSQL.

\section{Evaluation}
\label{sec:evaluation}

We evaluate \sysname using the SOSP artifact version of our own, dedicated DP SQL database with Turbo support incorporated in it.
We use two public timeseries datasets -- \covid19 and \citibike\ -- to evaluate \sysname in the three use cases from \S\ref{sec:use-cases}.
Each use case lets us do {\em system-wide evaluation}, answering the critical question: {\em Does \sysname significantly improve privacy budget consumption compared to reasonable baselines for each use case?}
This corresponds to evaluating our \S\ref{sec:goals} design goals {\bf (G5)} and {\bf (G6)}.
In addition, each setting lets us evaluate a different set of caching objects and mechanisms:

\heading{(1) \Nonpartitioneddb:} We configure \sysname with a single \pmwbypass and \exactcache, letting us evaluate the \pmwbypass object, including its \empiricalconvergence and the impact of its heuristic and learning rate parameters.

\heading{(2) \Partitionedstaticdb:} We partition the datasets by time (one partition per week) and configure \sysname with the tree-structured \pmwbypass and \exactcache.  This lets us evaluate the tree-structured cache. 

\heading{(3) \Partitionedstreamingdb:} We configure \sysname with the tree-structured \pmwbypass, \exactcache, and histogram warm-up, letting us evaluate warm-up.

\begin{figure*}[t]
	\subfigure{
		\includegraphics[width=0.5\textwidth]{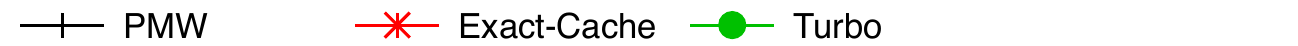}
	}
	\setcounter{subfigure}{0}
	\\ \vspace{-0.7cm}
	\subfigure[\sysname on \covid19 $\kzipf=0$]{
		\includegraphics[width=0.23\textwidth]{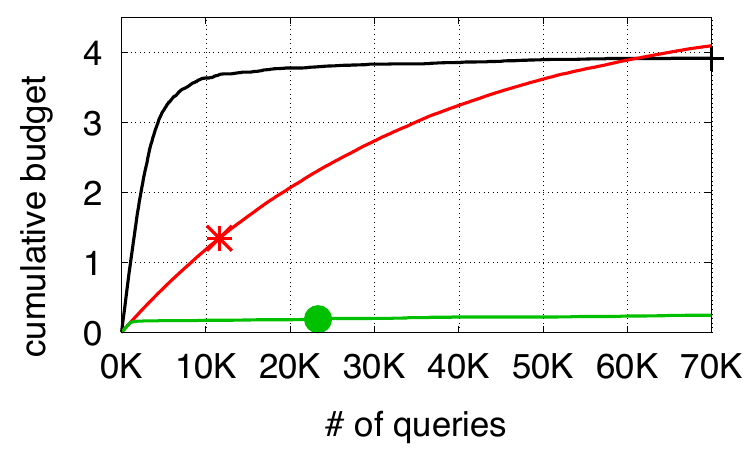}
		\label{fig:monoblock_covid_system_eval_zipf0}
	}
	\hfill
	\subfigure[\sysname on \covid19 $\kzipf=1$]{
		\includegraphics[width=0.23\textwidth]{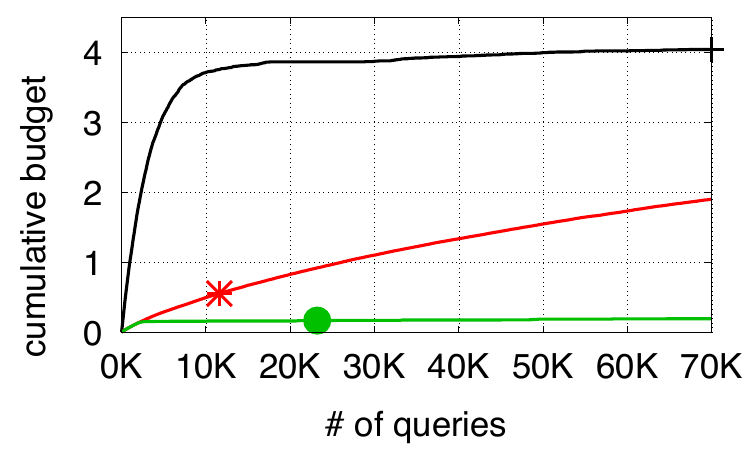}
		\label{fig:monoblock_covid_system_eval_zipf1}
	}
	\hfill
	\subfigure[\sysname on \citibike $\kzipf=0$]{
		\includegraphics[width=0.23\linewidth]{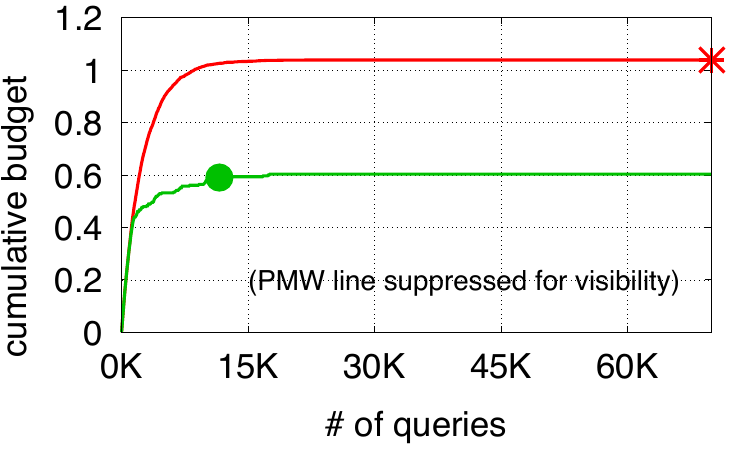}
		\label{fig:monoblock_citibike_system_eval_zipf0}
	}
	\subfigure[\Empiricalconvergence]{
		\includegraphics[width=0.23\linewidth]{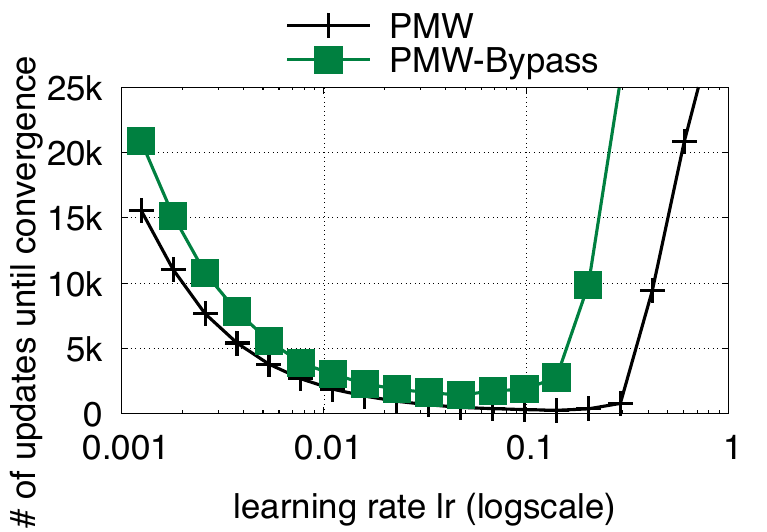}
		\label{fig:monoblock_covid_empirical_convergence}
	}
	\caption{{\bf \Nonpartitioneddb: (a-c) system-wide evaluation (Question 1); (d) \empiricalconvergence for \pmwbypass vs. \pmw (Question 2).} (a-c) \sysname, instantiated with one \pmwbypass and \exactcache, significantly improves budget consumption compared to both baselines. (d) Uses \covid19 $\kzipf=1$. \pmwbypass has similar \empiricalconvergence to \pmw, and both converge faster with much larger $\lr$ than anticipated by \theoreticalconvergence.
	}
	\label{fig:monoblock_eval}
\end{figure*}

\vspace{0.2cm}
As highlighting, our results show that \pmwbypass unleashes the power of \pmw, enhancing privacy budget consumption for linear queries well beyond the conventional approach of using an exact-match cache (goal {\bf (G5)}).
Moreover, \sysname as a whole seamlessly applies to multiple settings, with its novel \pmwbypasstree structure scoring significant benefit for timeseries workloads where database can be partitioned to leverage parallel composition (goal {\bf (G6)}).
Configuration of our objects and mechanisms is straightforward (goal {\bf (G7)}), and we tune them based on \empiricalconvergence rather than theoretical convergence, boosting their practical effectiveness (goal {\bf (G4)}).
Finally, we provide a basic runtime and memory evaluation, which shows that while \sysname performs reasonably for our datasets, further research is needed for larger-domain data.

\subsection{Methodology}
\label{sec:evaluation:methodology}

For each dataset, we create query workloads by (1) generating a pool of linear queries and (2) sampling queries from this pool based on a Zipfian distribution.  \covid19 uses a completely synthetic query pool. \citibike uses a pool based on real-user queries from prior \citibike analyses. We use the former as a microbenchmark, the latter as a macrobenchmark.

\heading{\covid19.}
{\em Dataset:} We take a California dataset of Covid-19 tests from 2020 that provides daily {\em aggregate information} of the number of Covid tests and their positivity rates for various demographic groups defined by age $\times$ gender $\times$ ethnicity. We combine this data with US Census data to generate a synthetic dataset that contains $n=50,426,600$ per-person test records, each with the date and four attributes: positivity, age, gender, and ethnicity. These attributes have domain sizes of 2, 4, 2 and 8, respectively, so the dataset domain size is $N=128$. The dataset spans $50$ weeks, so in partitioned use cases we have up to 50 partitions.
	{\em Query pool:} We create a synthetic and rich pool of correlated queries comprising all possible count queries that can be posed on \covid19.  This gives $34,425$ unique queries, plenty for us to microbenchmark \sysname.

\heading{\citibike.} {\em Dataset:} We take a dataset of NYC bike rentals from 2018-2019, which includes information about individual rides, such as start/end date, start/end geo-location, and renter's gender and age.
The original data is too granular with 4,000 geo-locations and 100 ages, making it impractical for PMWs.
Since all the real-user analyses we found consider the data at coarser granularity (\eg broader locations and age brackets), we group geo-locations into ten neighborhoods and ages into four brackets.
This yields a dataset with $n=21,096,261$ records, domain size $N=604,800$, and spanning 50 weeks.
	{\em Query pool:} We collect a set of pre-existing \citibike analyses created by various individuals and made available on Public Tableau~\cite{tableau}.  An example is here~\cite{tableau-citibike}.  We extract 30 distinct queries, most containing `GROUP BY' statements that we decompose into multiple {\em primitive queries} that can interact with \sysname histograms. This gives us a pool of $2,485$ queries, which is smaller than \covid19's but more realistic and suitable as a macrobenchmark.

\heading{Workload generation.}
As is customary in caching literature~\cite{ycsb,twitter-cache,atikoglu2012workload}, we use a Zipfian distribution to control the skewness of query distribution, which affects hit rates in the exact-match cache.
From a pool of $Q$ queries, a query of type $x \in [1,Q]$ is sampled with probability $\propto x^{-\kzipf}$, where $\kzipf\ge0$ is the parameter that controls skewness.
We evaluate with several $\kzipf$ values but report only results for $\kzipf=0$ (uniform) and $\kzipf=1$ (skewed) for \covid19. For \citibike, we evaluate only for $\kzipf=0$ to avoid reducing the small query pool further with skewed sampling.
For streaming, queries arrive online with arrival times following a Poisson process; they request a window of certain size over recent timestamps.

\heading{Metrics.}
$\bullet$ {\em Average cumulative budget}: the average budget consumed across all partitions.
$\bullet$ {\em Systems metrics:} traditional runtime, process RAM.
$\bullet$ {\em \Empiricalconvergence}: We periodically evaluate the quality of \sysname's histogram by running a validation workload sampled from the same query pool. We measure the accuracy of the histogram as the fraction of queries that are answered with error $\ge \alpha/2$ by the histogram. We define {\em \empiricalconvergence} as the number of histogram updates necessary to reach 90\% validation accuracy.

\heading{Default parameters.}
Unless stated otherwise, we use the following parameter values:
privacy $(\epsilon_G=10, \delta_G=0)$; accuracy $(\alpha=0.05, \beta=0.001)$;
for \covid19: \{learning rate $lr$ starts from $0.25$ and decays to $0.025$, heuristic $(C_0=100,S_0=5)$, external updates $\tau=0.05$\};
for \citibike: \{learning rate $lr=0.5$, heuristic $(C_0=5,S_0=1)$, external updates $\tau=0.01$\}.
\begin{figure}[t]
	\centering
	\subfigure[Impact of heuristic $C_0$ ($S_0=1$)]{
		\includegraphics[width=0.47\linewidth]{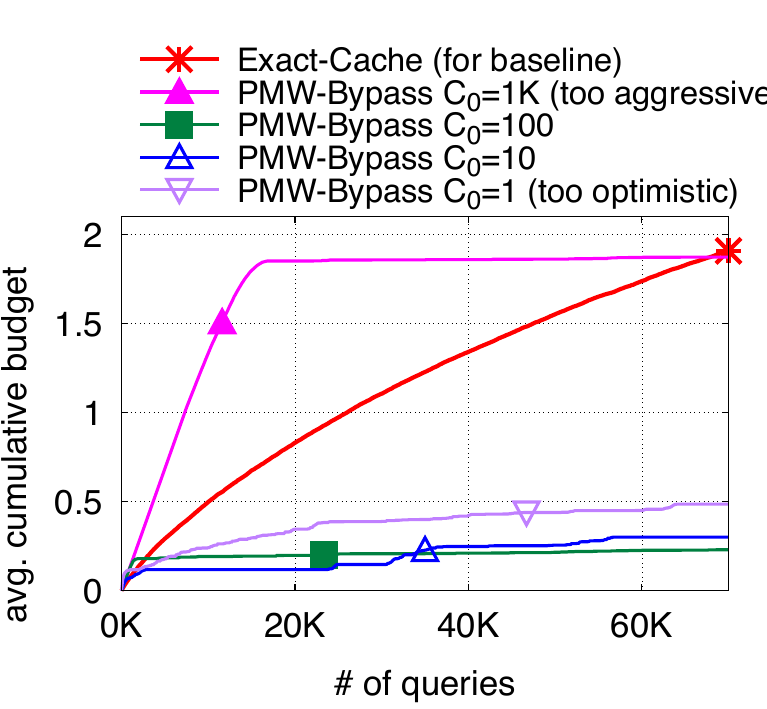}
		\label{fig:monoblock_covid_heuristics}
	}
	\subfigure[Impact of learning rate $\lr$]{
		\includegraphics[width=0.47\linewidth]{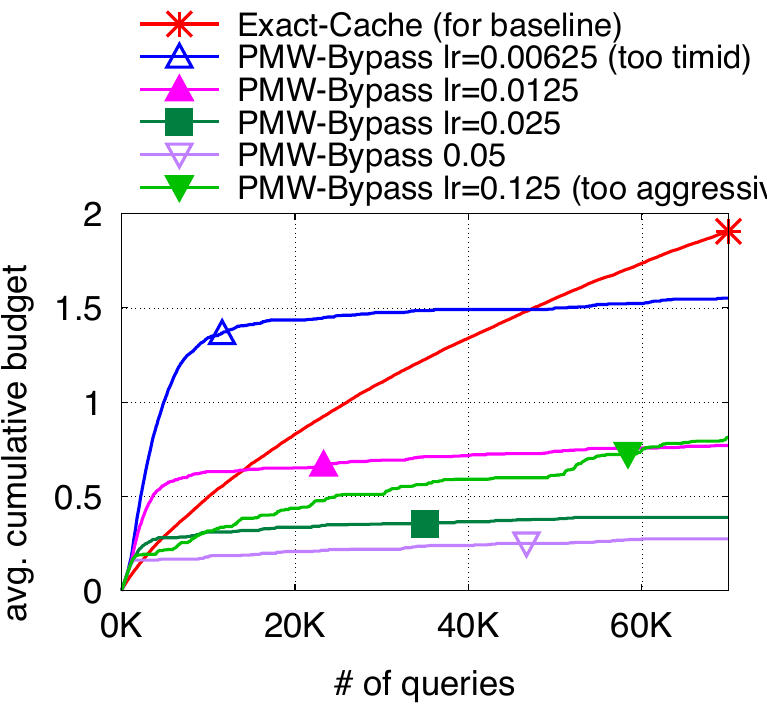}
		\label{fig:monoblock_covid_learning_rates}
	}
	\caption{{\bf Impact of parameters (Question 3).} Uses \covid19 $\kzipf=1$.
		Being too optimistic or pessimistic about the histogram's state (a), or too aggressive or timid in learning from each update (b), give poor performance. 
	} \label{fig:pmwbypass_params_eval}
\end{figure}

\begin{figure*}[tp]
	\begin{minipage}{.70\linewidth}
		\centering
		\subfigure{
			\includegraphics[width=0.7\textwidth]{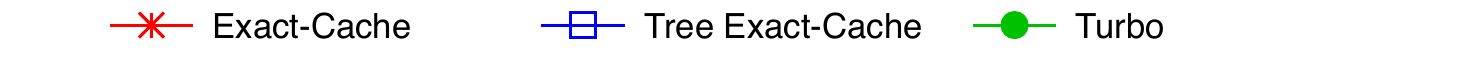}
		}\vspace{-0.5cm}
		\setcounter{subfigure}{0}
		\subfigure[\sysname on \covid19 $\kzipf=0$]{
			\includegraphics[width=0.31\textwidth]{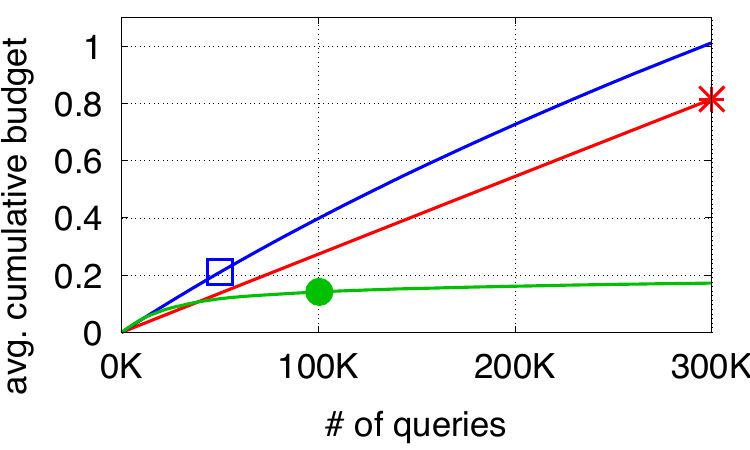}
			\label{fig:multiblock_static_covid_cumulative_budget_consumption_zipf0}
		}
		\subfigure[\sysname on \covid19 $\kzipf=1$]{
			\includegraphics[width=0.31\textwidth]{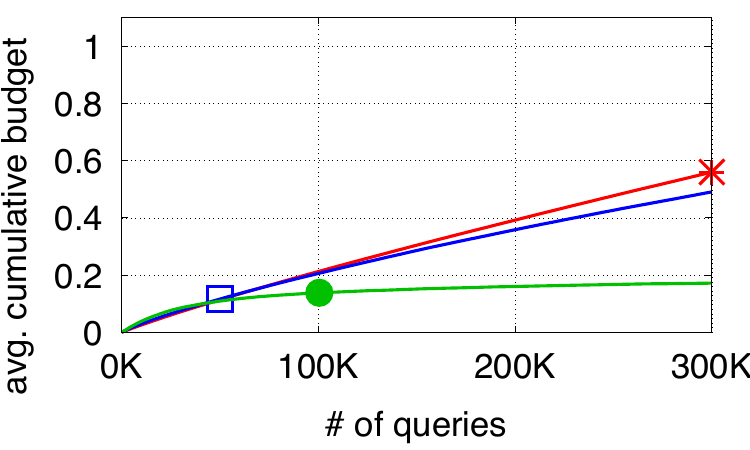}
			\label{fig:multiblock_static_covid_cumulative_budget_consumption_zipf1}
		}
		\subfigure[\sysname on \citibike $\kzipf=0$]{
			\includegraphics[width=0.31\textwidth]{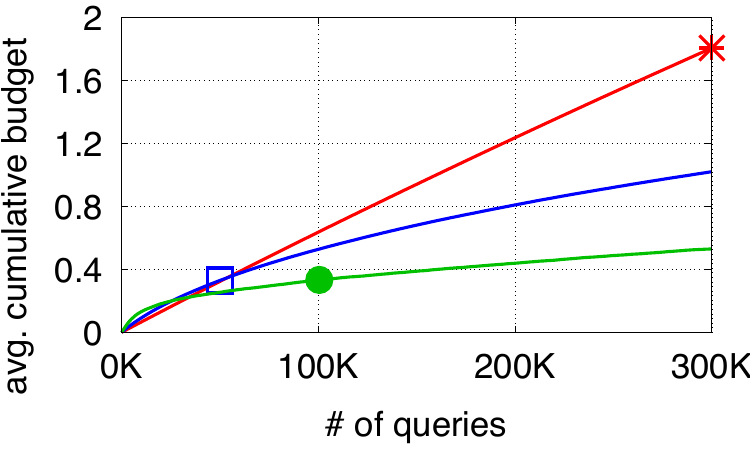}
			\label{fig:multiblock_static_citibike_cumulative_budget_consumption_zipf0}
		}
	\end{minipage}
	\hfill
	\begin{minipage}{.27\linewidth}
		\vspace{-0.5cm}
		\caption{{\bf \Partitionedstaticdb: system-wide evaluation (Question 5).}
			\sysname is instantiated with tree-structured \pmwbypass and \exactcache.
			\sysname significantly improves budget consumption compared to both a single \exactcache and a tree-structured set of Exact-Caches.
		}
		\label{fig:multiblock_static_eval}
	\end{minipage}
\end{figure*}

\subsection{Use Case (1): Non-partitioned Database}
\label{sec:evaluation:non-partitioned-database}

\heading{System-wide evaluation.} {\em Question 1: In a non-partitioned database, does \sysname significantly improve privacy budget consumption compared to vanilla PMW and a simple \exactcache?}
\F\ref{fig:monoblock_covid_system_eval_zipf0}-\ref{fig:monoblock_citibike_system_eval_zipf0} show the cumulative privacy budget used by three workloads as they progress to $70K$ queries. Two workloads correspond to \covid19, one uniform ($\kzipf = 0$) and one skewed ($\kzipf = 1$), and one uniform workload for \citibike.
\sysname surpasses both baselines across all three workloads. The improvement is enormous when compared to vanilla \pmw: $15.9-37.4\times$! \pmw's convergence is rapid but consumes lots of privacy; \sysname uses little privacy during training and then executes queries for free.
Compared to just an \exactcache, the improvement is less dramatic but still significant.
The greatest improvement over \exactcache is seen in the uniform \covid19 workload: $16.7\times$ (\F\ref{fig:monoblock_covid_system_eval_zipf0}). Here, queries are relatively unique, resulting in low hit rate for the \exactcache. That hit rate is higher for the skewed workload (\F\ref{fig:monoblock_covid_system_eval_zipf1}), leaving less room for improvement for \sysname: $9.7\times$ better than \exactcache.
For \citibike (\F\ref{fig:monoblock_citibike_system_eval_zipf0}), the query pool is much smaller ($<2.5K$ queries), resulting in many exact repetitions in a large workload, even if uniform.
Nevertheless, \sysname gives a $1.7\times$ improvement over \exactcache.
And in this workload, \sysname outperforms \pmw by $37.4\times$ (omitted from figure for visualization reasons).
Overall, then, \sysname significantly reduces privacy budget consumption in non-partitioned databases, achieving \nonpartitioneddbimprovement improvement over the best baseline for each workload (goal {\bf (G5)}).

\heading{\pmwbypass evaluation.}
Using \covid19 $\kzipf=1$, we microbenchmark \pmwbypass to understand the behavior of this key \sysname component.
	{\em Question 2: Does \pmwbypass converge similarly to \pmw in practice?}
Through theoretical analysis, we have shown that \pmwbypass achieves similar worst-case convergence to PMW, albeit at slower speed (\S\ref{sec:pmw-bypass}).
\F\ref{fig:monoblock_covid_empirical_convergence} compares the {\em \empiricalconvergence} (defined in \S\ref{sec:evaluation:methodology}) of \pmwbypass vs. \pmw, as a function of the learning rate $lr$.
We make three observations, two of which agree with theory, and the last differs.
First, the results confirm the theory that (1) \pmwbypass and \pmw converge similarly, but (2) for ``good'' values of $lr$, vanilla \pmw converges slightly faster:
e.g., for $lr = 0.025$, \pmwbypass converges after 1853 updates, while PMW after 944. 
Second, as theory suggests, very large values of lr (e.g., $lr \ge 0.4$) impede convergence in practice.
Third, although theoretically, $lr = \alpha/8 = 0.00625$ is optimal for worst-case convergence, and it is commonly hard-coded in PMW protocols~\cite{complexity}, we find that empirically, larger values of $lr$ (e.g., $lr = 0.05$, which is $8\times$ larger) give much faster convergence.
This is true for both \pmw and \pmwbypass, and across all our workloads.
This justifies the need to adapt and tune mechanisms based on not only theoretical but also empirical behavior (goal {\bf (G4)}).

	{\em Question 3: How do \pmwbypass heuristic, learning rate, and external update parameters impact consumed budget?}
We experimented with all parameters 
and found that the two most impactful are (a) $C_0$, the initial threshold for the number of updates each bin involved in a query must have received to use the histogram, and
(b) the learning rate. \F\ref{fig:pmwbypass_params_eval} shows their effects.
	{\em Heuristic $C_0$ (\F\ref{fig:monoblock_covid_heuristics}):} Higher $C_0$ results in a more pessimistic assessment of histogram readiness. If it's too pessimistic ($C_0=1K$), PMW is never used, so we follow a direct Laplace.
If it's too optimistic ($C_0=1$), errors occur too often, and the histogram's training overpays.
$C_0 = 100$ is a good value for this workload.
	{\em Learning rate $\lr$ (\F\ref{fig:monoblock_covid_learning_rates}):} Higher $lr$ leads to more aggressive learning from each update.
Both too aggressive ($lr = 0.125$) and too timid ($lr = 0.00625$) learning slow down convergence. Good values hover around $lr = 0.025$.
Overall, only a few parameters affect performance, and even for those,
performance is relatively stable around good values, making them easy to tune (goal {\bf (G7)}).

{\em Question 4: How does \sysname's adaptive, per-bin heuristic compare to alternatives?}
We experimented with three alternative {$\textproc{IsHistogramReady}$} designs that forgo either (1) the per-bin granular thresholds, or (2) the adaptivity property, or (3) both. We make two observations.
First, the {\em coarse-grained heuristics} consume more privacy budget than the fine-grained heuristics, especially on more skewed workloads, such as $\kzipf=1.5$, which have less diversity so they tend to train histogram bins less uniformly.
For example, a coarse-grained heuristic that uses a histogram-level count of the number of updates, with a threshold $C_0$ to determine when the histogram is ready to receive {\em any} query, consumes at best $0.7$ global privacy budget on a \covid19 workload with $\kzipf=1.5$; this is achieved when $C_0$ is optimally configured to a value of 2070 updates.
In contrast, a fine-grained heuristic, which uses a per-bin update count with a threshold $C_0$ for each bin, consumes at best $0.44$ global privacy budget, achieved when $C_0$ is set to 160 updates.
Second, the {\em adaptive heuristics} consume similar budget as the optimally-configured, non-adaptive ones, but the former are much easier to configure, as they offer stable performance around wide ranges of the $C_0$ parameter.
For example, when $C_0$ varies in range $[20,200]$, the non-adaptive per-bin heuristic's budget consumption varies in range $[0.44,0.81]$ for the $\kzipf=1.5$ workload, and in range $[0.31,0.76]$ for $\kzipf=1$ workload.
In contrast, \sysname's adaptive, per-bin heuristic's budget consumption varies in much tighter ranges under the same circumstances: $[0.44,0.52]$ and $[0.28,0.48]$ for the $\kzipf=1.5$ and $\kzipf=1$ workload, respectively.
Thus, \sysname's heuristic is the best of these options.

\subsection{Use Case (2): Partitioned Static Database}
\label{sec:evaluation:partitioned-static-database}

\heading{System-wide evaluation.}  {\em Question 5: In a partitioned static database, does \sysname significantly improve privacy budget consumption, compared to a single \exactcache and a tree-structured set of Exact-Caches?}
We divide each database into 50 partitions and select a random contiguous window of 1 to 50 partitions for each query.
We adjust the $(C_0, S_0)$ heuristic parameters to $(50, 1)$ for Covid and $(1,1)$ for \citibike.
\F\ref{fig:multiblock_static_covid_cumulative_budget_consumption_zipf0}-\ref{fig:multiblock_static_citibike_cumulative_budget_consumption_zipf0} show the average budget consumed per partition up to 300K queries. Compared to the static case, \sysname can now support more queries under $\epsilon_G = 10$ thanks to parallel composition: each query only consumes privacy from the accessed partitions.
\sysname further divides privacy budget consumption by \partitionedstaticdbimprovement compared to the best-performing baseline for each workload, demonstrating its effectiveness as a caching strategy for the static partitioned use case.

\begin{figure*}[tp]
	\centering
	\begin{minipage}{.745\linewidth}
		\subfigure{
			\includegraphics[width=\textwidth]{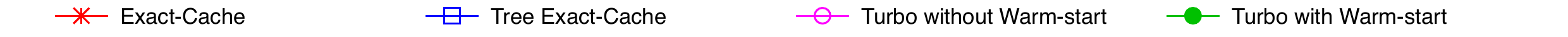}
		}\vspace{-0.4cm}
		\setcounter{subfigure}{0}
		\subfigure[\sysname on \covid19 $\kzipf=0$]{
			\includegraphics[width=0.31\textwidth]{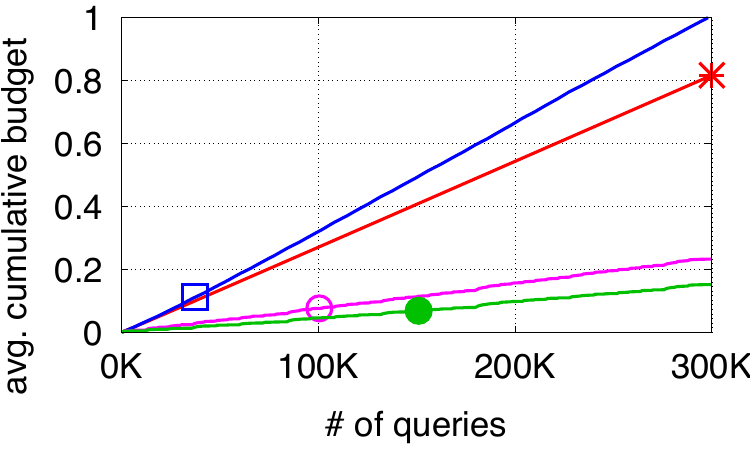}
			\label{fig:multiblock_streaming_covid_cumulative_budget_consumption_zipf0}
		}
		\subfigure[\sysname on \covid19 $\kzipf=1$]{
			\includegraphics[width=0.31\textwidth]{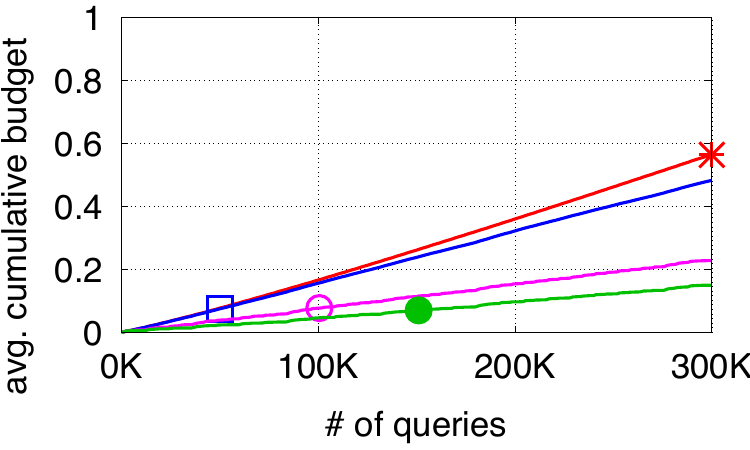}
			\label{fig:multiblock_streaming_covid_cumulative_budget_consumption_zipf1}
		}
		\subfigure[\sysname on \citibike $\kzipf=0$]{
			\includegraphics[width=0.31\textwidth]{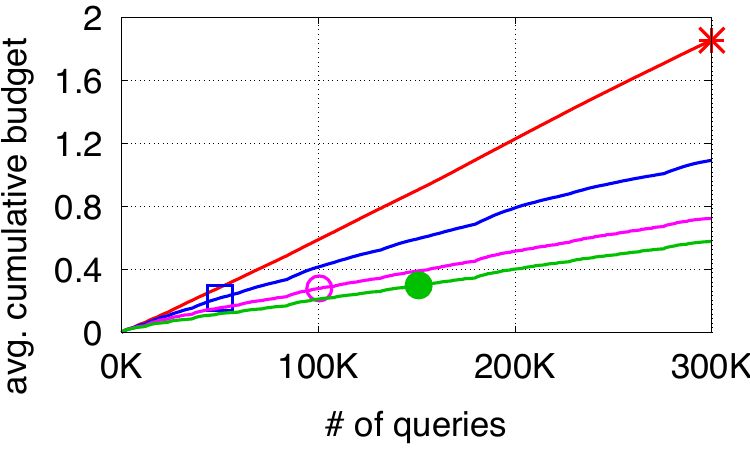}
			\label{fig:multiblock_streaming_citibike_cumulative_budget_consumption_zipf0}
		}
	\end{minipage}
	\hfill
	\begin{minipage}{.235\linewidth}
		\subfigure[Runtime evaluation]{
			\includegraphics[width=\textwidth]{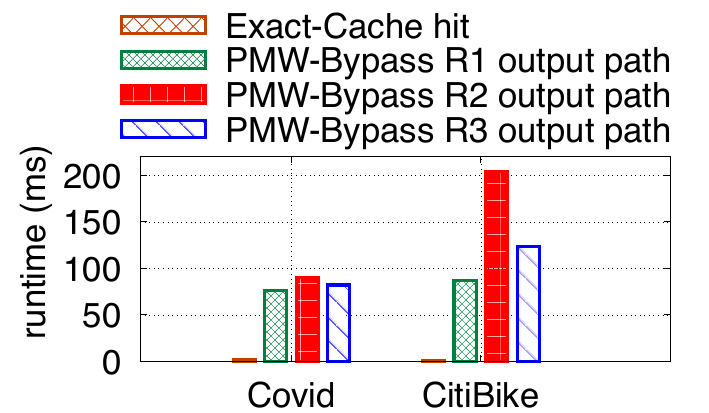}
			\label{fig:runtime_eval}
		}
	\end{minipage}
	\caption{{\bf (a-c) \Partitionedstreamingdb: system-wide consumed budget (Question 7); (d) \pmwbypass runtime in non-partitioned setting (Question 8).}
		(a-c) \sysname is instantiated with tree-structured \pmwbypass and \exactcache, with and without warm-start.  
		(d) Uses \covid19 $\kzipf=1$ and one \exactcache and \pmwbypass. Shows execution runtime for different execution paths. Most expensive is when the SV test fails.  
	}
	\label{fig:multiblock_streaming_system_eval}
\end{figure*}

\heading{\bf Tree structure evaluation.}
{\em Question 6: When does the tree structure for histograms outperform a flat structure that maintains one histogram per partition?}
We vary the average size of the windows requested by queries from $1$ to $50$ partitions based on a Gaussian distribution with std-dev $5$.
We find the tree structure for histograms is beneficial when queries tend to request more partitions ($25$ partitions or more). Because the tree structure maintains more histograms than the flat structure, it fragments the query workload more, resulting in fewer histogram updates per histogram and more use of direct-Laplace. The tree's advantage in combining fewer results makes up for this privacy overhead caused by histogram maintenance when queries tend to request larger windows of partitions, while the linear structure is more justified when queries tend to request smaller windows of partitions.

\subsection{Use Case (3): Partitioned Streaming Database}
\label{sec:evaluation:partitioned-streaming-database}

\heading{System-wide evaluation.}  {\em Question 7: In streaming databases partitioned by time, does \sysname significantly improve privacy budget consumption compared to baselines?  Does warm-start help?}
\F\ref{fig:multiblock_streaming_covid_cumulative_budget_consumption_zipf0}-\ref{fig:multiblock_streaming_citibike_cumulative_budget_consumption_zipf0} show \sysname's budget consumption compared to the baselines.
The experiments simulate a streaming database, where partitions arrive over time and queries request the latest $P$ partitions, with $P$ chosen uniformly at random between $1$ and the number of available partitions.
\sysname outperforms both baselines significantly for all workloads, particularly when warm-start is enabled. Without warm-start, \sysname improves performance by $1.5-3.5\times$ at the end of the workload.
With warm-start, \sysname gives \partitionedstreamingdbimprovement improvement over the best baseline for each workload, showing its effectiveness for the streaming use case.
When there is a large variety of unique queries the \exactcachetree has a significantly better hit-rate than the \exactcache baseline and performs better (\F\ref{fig:multiblock_streaming_covid_cumulative_budget_consumption_zipf0}).
In \F\ref{fig:multiblock_streaming_covid_cumulative_budget_consumption_zipf1} and \ref{fig:multiblock_streaming_citibike_cumulative_budget_consumption_zipf0} the query pool is considerably smaller.
Both baselines have a good enough hit-rate while the \exactcachetree  needs to consume more privacy budget to compensate for the aggregation error which makes it perform worse.
This concludes our evaluation across use cases ({goal \bf (G6)}).

\subsection{Runtime and Memory Evaluation}
\label{sec:evaluation:runtime-memory}

{\em Question 8: What are \sysname's runtime and memory bottlenecks?}
We evaluate \sysname's runtime and memory consumption to identify areas of improvement. 
\F\ref{fig:runtime_eval} shows the average runtime of \sysname's main execution paths in a non-partitioned database.  The \exactcache hit path is the cheapest and the other paths are more expensive. 
Histogram operations are the bottlenecks in \citibike due to the larger domain size ($N$), while query execution in TimescaleDB is the bottleneck in \covid19 due to the larger database size ($n$). 
The $R1$ path is similar across the two datasets because their distinct bottlenecks compensate. 
Failing the SV check (output path $R2$) is the costliest path for both datasets due to the extra operations needed to update the heuristic's per-bin thresholds. 
We also conduct an experiment in the partitioned streaming case and find the same bottlenecks: TimescaleDB for \covid19, histogram operations for \citibike. 
Finally, we report \sysname's memory consumption in the streaming case with 50 partitions: $5.21$MB for \covid19 and $1.43$GB for \citibike.  For context, the raw datasets occupy on disk 600MB and 795MB, respectively.  Thus, \sysname's memory overhead is significant and it is caused by the PMWs.  The next section discusses this limitation and proposes potential directions to address it.

\section{Discussion}
\label{sec:limitations}

We discuss several of \sysname's strengths and weaknesses.
\sysname provides benefits when queries overlap in the data they access, i.e., new queries access histogram bins that have been accessed by past queries.  The functions computed atop these bins can differ among queries (e.g., the new query can compute an average while all the past ones computed count fractions).  If there is no data overlap in the queries, then \sysname does not give any benefit and comes with memory/computational costs.  This is typical for caching systems: they only help if the workload has some level of locality.

A key strength in \sysname is its support for dynamic workloads, both new queries and new data arriving in the system.
First, \sysname adapts seamlessly to changing queries. In the worst case, the new queries will access completely ``untrained'' regions within a histogram. Our heuristic will detect this and trigger a new cycle of external updates. In more moderate cases, the workload will touch a mix of ``trained'' and ``untrained'' regions. This will yield a mix of hits and misses in the heuristic, and Turbo will use just the right amount of privacy budget to adapt to these slower workload changes.
Second, thanks to histogram warm-start, \sysname adapts to new data partitions arriving into the system with minimal privacy budget consumption: as new partitions arrive, their histograms are initialized from past ones and then fine-tuned for the new data by a few external updates.  This way, the new histograms will quickly start serving query answers for free, conserving privacy budget.
Still, there is a limitation: while we support new data arriving in the system, we do not support updates on past data; such updates would result in our heuristics predicting less accurately when the histogram can answer a query, and thus in more expensive SV failures.  

By far, \sysname's biggest limitation is the memory consumed to maintain the PMW histograms.
Each histogram is a RedisAI vector whose size grows with data domain size $N$, i.e., exponentially in data domain dimension $d$ ($N$ and $d$ are defined in Section~\ref{sec:notation}).
With $T$ partitions and $k$ queries, \sysname maintains a binary tree of such histograms, which means it stores $\approx 2 T N$ scalar values.  By comparison, the Tree Exact-Cache baseline stores at most $\log(T) k$ scalar values, a much lower memory consumption.
This impacts not only the scale of the datasets that can be handled with \sysname, but also the runtime performance of \sysname-mediated queries.
Indeed, as shown in the preceding section, histogram operations for \citibike are the bottleneck in runtime due to the relatively high domain size.
Some techniques have previously been proposed to address this rather fundamental challenge for PMW~\cite{mwem}.
However, for even larger-scale deployments, we believe that it will be worth considering PMW alternatives that may not offer as compelling convergence guarantees as PMW but which are much more lightweight.
One example may be the relaxed adaptive projection (RAP)~\cite{relaxed_adaptive_projection},
which builds a lightweight representation of the dataset by learning a small subset of representative data points using gradient-descent.
One would have to be willing to forfeit the theoretical convergence guarantees to use this mechanism, and to develop an adaptive version of RAP to support realistic systems settings involving dynamic workloads and data.
Even so, some of the core concepts we have proposed in this paper may transfer to this new design, including passing RAP-based estimations through an SV to ensure result accuracy while incorporating a heuristic-based bypass to avoid expensive failures in the SV.

We also touch on several potential vulnerabilities.
First, an adversary may craft queries that consume budget by generating cache misses.
The convergence proofs in \S\ref{sec:appendix:bounded_budget} provide a bound on how much such queries can affect budget consumption when a straightforward cutoff parameter is configured upfront.
Second, response time can be a side-channel, which we leave out of scope but should be addressed in the future.
Third, $n$, the number of elements in the database (or in each partition), is considered public knowledge.
This can leak information and should be addressed by consuming some of the budget to compute $n$ privately, as done in~\cite{sage}.

Regarding integration of Turbo with a real system, Tumult, we find that it can be done with ease, thanks to Tumult Core's extensible measurement API.  We anticipate that such integration will not be as easy or ``light touch'' in other DP systems we have seen, and in general we see a gap in the core primitives that DP systems (SQL or not) should implement to support extensions such as Turbo; these might include providing direct access to the privacy accountant, decoupling the accountant from the query executor, and others. We encourage the community to work to articulate this set of key primitives, which we suspect will be useful in other extensions beyond Turbo.

\section{Related Work}
\label{sec:related-work}

This paper presents the first design, implementation, and evaluation for a {\em general, effective, and accurate DP-caching system for interactive DP-SQL systems}.
In computer systems, caching is a heavily-explored topic, with numerous algorithms and implementations~\cite{lhd, twitter_caching, web_cache}, some pervasively used in processors, operating systems, databases, and more.  However, traditional forms of caching differ significantly from DP caching, justifying the need for a specialized approach for DP.  The primary purposes of traditional caching are to conserve CPU and to improve throughput and latency; for these purposes, existing caches can be readily reused in DP systems.  However, DP caching aims to conserve privacy budget, which requires a new design to be truly effective.  For example, layering Redis on a DP database to cache query results would save CPU, but for privacy it would be equivalent to the ``Exact-Cache'' baseline that our evaluation shows is less effective than \sysname.  This paper thus builds upon general traditional caching concepts -- such as the two-layer design, the principle of generality in supporting multiple workloads -- but develops a cache specialized in conserving DP budget.

To our knowledge, no existing DP system incorporates such a specialized caching system.
Most DP systems do not incorporate caching capabilities at all~\cite{pinq,flex,tumult,rogers2020linkedin,orchard,plume};~\cite{zetasql} explicitly leaves the design of an effective DP cache for future work.
Some DP systems incorporate what amounts to an Exact-Cache by deterministically generating the same noise upon the arrival of the same query.
Three systems consider more sophisticated mechanisms for DP result reuse: PrivateSQL~\cite{privatesql}, Chorus~\cite{chorus}, and CacheDP~\cite{cachedp}. But the result reuse components in these systems suffer from such significant limitations that they cannot be considered general and effective caching designs.
    {\bf PrivateSQL}~\cite{privatesql} takes a batch of ``representative'' offline queries and precomputes a private synopsis that answers them all. If new queries arrive (online), PrivateSQL uses the synopsis to answer them in a best-effort way, without accuracy guarantees. It does not learn on-the-fly from them, so it is unsuited for online workloads and does not support data streams.
    {\bf Chorus}~\cite{chorus} provides a trivialized implementation of MWEM, a variant of PMW, however the implementation only works for databases with a {\em single attribute}. 
The paper does not evaluate the MWEM-based implementation, nor integrates it as a caching layer.
    {\bf CacheDP}~\cite{cachedp} is an interactive DP query engine and has a built-in DP cache that answers queries using the Matrix Mechanism~\cite{matrix_mechanism}. Our experience with the CacheDP code suggests that it is not a general, effective, or accurate caching layer for DP databases.
First, CacheDP's implementation only scales to a few attributes and does not support parallel composition on data partitions; this suggests that it is not general enough to support a variety of workloads.
Second, the ``Tree Exact-Cache'' baseline with which we compare in evaluation matches, to our understanding, the CacheDP design while scaling to the higher-dimension datasets and streaming workloads we evaluate against.
Our evaluation shows Turbo more effective than Tree Exact-Cache.

While DP caching are under-explored in systems, the topic of optimizing global privacy budget for a query workload is heavily explored in theory.
Approaches include generating synthetic datasets or histograms that can answer certain classes of queries, such as linear queries, with accuracy guarantees and no further privacy consumption~\cite{pmw,smalldb,VietriTBSW20,relaxed_adaptive_projection,mwem,pmwpub}; and optimizing privacy consumption over a batch of queries by adapting the noise distribution to properties of the queries~\cite{matrix_mechanism, hdmm, dawa}.
Apart from PMW~\cite{pmw}, all these methods operate in the offline setting, where queries are known upfront.
This setting is unrealistic, as discussed in \S\ref{sec:use-cases}.

All of the theory works cited above, including PMW, suffer from another limitation: they operate on static datasets and do not support new data arriving into the system.
PMWG~\cite{pmwg} is an extension of PMW for dynamic ``growing'' databases, but operates in a setting where all queries request the {\em entire database}.
This precludes the use of parallel composition for queries that access less than the entire database, such as queries over windows of time.
Other algorithms focus on continuously releasing specific statistics over a stream, such as the streaming counter~\cite{chan2010} that inspired our tree structure, and extensions to top-k and histogram queries~\cite{streaming_histograms}.
These works do not support arbitrary linear queries, and they answer all predefined queries at every time step  while we only pay budget for queries that are actually posed by analysts.

\section{Conclusion}
\label{sec:conclusions}

\sysname is a caching layer for differentially-private databases that increases the number of linear queries that can be answered accurately with a fixed privacy guarantee. It employs a PMW, which learns a histogram representation of the dataset from prior query results and can answer future linear queries at no additional privacy cost once it has converged. To enhance the practical effectiveness of PMWs, we bypass them during the privacy-expensive training phase and only switch to them once they are ready. This transforms PMWs from ineffective to very effective compared to simpler cache designs.
Moreover, \sysname includes a tree-structured set of histograms that supports timeseries and streaming use cases, taking advantage of fine-grained privacy budget accounting and warm-starting opportunities to further increase the number of answered queries.


\section{Acknowledgments}
\label{sec:acks}

We thank our shepherd Andreas Haeberlen and the anonymous reviewers.
We thank Junfeng Yang for feedback throughout the project.
We thank Heeyun Kim, Hailey Onweller, Sally Wang, Yucheng Wu, Chris Yoon for indirect contributions to prototype and evaluation.
The work was supported by NSF EEC-2133516, CNS-2104292, Sloan, Microsoft, and Google faculty fellowships, and Google Cloud credits.  


\bibliographystyle{abbrv}
\bibliography{bib/abbrev,bib/conferences,bib/refs}

\clearpage \newpage

\setcounter{page}{1}
\appendix
\noindent {\em Note: This appendix has not been peer-reviewed.}
\section{Theorems and proofs}

\subsection{Notation}
\label{sec:appendix:notation}

In this section we introduce the following notation, in addition to the notation from Section \ref{sec:detailed-design}:
\begin{itemize}
    \item     For two distributions $p, h$ over $\mathcal{X}$, we note $D(p||h)$ their relative entropy:

          $$D(p||h) := \sum_x p(x)\ln (p(x)/h(x)) $$

    \item For a linear query $q$, $q(x)\in[0,1]$ is the result of $q$ on a datapoint $x \in \mathcal{X}$. For a histogram or distribution $h$ we note $q \cdot h$ the dot product:
          $$q \cdot h := \sum_{x \in \mathcal{X}} q(x)h(x)$$

          This is also the result of the query $q$ on a {\em normalized} database histogram $h$, so with a slight abuse of notation we alternatively write $q(h)$ for $q \cdot h$.
    \item Given a true distribution $p$, an estimate $h$ and a query $q$, we generally note $q^* := q \cdot p$ the true value of the query, $q(h)$ the estimate returned by the histogram, and $\tilde q$ a random variable denoting the answer to $q$ returned by a randomized algorithm such as PMW.
    \item We use $\eta$ as a shorthand for $\lr$ and $\eta_i$ for variable learning rates, where $i$ is the index of an update.
    \item For streaming databases, we use the standard definition of DP on streams \cite{streaming_counter, pmwg}: two streams of rows $\mathcal{X}^\mathbb{N}$ are adjacent if they differ exactly at one time (index) $t \in \mathbb{N}$.
          Note that since time is a public attribute we can group indices by timestamp.
    \item In \A\ref{alg:pmw-bypass}, we note $X$ the noise added to the threshold at Line~\ref{line:sv_reset}, $Y$ the noise added to the error at Line~\ref{line:R2}, $Z$ the noise potentially added to the true result in the PMW branch at Line~\ref{line:R2} and $Z'$ the noise added to the true result in the Bypass branch at Line~\ref{line:R3}.
          We have $X \sim Y \sim Z \sim Z' \sim Lap(b)$ with $b = 1/\epsilon n$.
\end{itemize}

\subsection{\pmwbypass}
\label{sec:appendix:pmwbypass}

\begin{theorem} \pmwbypass preserves $\epsilon_G$-DP, for a global privacy budget set upfront in the $\textproc{PrivacyAccountant}$.
    \label{thm:pmwbypass:privacy}
\end{theorem}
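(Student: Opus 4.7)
The plan is to decompose \pmwbypass into three kinds of operations and establish privacy compositionally, leaning on the privacy properties of the Laplace and Sparse Vector (SV) primitives and on post-processing.

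First, I would identify every line in \A\ref{alg:pmw-bypass} that touches the raw database. These are precisely: (i) the SV reset at line~\ref{line:sv_reset} together with the SV tests at line~\ref{line:sv_test}, which form a single Sparse Vector instance with threshold noise $X\sim\lap{1/\epsilon n}$ and per-test noise $Y\sim\lap{1/\epsilon n}$; (ii) the direct Laplace release $R2$ at line~\ref{line:R2} with noise $Z\sim\lap{1/\epsilon n}$; and (iii) the direct Laplace release $R3$ at line~\ref{line:R3} with noise $Z'\sim\lap{1/\epsilon n}$. Each query $q$ has sensitivity $1/n$ on the database (it is a bounded average over $n$ rows), so each Laplace release is $\epsilon$-DP, and each SV instance, covering all SV tests between a reset and the step that sets $SV$ to \emph{Consumed}, is $3\epsilon$-DP by the standard SV analysis~\cite{privacybook}.

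Second, I would argue that every other operation is post-processing on the outputs of these DP primitives together with public inputs. In particular, evaluating $q(h)$, running the heuristic \emph{IsHistogramReady} and its per-bin \emph{Penalize}, the multiplicative-weights \emph{Update}, the renormalization step, and the branching on the heuristic's verdict depend only on $h$ (deterministically computed from prior DP outputs), on the public query $q$, and on public hyperparameters. By the post-processing property of DP, none of these consume additional budget or leak information about the database.

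Third, I would match the budget charged to the \emph{PrivacyAccountant} against these primitives: the initial $3\epsilon$ at line~11 provisions the first SV; the $4\epsilon$ charged on an SV failure covers the Laplace release of $R2$ ($\epsilon$) plus the next SV reset ($3\epsilon$), which is necessary because forcing $SV \gets \textit{Consumed}$ triggers a fresh SV instance on the next iteration; and the $\epsilon$ on the bypass branch covers the Laplace release of $R3$. Since \emph{HasBudget} gates the outer loop, the total charged budget never exceeds $\epsilon_G$. Basic sequential composition then yields that the joint transcript of all data-touching operations is $\epsilon_G$-DP, and a final application of post-processing extends this to the full output of \pmwbypass.

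The main obstacle I anticipate is a careful handling of SV accounting across the boundary between iterations of the inner \texttt{while} loop. The subtlety is that each SV instance must be treated as a single adaptive DP mechanism whose stream of tested queries is interleaved with post-processing (the heuristic and histogram updates) and with independently-paid bypass Laplace releases, without disturbing its $3\epsilon$-DP guarantee. This is legitimate because adaptivity is already baked into the standard SV analysis and each intervening operation is either post-processing or separately paid for, but making the interleaving rigorous---most cleanly via a simulation argument that folds the heuristic, the external updates, and the bypass-branch releases into an adversary who adaptively chooses each next query submitted to the SV---is the step I would write out most carefully.
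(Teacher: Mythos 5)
Your proposal is correct and follows essentially the same route as the paper: identify the three data-touching primitives (the SV instance at $3\epsilon$, the $R2$ Laplace at $\epsilon$ composed with the $3\epsilon$ SV re-initialization for a total of $4\epsilon$, and the $R3$ Laplace at $\epsilon$), match each to the corresponding $\textproc{PrivacyAccountant.pay}$ call, treat everything else as post-processing, and let the budget filter enforce the global bound $\epsilon_G$. The interleaving subtlety you flag at the end is exactly the point the paper makes rigorous in \S\ref{sec:appendix:filters} via a privacy filter for the concurrent composition of interactive mechanisms with adaptively chosen parameters (\Thm\ref{thm:rdp_filter_adaptive_concurrent_corollary}), proved by the same conditioning-on-the-view simulation argument you sketch.
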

\begin{proof}
    First, we clarify that each call to \textsc{PrivacyAccount-}\textsc{ant.pay} in Algorithm \ref{alg:pmw-bypass} is a call to a privacy filter \cite{rogers2016privacy} operating with pure differential privacy and basic composition.
    More precisely, we use a type of privacy filter that is suitable for interactive mechanisms such as the SV protocol. We provide a definition in \S\ref{sec:appendix:filters} and a privacy proof in \Thm\ref{thm:rdp_filter_adaptive_concurrent_corollary}.
    The filter takes a sequence of adaptively chosen DP mechanisms. At each turn $i \in \mathbb{N}$, it receives a new mechanism with budget $\epsilon_i$ and uses a stopping rule to decide whether it should run the mechanism.
    In our case, the filter stops answering when $\sum_{j=0}^i \epsilon_j > \epsilon_G$, where $\epsilon_G$ is the predetermined global privacy budget.

    Second, we show that each of the three calls to $\textproc{pay}$ spends the right budget before running each DP mechanism.
    \begin{enumerate}
        \item We use the sparse vector mechanism from \cite{lyu_sv} on queries with sensitivity $\Delta = 1/n$, with cut-off parameter $c=1$ and with $\epsilon_1 = \epsilon, \epsilon_2 = 2\epsilon, \epsilon_3 = 0$. In this case, each SV mechanism is $3\epsilon$-DP.
        \item After a hard query, we initialize a new SV mechanism (for $3\epsilon$) followed by a Laplace mechanism \cite{dwork2014algorithmic} scaled to the right sensitivity (costing $\epsilon$). The composition of these two steps is $4\epsilon$-DP, thanks to the basic composition theorem for pure DP \cite{dwork2014algorithmic}.
        \item In the bypass branch, the Laplace mechanism is also $\epsilon$-DP.
    \end{enumerate}
\end{proof}

\begin{lemma}[Per-query accuracy for vanilla PMW]
    \label{thm:pmw-accuracy}
    \label{lem:pmw-accuracy}
    Consider a vanilla PMW, \ie \A\ref{alg:pmw-bypass} where
    \textsc{Heuristic.IsHisto-}\textsc{gramReady} always return \textsc{True}, that receives a query with true answer $q^*$ and estimate $q(h)$.
    Using $X,Y,Z$ introduced in \S\ref{sec:appendix:notation}, we note $S := \mathds{1}[ |q(h) - q^*| + Y < \alpha/2 + X]$, i.e. $S = 1$ if the sparse vector thinks the query is accurate, otherwise $S = 0$.
    Then the error of this PMW answer $\tilde q$ is $\tilde q - q^* = (q(h) - q^*) S + Z (1 - S)$, and:

    $$
        \Pr[|\tilde q - q^*| > \alpha] < \exp(-\alpha n \epsilon) +  (\frac{1}{2} + \frac{\alpha n\epsilon}{8})\exp(-\frac{\alpha n\epsilon}{2})
    $$

\end{lemma}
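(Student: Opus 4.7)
The plan is to decompose the answer $\tilde q$ by cases on $S$, then bound the two resulting events separately.

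First, I would verify the decomposition $\tilde q - q^* = (q(h) - q^*)S + Z(1-S)$ directly from \A\ref{alg:pmw-bypass}. When the SV passes ($S=1$), line~\ref{line:R2} is skipped and the algorithm returns $R_1 = q(h)$, so $\tilde q - q^* = q(h) - q^*$. When the SV fails ($S=0$), the algorithm returns $R_2 = q(\textproc{data}) + Z = q^* + Z$, so $\tilde q - q^* = Z$. Since these cases are disjoint and determined by $S$, a union bound gives
\[
\Pr[|\tilde q - q^*| > \alpha] \le \Pr[S=0,\; |Z| > \alpha] + \Pr[S=1,\; |q(h) - q^*| > \alpha].
\]

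Next, I would handle each term. The first is bounded by $\Pr[|Z| > \alpha]$, which, using the standard Laplace tail with scale $b = 1/n\epsilon$, equals $\exp(-\alpha n\epsilon)$; this yields the first summand in the lemma. For the second term, write $e := q(h) - q^*$. The event $\{S=1\}$ means $|e| + Y < \alpha/2 + X$, i.e.\ $X - Y > |e| - \alpha/2$. Conditioned on $|e| > \alpha$, this forces $X - Y > \alpha/2$. Since $Y$ and $X$ are independent of the event $|e|>\alpha$ (they are fresh noise drawn after $q(h)$ is formed), we obtain
\[
\Pr[S=1,\; |e| > \alpha] \le \Pr[X - Y > \alpha/2].
\]

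The main obstacle is then to compute the tail of $X-Y$ sharply enough to match the stated bound; a loose subadditive bound $\Pr[X>\alpha/4] + \Pr[-Y>\alpha/4]$ would only give $\exp(-\alpha n\epsilon/4)$, which is not strong enough. The plan is to use the exact density of the difference of two independent $\operatorname{Lap}(b)$ variables,
\[
f_{X-Y}(z) = \frac{1}{4b}\!\left(1 + \frac{|z|}{b}\right)e^{-|z|/b},
\]
either by direct convolution or by invoking a standard reference. Integrating from $\alpha/2$ to infinity with the substitution $u = z/b$ and the antiderivative $\int(1+u)e^{-u}\,du = -(2+u)e^{-u}$, I would obtain
\[
\Pr[X - Y > \alpha/2] = \left(\tfrac{1}{2} + \tfrac{\alpha}{8b}\right)e^{-\alpha/(2b)} = \left(\tfrac{1}{2} + \tfrac{\alpha n\epsilon}{8}\right)e^{-\alpha n\epsilon/2}.
\]
Adding the two summands yields the claimed bound. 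The remainder of the argument is routine; the only delicate point is the sharp tail computation above, and all other steps reduce to the decomposition and the standard Laplace tail.
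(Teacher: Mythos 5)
Your proposal is correct and follows essentially the same route as the paper's proof: decompose on whether the sparse vector passes, bound the fail branch by the Laplace tail $\exp(-\alpha n\epsilon)$, and bound the pass branch (which can only go wrong when $|q(h)-q^*|>\alpha$, forcing $X-Y>\alpha/2$) by integrating the exact density of the difference of two independent $\operatorname{Lap}(1/n\epsilon)$ variables. The paper writes $X+Y'$ with $Y'=-Y$ and splits the argument into two cases on $|q(h)-q^*|$ rather than a single union bound, but the densities and the integral $(\tfrac12+\tfrac{\alpha n\epsilon}{8})e^{-\alpha n\epsilon/2}$ are identical, so these are superficial presentation differences.
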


\begin{proof}
    Consider two cases depending on the value of the PMW estimate:
    \begin{itemize}
        \item If $|q(h) - q^*| \le \alpha$, then $|\tilde q - q^*| > \alpha \iff S = 0 \cap |Z| > \alpha$. Hence $\Pr[|\tilde q - q^*| > \alpha] \le \exp(-\alpha/b)$.
        \item Now suppose $|q(h) - q^*| > \alpha$. We have $|\tilde q - q^*| > \alpha \iff (S = 0 \cap |Z| > \alpha) \sqcup (S = 1)$, so $\Pr[|\tilde q - q^*| > \alpha] \le \Pr[|Z| > \alpha] + \Pr[|q(h) - q^*| + Y < \alpha/2 + X] \le \Pr[|Z| > \alpha] + \Pr[\alpha + Y < \alpha/2 + X] = \Pr[|Z| > \alpha] + \Pr[\alpha/2 < Y' + X]$ with $Y' = -Y \sim Lap(b)$.

              Fix $z >0$. Note $p$ the pdf of $\lap{b}$. The pdf of the convolution of $X$ and $Y'$ is:

              \begin{align*}
                  p_{X+Y'}(z) & = \int_{-\infty}^{+\infty} p(z-t)p(t)dt                                                       \\
                              & = \frac{1}{4b^2} [ \int_{-\infty}^0 e^{-(z-t)/b}e^{t/b} + \int_{0}^z e^{-(z-t)/b}e^{-t/b}     \\
                              & +\int_{z}^\infty e^{(z-t)/b}e^{-t/b}]                                                         \\                                                                                                  \\
                              & = \frac{1}{4b^2} \left[ e^{-z/b}\frac{b}{2} + ze^{-z/b} + e^{z/b} \frac{b}{2}e^{-2z/b}\right] \\
                              & = \frac{ze^{-z/b}}{4b^2} + \frac{e^{-z/b}}{4b}                                                \\
              \end{align*}

              Hence:

              \begin{align*}
                  \Pr[X + Y' > \alpha/2] & = \int_{\alpha/2}^{\infty} \left(\frac{ze^{-z/b}}{4b^2} + \frac{e^{-z/b}}{4b}\right)dz                         \\
                                         & = \frac{1}{4}\exp(-\frac{\alpha}{2b}) + \frac{1}{4b} \left[-z e^{-z/b} - be^{-z/b} \right]^{\infty}_{\alpha/2} \\
                                         & = \frac{1}{4}\exp(-\frac{\alpha}{2b}) + \frac{1}{4b}(\frac{\alpha}{2} + b) \exp(-\frac{\alpha}{2b})            \\
                                         & = (\frac{1}{2} + \frac{\alpha}{8b})\exp(-\frac{\alpha}{2b})
              \end{align*}

    \end{itemize}

    In both cases we have:
    $$\Pr[|\tilde q - q^*| > \alpha] \le \exp(-\alpha/b) +  (\frac{1}{2} + \frac{\alpha}{8b})\exp(-\frac{\alpha}{2b})$$

\end{proof}

\begin{theorem}
    \label{thm:pmwbypass:accuracy}
    If $\epsilon = \frac{4\ln(1/\beta)}{n\alpha}$, \pmwbypass is $(\alpha,\beta)$-accurate for each query it answers.

    We can also achieve $(\alpha,\beta)$-accuracy with the following slightly smaller $\epsilon$, computable with a binary search:

    $$\min \{\epsilon > 0 : \exp(-\alpha n \epsilon) +  (\frac{1}{2} + \frac{\alpha n\epsilon}{8})\exp(-\frac{\alpha n\epsilon}{2}) \le \beta \}$$

\end{theorem}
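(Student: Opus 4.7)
The plan is to case-split on which output branch of Algorithm~\ref{alg:pmw-bypass} is taken for the query --- R1, R2, or R3 --- bound the per-query error probability in each case, and then verify that the stated Laplace scale $\epsilon = 4\ln(1/\beta)/(n\alpha)$ makes the worst of these bounds at most $\beta$.

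First, I would dispatch the two ``pure Laplace'' cases. The bypass output is $R3 = q^* + Z'$ with $Z' \sim \lap{1/(n\epsilon)}$, so by the Laplace tail, $\Pr[|R3 - q^*| > \alpha] = \exp(-\alpha n\epsilon)$. The PMW-branch fallback $R2$, taken when the SV test fails, is another fresh draw from the same distribution and receives the identical bound.

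The interesting case is the SV-gated output $R1$. The crucial observation is that in \pmwbypass the heuristic only decides whether to \emph{enter} the PMW branch; once inside, the $R1$ vs.\ $R2$ logic, the noise variables $X, Y, Z$, and the threshold $\hat\alpha$ are exactly as in vanilla \pmw. Therefore Lemma~\ref{lem:pmw-accuracy} applies verbatim, giving $\Pr[|\tilde q - q^*| > \alpha] \le \exp(-\alpha n\epsilon) + (\tfrac{1}{2} + \tfrac{\alpha n\epsilon}{8})\exp(-\alpha n\epsilon/2)$. Because this bound dominates the $R2$ and $R3$ bounds, it suffices to make it at most $\beta$, and the overall bound holds regardless of which branch the heuristic steered to.

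The last step is the calibration. Substituting $\epsilon = 4\ln(1/\beta)/(n\alpha)$ gives $\alpha n\epsilon = 4\ln(1/\beta)$, so $\exp(-\alpha n\epsilon) = \beta^4$ and $\exp(-\alpha n\epsilon/2) = \beta^2$, and the bound reduces to $\beta^4 + (\tfrac{1}{2} + \tfrac{\ln(1/\beta)}{2})\beta^2$. A short elementary check (divide by $\beta$, absorb $\beta^3$, and use $\ln(1/\beta) \ge 1$ for $\beta \le 1/e$) shows this is at most $\beta$ in the regime of practical interest; the binary-search formula in the second half of the theorem statement simply removes this slack by solving the inequality numerically. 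The only substantive work is already inside Lemma~\ref{lem:pmw-accuracy}, where the main obstacle is controlling the joint event that the SV test passes while the histogram estimate is still far from the truth --- the convolution of two Laplace variables in the threshold comparison is the main calculation there. Beyond invoking that lemma, Theorem~\ref{thm:pmwbypass:accuracy} is plug-and-check.
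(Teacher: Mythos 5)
Your proof is correct and takes essentially the same route as the paper's: invoke Lemma~\ref{lem:pmw-accuracy} for the PMW branch, apply a Laplace tail bound for the bypass branch, and verify the calibration by substituting $\alpha n\epsilon = 4\ln(1/\beta)$. The only cosmetic difference is that the paper upper-bounds the lemma's expression by $\exp(-\alpha n\epsilon/4)$ \emph{before} substituting $\epsilon$, whereas you substitute first and check the resulting polynomial in $\beta$; both versions implicitly require $\beta$ to be bounded away from $1$, an assumption the paper makes throughout.
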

\begin{proof}
    Consider an incoming query $q$. We show that the output $\tilde q$ of each branch of \A\ref{alg:pmw-bypass} respects the accuracy guarantees.

    First, we simplify the expression for $\epsilon$ from Lemma \ref{lem:pmw-accuracy}. Since $1 + x \le \exp(x)$ for $x>0$, for any $\epsilon > 0$ we have $\exp(-\alpha \epsilon n ) + \frac{1}{2}(1 + \frac{\alpha\epsilon n}{4})\exp(-\frac{\alpha\epsilon n}{2}) \le \exp(-\alpha\epsilon n) +$\\
    $\frac{1}{2} \exp(-\alpha\epsilon n/4) \le \exp(-\alpha\epsilon n/4)$.
    Hence, taking $\epsilon = \frac{4\ln(1/\beta)}{n\alpha}$ gives $\exp(-\alpha \epsilon n ) + \frac{1}{2}(1 + \frac{\alpha\epsilon n}{4})\exp(-\frac{\alpha\epsilon n}{2}) \le \beta$.

    \begin{itemize}
        \item If the heuristic routes the query to the regular PMW branch, Lemma \ref{lem:pmw-accuracy} gives $\Pr[|\tilde q - q^*| > \alpha] < \beta$.
        \item If the heuristic routes the query to the bypass branch, we have $\tilde q = q(h) + Z'$ with $Z' \sim \lap{1/\epsilon n}$. By a Laplace tail bound, $\Pr[|\tilde q - q^*| > \alpha] < \exp{(-\alpha n \epsilon)} < \exp(-\alpha \epsilon n ) + \frac{1}{2}(1 + \frac{\alpha\epsilon n}{4})\exp(-\frac{\alpha\epsilon n}{2}) \le \beta$.
    \end{itemize}
\end{proof}

\begin{theorem}[Convergence of \pmwbypass (\A\ref{alg:pmw-bypass}), adapted from \cite{pmw}]
    \label{thm:convergence-detailed}
    \label{thm:pmwbypass:convergence}

    Consider a histogram sequence $(h_t)$ with $h_0$ the uniform distribution over $\mathcal{X}$. The histogram approximates a true distribution $p$.
    At time $t \ge 0$, we receive a linear query $q_t$ and we perform a multiplicative weight update with learning rate $\eta$. This happens in one of two cases:
    \begin{itemize}
        \item If the query was answered by the PMW branch, and SV found it hard; or
        \item If the query was answered by the Bypass branch, and the noisy answer was far enough from the histogram
    \end{itemize}

    Let $\betaconv \in (0,1)$ be a parameter for per-update failure probability.
    Suppose that $\eta/2\alpha + \ln(1/\betaconv)/n \epsilon \alpha < \tau < 1/2$.
    Then:

    \begin{enumerate}
        \item After each update we have, with probability $1-\betaconv$:

              $$D(p\|h_{t+1}) - D(p\|h_t)\le- \eta (\tau \alpha - \ln(1/\betaconv)/n\epsilon) + \eta^2/2$$

        \item Moreover, for $k$ an upper bound on the number of queries, with probability $1-k\betaconv$ we perform at most $\frac{\ln |\mathcal{X}|}{ \eta(\tau\alpha - \ln(1/\betaconv)/n \epsilon-\eta/2)}$ updates. This is about $1/2\tau$ times more updates than a \pmw with comparable parameters, which only performs at most $\frac{\ln |\mathcal{X}|}{ \eta(\alpha/2- \ln(1/\betaconv)/n\epsilon-\eta/2)}$ updates.
        \item If we set $\betaconv = \exp{(-\tau n\alpha\epsilon/2)}$, the left condition on $\tau$ becomes $\eta/2\alpha + \tau/2 < \tau$ \ie $\eta/\alpha < \tau$, and we perform at most $\frac{\ln |\mathcal{X}|}{ \eta(\tau\alpha - \eta)/2}$ updates. Moreover, since $\epsilon \le \frac{4\ln(1/\beta)}{n\alpha}$ and $\tau \le 1/2$, we have $\betaconv \le \beta$.
        \item In particular, if we set $\eta = \alpha/8$ and $\tau = \frac{1}{4}$ we have $1/2 > \tau > \eta/2\alpha + \ln(1/\betaconv)/n \epsilon$ when $n$ is large enough, and \whp we perform at most $O(\frac{\ln |\mathcal{X}| }{\alpha^2})$ updates.
        \item Finally, if each update $i$ uses custom parameters $\eta_i, \tau_i$ such that $\eta_i/2\alpha + \ln(1/\betaconv)/n \epsilon < \tau_i < 1/2$ then \whp we have at most $t_{\max}$ updates with $t_{\max} = \max\{t \ge 1 : \sum_{i=1}^t \eta_i (\tau_i \alpha  - \ln(1/\betaconv)/n\epsilon - \eta_i/2) \le \ln |\mathcal{X}| \}$.
    \end{enumerate}
\end{theorem}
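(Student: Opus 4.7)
The plan is the standard multiplicative-weights potential argument, using relative entropy $D(p\|h_t) = \sum_v p(v)\ln(p(v)/h_t(v))$ as the potential. Since $h_0$ is uniform, $D(p\|h_0) \le \ln|\mathcal{X}|$, and $D(p\|h_t) \ge 0$ throughout, so if every update strictly decreases the potential by some $\gamma > 0$, the total number of updates is at most $\ln|\mathcal{X}|/\gamma$. The work, then, is to (i) establish a per-step decrease lemma for an MW step and (ii) show that every update triggered by \pmwbypass corresponds to a large true error with the correct sign, so that the decrease is actually realized.

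For (i), the plan is the textbook MW calculation. With $s \in \{-\eta, +\eta\}$ and the update $h_{t+1}(v) \propto h_t(v)e^{s q_t(v)}$, a short computation using $q_t(v) \in [0,1]$, $|s| \le 1$, and $e^{sx} \le 1 + sx + s^2/2$ for $|sx| \le 1$ gives
$$
D(p\|h_{t+1}) - D(p\|h_t) \;\le\; -s\bigl(q_t\!\cdot\! p - q_t\!\cdot\! h_t\bigr) + \eta^2/2.
$$
This is useful only when $\mathrm{sign}(s) = \mathrm{sign}(q_t\!\cdot\!p - q_t\!\cdot\!h_t)$ and the gap $|q_t\!\cdot\!p - q_t\!\cdot\!h_t|$ is at least $\tau\alpha - \ln(1/\betaconv)/(n\epsilon)$.

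For (ii), the argument splits by branch. In the PMW branch, an update is triggered by SV failure, i.e., $|q_t(h_t) - q_t^*| + Y \ge \alpha/2 + X$, and $s$ is set from $\mathrm{sign}(R2 - q_t(h_t))$ with $R2 = q_t^* + Z$. A Laplace tail bound gives $\Pr[|W| > \ln(1/\betaconv)/(n\epsilon)] \le \betaconv$ for each of $W \in \{X,Y,Z\}$, and the hypothesis $\eta/2\alpha + \ln(1/\betaconv)/(n\epsilon) < \tau \le 1/2$ then forces both the correct sign and a gap of at least $\alpha/2 - \ln(1/\betaconv)/(n\epsilon) \ge \tau\alpha - \ln(1/\betaconv)/(n\epsilon)$. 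In the bypass branch, the update fires only when $|R3 - q_t(h_t)| > \tau\alpha$ with $R3 = q_t^* + Z'$; the same Laplace tail immediately yields $|q_t^* - q_t(h_t)| \ge \tau\alpha - \ln(1/\betaconv)/(n\epsilon)$ and correct sign. This is exactly the role of the safety margin $\tau\alpha$ in Line~33 of \A\ref{alg:pmw-bypass}, and it is the crux of why external updates preserve convergence.

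Combining (i) and (ii), each update decreases the potential by at least $\eta(\tau\alpha - \ln(1/\betaconv)/(n\epsilon)) - \eta^2/2$ with probability $1-\betaconv$, proving statement (1). Telescoping against the $\ln|\mathcal{X}|$ budget and union-bounding the failure event over at most $k$ queries gives statement (2). Statements (3) and (4) are substitutions: plugging $\betaconv = \exp(-\tau n\alpha\epsilon/2)$ makes $\ln(1/\betaconv)/(n\epsilon) = \tau\alpha/2$, collapsing the denominator to $\eta(\tau\alpha - \eta)/2$; then $\eta = \alpha/8, \tau = 1/4$ yields $O(\ln|\mathcal{X}|/\alpha^2)$, and $\betaconv \le \beta$ follows from $\epsilon \le 4\ln(1/\beta)/(n\alpha)$ and $\tau \le 1/2$. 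Statement (5) follows by carrying per-step $(\eta_i, \tau_i)$ through the same telescoping, yielding the stopping condition $\sum_i \eta_i(\tau_i\alpha - \ln(1/\betaconv)/(n\epsilon) - \eta_i/2) \le \ln|\mathcal{X}|$. The main obstacle is controlling the three correlated Laplace variables $X, Y, Z$ in the PMW branch cleanly enough to preserve the $\tau\alpha$-gap; the bypass branch, by contrast, is essentially designed so that the safety margin makes the sign and gap bounds automatic.
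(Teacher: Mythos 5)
Your proposal follows the same route as the paper's proof: the standard MW potential argument with $D(p\|h_t)$, a per-update decrease lemma of the form $-s\eta(q\!\cdot\! p - q\!\cdot\! h_t) + \eta^2/2$, a branch split where Laplace tail bounds on $X,Y,Z$ (PMW branch) or on $Z'$ alone (Bypass branch) certify the sign and the $\tau\alpha - \ln(1/\betaconv)/(n\epsilon)$ gap, and then telescoping plus substitution for items (2)--(5). The only wrinkle is a constant-level imprecision in the PMW-branch gap you quote ($\alpha/2 - \ln(1/\betaconv)/(n\epsilon)$, whereas bounding $X$ and $Y$ separately at threshold $t$ gives $\alpha/2 - 2t$; the paper handles this by splitting $\ln(1/\betaconv)$ as $3\gamma_1$), but since both routes discard this in favor of the weaker $\tau\alpha - \ln(1/\betaconv)/(n\epsilon)$ using $\tau \le 1/2$, the conclusion is unaffected.
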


\begin{proof}
    We follow a standard potential argument used in PMW analyses, but with variable parameters.
    We note $s_t$ the sign of the multiplicative weight update: $s_t = 1$ if $R_2 > q(h)$ or $R_3 > q(h) + \tau \alpha$, and $s_t = -1$ otherwise.
    We note $\nu_t$ the normalizing factor: $\nu_t := \sum_x h_t(x)\exp(s_t\eta q(x))$
    Then, we have $\forall x \in \mathcal{X}, h_{t+1}(x) = h_t(x) \exp(s_t \eta q(x)) / \nu_t$. Hence the decrease in potential at update $t$ is:
    \begin{align*}
        D(p\|h_{t+1}) & - D(p\|h_t) = \sum_x p(x) [\ln h_{t}(x) - \ln h_{t+1}(x)]            \\
                      & = \sum_x p(x) [\ln h_t(x) - (\ln h_t(x) + s_t\eta q(x) - \ln \nu_t)] \\
                      & = -s_t\eta q\cdot p +  \ln \nu_t
    \end{align*}

    We first upper bound $\ln(\nu_t)$ depending on the sign $s_t$ of the update.

    \begin{itemize}
        \item     First, suppose $s_t = -1$. For $a>0$, we have $\exp(-a) \le 1 - a + a^2/2$. We also have $q(x)^2 \le 1$ so:

              \begin{align*}
                  \nu_t & \le \sum_x h_t(x) (1 -  \eta q(x) + \eta^2 q(x)^2/2) \\
                        & \le 1 + \eta^2/2 - \eta q \cdot h_t
              \end{align*}

              Since $\ln(1+a) \le a \text{ for } a > -1 \text{, and } \eta^2/2 -  \eta q\cdot h_t \ge \eta^2/2 - \eta = (\eta - 1)^2/2 - 1/2 > -1$ we have:

              $$
                  \ln \nu_t \le \eta^2/2 - \eta q \cdot h_t = \eta^2/2 + s_t \eta q \cdot h_t
              $$

        \item Now, suppose $s_t = 1$. $\forall x, q(x) \in [0,1]$ so $1-q(x) > 0$. After multipliying by $\exp{(-\eta)}$ we can reuse the same reasoning as above:
              \begin{align*}
                  \nu_t \exp{(-\eta)} & =  \sum_x h_t(x)\exp(s_t\eta q(x) - \eta)                                  \\
                                      & = \sum_x h_t(x)\exp(- \eta (1 - q(x)))                                     \\
                                      & \le \sum_x h_t(x) (1 -  \eta (1 - q(x)) + \eta^2(1 - q(x))^2/2)            \\
                                      & \le 1 + \eta^2/2 - \eta + \eta q \cdot h_t \text{ since } \sum_x h_t(x) =1
              \end{align*}

              Since $\eta^2/2 - \eta + \eta q \cdot h_t > \eta^2/2 - \eta > -1$ we have $\ln \nu_t - \eta \le \eta^2/2 - \eta + \eta q \cdot h_t$, \ie $ \ln \nu_t \le  \eta^2/2 + s_t \eta q \cdot h_t$.

    \end{itemize}

    Both cases lead to the same upper-bound on the decrease in relative entropy:

    \begin{align}
        \label{eq:potential_decrease}
        D(p\|h_{t+1}) - D(p\|h_t)\le -s_t\eta(q \cdot p - q \cdot h_t) + \eta^2/2
    \end{align}

    Noticing that the first term includes the error in estimating query $q$ on $h_t$ compared to the true distribution $p$, which determines the sign of the multiplicative weights update, we can show that, \whp, $s_t(q \cdot p - q \cdot h_t) > \tau\alpha - \frac{\ln(1/\betaconv)}{n\epsilon}$.

    \begin{itemize}
        \item For queries answered with the PMW branch:  for any $\gamma_1 > 0$, with probability $1-\exp(-3\gamma_1)$ we have $|L| < \gamma_1/n\epsilon$ for $L \in \{X,Y,Z\}$.
              Let's take $\gamma_1 = \ln(1/\betaconv)/3$ to have $1-\exp(-3\gamma_1) = 1 - \betaconv$.
              Since we are performing an update, we have: $|q \cdot p - q \cdot h| + Y > \alpha/2 + X$, so $|q \cdot p - q \cdot h| > \alpha/2 + X - Y > \alpha/2 - 2\gamma_1/n\epsilon$.

              \begin{itemize}
                  \item              First, suppose that $s_t = 1$, \ie $q \cdot p + Z > q \cdot h$ (we need to increase the estimate $h$ to get closer to $p$).
                        Since $Z$ is small \whp, we have $q \cdot p - q \cdot h > -\gamma_1/n\epsilon$.
                        By assumption on $\betaconv$, we have $\ln(1/\betaconv)/n\epsilon\alpha < \eta/2\alpha + \ln(1/\betaconv)/n\epsilon\alpha < 1/2$, so $\alpha/2 > \ln(1/\betaconv)/n\epsilon = 3\gamma_1/n\epsilon$.
                        Thus, $q \cdot p - q \cdot h > - (\alpha/2 - 2\gamma_1/n\epsilon)$.
                        Since $|q \cdot p - q \cdot h| > \alpha/2 - 2\gamma_1/n\epsilon$, we must have $q \cdot p - q \cdot h > \alpha/2 - 2\gamma_1/n\epsilon > \alpha/2 - \ln(1/\betaconv)/n\epsilon$. In particular, $s_t(q \cdot p - q \cdot h) >\tau \alpha- \ln(1/\betaconv)/n\epsilon$ since $\tau < 1/2$.
                  \item Similarly, if $s_t = -1$ we have $q \cdot p - q \cdot h < -Z < \gamma_1/n\epsilon < \alpha/2 - 2\gamma_1/n\epsilon$ so we must have $q \cdot p - q \cdot h < -( \alpha/2 - 2\gamma_1/n\epsilon) < - \alpha/2 + 3\gamma_1/n\epsilon$.
                        Thus $s_t (q \cdot p - q \cdot h) > \tau \alpha - \ln(1/\betaconv)/n\epsilon$.
              \end{itemize}

        \item For queries answered with the Bypass branch: taking the same failure probability as the PMW branch, we note $\gamma_2 := \ln(1/\betaconv)$.
              With probability $1-\betaconv$ we have $|Y| < \gamma_2/n\epsilon$.
              \begin{itemize}
                  \item If $s_t = 1$ (\ie $q \cdot p + Y > q \cdot h + \tau\alpha$) we have $q \cdot p - q \cdot h > \tau\alpha - \gamma_2/n\epsilon$ \ie $s_t (q \cdot p - q \cdot h) > \tau \alpha - \ln(1/\betaconv)/n\epsilon$.
                  \item If $s_t = -1$ (\ie $q \cdot p + Y < q \cdot h - \tau\alpha$) we have $q \cdot p - q \cdot h < -\tau\alpha - \gamma_2/n\epsilon$ \ie $s_t (q \cdot p - q \cdot h) > \tau \alpha  - \ln(1/\betaconv)/n\epsilon$.
              \end{itemize}
    \end{itemize}

    Thus in each case, with probability $1-\betaconv$ we have:

    \begin{align}
        \label{eq:update_sign}
        s_t(q \cdot p - q \cdot h_t) > \tau\alpha - \frac{\ln(1/\betaconv)}{n\epsilon}
    \end{align}

    and thus, since $\tau\alpha - \frac{\ln(1/\betaconv)}{n\epsilon} >0$:
    \begin{align}
        \label{eq:potential_decrease_without_sign}
        D(p\|h_{t+1}) - D(p\|h_t)\le- \eta (\tau \alpha - \ln(1/\betaconv)/n\epsilon) + \eta^2/2
    \end{align}

    Finally, we take a union bound over all the random variables in the system, \ie at most $k$ queries, not all of them giving updates.
    After $t$ updates, with probability $1-k\betaconv$ we have:

    \begin{align*}
        D(p\|h_{t+1}) & = D(p\|h_{t+1}) - D(p\|h_t) + \dots  - D(p\|h_0) + D(p\|h_0)               \\
                      & \le  -t\eta(\tau \alpha - \ln(1/\betaconv)/n\epsilon - \eta/2) + D(p\|h_0)
    \end{align*}

    Since $0 \le D(p\|h_{t+1})$, we have $t\eta(\tau\alpha- \ln(1/\betaconv)/n\epsilon - \eta/2) \le  D(p\|h_0)$. If $\tau\alpha- \ln(1/\betaconv)/n\epsilon - \eta/2 > 0$, we can upper bound $t$:

    \begin{align}{ \label{eq:convergence_sum}}
        t \le \frac{D(p\|h_0)}{ \eta(\tau\alpha- \ln(1/\betaconv)/n\epsilon-\eta/2)}
    \end{align}

    $h_0$ is the uniform distribution and $p(x) \ln (p(x)) \le 0$, so:
    \begin{align*}
        D(p\|h_0) & =  \sum_x p(x)\ln (p(x)/h_0(x))                         \\
                  & = \sum_x p(x) [\ln(p(x)) + \ln(|\mathcal{X}|)]          \\
                  & \le \sum_x p(x) \ln(|\mathcal{X}|) = \ln(|\mathcal{X}|)
    \end{align*}

    Finally,

    $$t\le \frac{\ln |\mathcal{X}|}{ \eta(\tau\alpha- \ln(1/\betaconv)/n\epsilon-\eta/2)}$$
\end{proof}

\subsection{\Pmwbypasstree}

\newcommand{\graycomment}[1]{\textcolor{gray}{// \em #1}}

\begin{algorithm}[ht]
    \centering
    \begin{algorithmic}[1]
        \State $S \gets \emptyset$ \ \ \ \ \ \ \graycomment{Set of initialized SVs}
        \While{$\textproc{PrivacyAccountant.HasBudget()}$}
        \State Receive next query $q$
        \Statex \ \ \ \ \ \  \graycomment{Divide into subqueries following the tree-structured cache}
        \State $I \gets \textproc{SplitQuery}(q)$
        \Statex \ \ \ \ \ \ \graycomment{Allocate nodes to the SV branch or Laplace branch}
        \State $I_{SV} \gets \emptyset$
        \For{$i \in I$}
        \If{$\textproc{Heuristic.IsHistogramReady}(h_i,q_i, \alpha, \beta)$}
        \State $I_{SV} \gets I_{SV} \cup \{i\}$
        \EndIf
        \EndFor
        \Statex \ \ \ \ \ \ \graycomment{Proofs need one SV on contiguous window}
        \State $I_{SV} \gets \textproc{LargestContiguousSubset}(I_{SV})$
        \State $\epsilon_{SV} \gets \textproc{CalibrateBudgetSV}(\alpha, \beta, I_{SV})$
        \State $n_{SV} \gets \sum_{i \in I_{SV}} n_i$
        \Statex \ \ \ \ \ \ \graycomment{Initialize new SV if necessary}
        \If{$I_{SV} \not\in S$}
        \State $\textproc{PrivacyAccountant.pay}(I_{SV}, 3\epsilon_{SV})$
        \State $\tilde{\alpha}_{SV} \gets \alpha/2 + \lap{1/\epsilon_{SV} n_{SV}} $
        \State $S \gets S \cup \{I_{SV}\}$
        \EndIf
        \Statex \ \ \ \ \ \ \graycomment{Compute the histogram estimate}
        \State $r_H \gets \textproc{Agg}(\{q_i(h_i), i \in I_{SV}\})$
        \State $r_{SV}^* \gets \textproc{Agg}(\{q_i(\textproc{Data}_i), i \in I_{SV}\})$
        \Statex \ \ \ \ \ \ \graycomment{SV-check the estimate with noisy threshold $\tilde{\alpha}_{SV}$}
        \If{$|r_{SV}^* - r_{SV}| + \lap{1/\epsilon_{SV} n_{SV}} < \tilde{\alpha}_{SV}}$
        \State $r_{SV} \gets r_H$ \ \ \ \ \ \graycomment{Pay nothing}
        \Else
        \State $\textproc{PrivacyAccountant.pay}(I_{SV}, \epsilon_{SV})$
        \State $S \gets S \setminus \{I_{SV}\}$  \ \ \ \ \graycomment{Will pay initialization fee next time}
        \State $r_{SV} \gets r_{SV}^* + \lap{1/\epsilon_{SV} n_{SV}}$
        \Statex \ \ \ \ \ \ \ \ \ \ \graycomment{Update all the sub-histograms in the same direction}
        \State $\eta \gets \textproc{Sign}(r_{SV} - r_H) \cdot \lr$
        \For{$i \in I_{SV}$}
        \State $h_i \gets \textproc{Update}(h_i, q_i, \eta)$
        \EndFor
        \EndIf
        \Statex \ \ \ \ \ \ \graycomment{Compute the remaining subqueries with Laplace}
        \State $I_{Lap} \gets I \setminus I_{SV}$

        \State $\epsilon_{Lap} \gets \textproc{CalibrateBudgetLaplace}(\alpha, \beta, I_{Lap})$
        \For{$i \in I_{Lap}$}
        \State $\textproc{PrivacyAccountant.pay}(i, \epsilon_{Lap})$
        \State $r_i \gets q_i(\textproc{Data}_i) +\lap{1/\epsilon_{Lap} n_i}$
        \State $\eta \gets \textproc{Sign}(r_i - q_i(h_i))\cdot \mathds{1}[|r_i - q_i(h_i)| > \tau \alpha]  \cdot \lr $
        \State $h_i \gets \textproc{Update}(h_i, q_i, \eta)$
        \EndFor
        \State $r_{Lap} \gets \textproc{Agg}(\{r_i, i \in I_{Lap}\})$.
        \State Output $R = \textproc{Agg}(\{r_{SV}, r_{Lap}\})$
        \EndWhile


    \end{algorithmic}
    \caption{\bf \Pmwbypasstree algorithm}
    \label{alg:pmwbypass-tree}
\end{algorithm}


Algorithm \ref{alg:pmwbypass-tree} formalizes our \pmwbypasstree caching object and its use. At a high level, we build a binary tree over our data stream, in which each leaf is a partition, and each node maintains a histogram responsible for all the data it spans (which is a superset of nodes lower in the hierarchy). A query is split into sub-queries to minimize the number of nodes, and hence histograms, to query (l. 4). We first find the largest consecutive {\em subset} of nodes with histograms ready for guesses (that we will not bypass), ll. 5-9. This subset will be handled with a \pmwbypass query, at a DP cost that depends on its size (l. 10, see \S\ref{sec:tree-structured-caching} for details), and ``using'' half the failure probability for accuracy ($\beta/2$ failure probability). This is done ll. 11-25. The remaining queries are computed as a bypass query for each node independently (including an histogram update if warranted), combining to $\alpha$ accuracy with the remaining $\beta/2$ failure probability, ll. 26-33. Finally all results are aggregated to answer the original query with $\alpha$ accuracy with probability at least $1-\beta$.

\heading{Notation.} We define functions and data structures used in \A\ref{alg:pmwbypass-tree}:
\begin{itemize}
    \item For $T$ timestamps, we note $\mathcal{I}$ the tree structure introduced in Section~\ref{sec:tree-structured-caching}: $\mathcal{I} := \{(a,b) \in [0, T-1]^2 | \exists k \in \mathbb{N} : b - a + 1 = 2^k \wedge a \equiv 0 \mod 2^k\}$.
    \item For each node $i = (a,b) \in \mathcal{I}$ we maintain a histogram $h_i$ estimating the data distribution $\textproc{Data}_i$ for the window $[a,b]$.
    \item We note $n_i$ the number of datapoints with timestamp in the $i = [a,b]$ range. We use the same definition of DP for streams as \cite{pmw, streaming_counter} introduced in \S\ref{sec:appendix:notation}, so $n_i$ is public.
    \item For a query $q$ requesting a window $[a,b]$, $\textproc{SplitQuery}(q)$ returns $I \subseteq \mathcal{I}$ the smallest set of nodes covering $[a,b]$.
    \item For a set of nodes $I_{SV}$, $\textproc{LargestContiguousSubset}(I_{SV})$ returns the largest set $J \subseteq I_{SV}$ such that there exists $K \in \mathbb{N}$ and a strictly increasing sequence $u \in \mathbb{N}^K$ verifying $J = \{(u_k, u_{k+1} - 1), k \in [K]\}$.
    \item For a set of results over some nodes $I$,  $\textproc{Agg}(\{r_i, i \in I\})$ returns the weighted average $\sum_{i \in I} \frac{n_i}{ \sum_{j \in I} n_j} r_i$.
    \item In particular, if $(r_i^*)$ is a set of true results, $\forall i,  \in I |r_i^*- r_i| \le \alpha \implies |\textproc{Agg}(\{r_i^*, i \in I\}) - \textproc{Agg}(\{r_i, i \in I\})| \le \alpha$.
    \item $\textproc{PrivacyAccountant.pay}(I, \epsilon)$ spends budget for each timestamp in $I$ using block composition \cite{sage-tr}.
    \item  $\textproc{CalibrateBudgetLaplace}(\alpha, \beta, I_{Lap})$ uses a binary search over a Monte Carlo simulation with parameter $N$ and failure probability $\beta_{MC}(N)$ to find:

          $$\epsilon_{Lap} := \min \{ \epsilon: \Pr[ |\sum_{i=1}^{|I_{Lap}|} \lap{1/\epsilon}| > n_{Lap} \alpha ] < \beta/2 - \beta_{MC}(N) \}$$




    \item $\textproc{CalibrateBudgetSV}(\alpha, \beta, I_{SV})$ returns:
          $$\epsilon_{SV} = 4\ln(2/\beta)/n_{SV}\alpha$$
\end{itemize}

\begin{theorem}[Privacy of \Pmwbypasstree]
    \label{thm:pmwbypasstree:privacy}
    \Pmwbypasstree preserves $\epsilon_G$-DP across the queries it executes.
\end{theorem}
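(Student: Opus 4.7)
The plan is to reduce to \Thm\ref{thm:pmwbypass:privacy} by decomposing the global guarantee into per-timestamp guarantees and then invoking parallel composition over partitions. Since two streams adjacent in the standard streaming sense differ at exactly one timestamp $t$, proving $\epsilon_G$-DP reduces to showing that, for every $t$, the total privacy charged to the data at $t$ never exceeds $\epsilon_G$. I would therefore model the accountant as running one adaptive privacy filter (as in \Thm\ref{thm:rdp_filter_adaptive_concurrent_corollary}) per timestamp, and interpret each \textproc{PrivacyAccountant.pay}$(I,\epsilon)$ call as debiting the filters of all timestamps contained in the nodes of $I$, a block-composition pattern used in prior streaming DP systems~\cite{sage-tr,privatekube}.

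Next I would verify, mechanism by mechanism, that the actual DP cost on the affected partitions matches the amount charged. There are three cases: (i) SV initialization on $I_{SV}$, (ii) the SV failure branch including the fresh Laplace release, and (iii) the per-node Laplace invocations in the bypass branch. Because \textproc{LargestContiguousSubset} forces $I_{SV}$ to be contiguous, the subqueries $\{q_i\}_{i\in I_{SV}}$ combine via \textproc{Aggregate} into a single linear query on the concatenated dataset with sensitivity $1/n_{SV}$; the SV of~\cite{lyu_sv} with cut-off $c=1$ is then $3\epsilon_{SV}$-DP at initialization and a further $\epsilon_{SV}$-DP on failure by basic composition with the subsequent Laplace release, matching the sum of charges billed to every timestamp in $I_{SV}$. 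Case (iii) is immediate from the Laplace mechanism applied partition-by-partition, and since the nodes in $I_{Lap}$ cover disjoint timestamps, parallel composition places each $\epsilon_{Lap}$ charge on a distinct partition filter without cross-contamination.

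The hardest part, I expect, will be handling concurrent SV instances cached in $S$ whose ranges overlap: a single timestamp can simultaneously lie under several live SVs, and each SV's adaptive test depends on outcomes of earlier mechanisms and on the user-supplied query stream. I would resolve this by arguing at the timestamp level: each per-timestamp filter sees the interleaved sequence of SV tests, SV-failure Laplace releases, and bypass Laplace releases as an adaptive composition of pure-DP mechanisms, and the concurrent adaptive filter result \Thm\ref{thm:rdp_filter_adaptive_concurrent_corollary} lets me simply sum their charges at that timestamp until the $\epsilon_G$ ceiling is reached. A minor subtlety worth spelling out is that the \textproc{Aggregate} and renormalization steps producing $r_H$ and $r_{SV}^*$ are deterministic post-processings that do not alter the sensitivity bound used in the SV analysis, and that the heuristic branching in \textproc{IsHistogramReady} only reads the histograms (which are themselves post-processings of prior DP outputs) and so incurs no additional privacy cost.
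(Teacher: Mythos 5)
Your proof is correct and takes the same route as the paper: the paper's one-line argument simply appeals to block composition~\cite{sage-tr} over per-partition filters, which is exactly your per-timestamp accountant construction with \Thm\ref{thm:rdp_filter_adaptive_concurrent_corollary} handling the concurrency of live SV instances that span the same timestamp. Your mechanism-by-mechanism verification of the SV initialization, SV-failure-plus-Laplace, and bypass-Laplace charges faithfully tracks the accounting in \A\ref{alg:pmwbypass-tree}, including the deferred re-initialization fee after an SV failure.
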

\begin{proof}
    \Pmwbypasstree uses the same DP mechanisms as \pmwbypass, but on different and potentially overlapping subsets of the data.
    To compose these mechanisms we use block composition \cite{sage-tr} instead of a simple privacy filter.
\end{proof}

\begin{theorem}[Accuracy of \Pmwbypasstree]
    \label{thm:pmwbypasstree:accuracy}
    \Pmwbypasstree is $(\alpha,\beta)$-accurate for each query it answers.
\end{theorem}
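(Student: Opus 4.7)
The plan is to decompose the output $R$ into its two contributions---the aggregate from the SV branch over $I_{SV}$ and the aggregate from the Laplace branch over $I_{Lap}$---bound the error of each separately to $\alpha$ with failure probability $\beta/2$, and combine via a union bound plus the fact that $\textproc{Aggregate}$ is a convex combination with weights $n_i/\sum_j n_j$. Concretely, if $q^*_i = q_i(\textproc{Data}_i)$, then $q^* = q(\textproc{Data}) = (n_{SV}\, r_{SV}^* + n_{Lap}\, r_{Lap}^*)/(n_{SV}+n_{Lap})$ where $r_{SV}^*, r_{Lap}^*$ are the true aggregates on each branch. By convexity of the final Aggregate, $|R - q^*| \le \max(|r_{SV} - r_{SV}^*|, |r_{Lap} - r_{Lap}^*|)$, so it suffices to $\alpha$-control each branch.

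For the Laplace branch, I would unfold $r_{Lap} - r_{Lap}^* = \frac{1}{n_{Lap}} \sum_{i \in I_{Lap}} n_i \cdot \lap{1/(\epsilon_{Lap} n_i)}$, and use the scaling property of the Laplace distribution to rewrite this as $\frac{1}{n_{Lap}} \sum_{i \in I_{Lap}} \lap{1/\epsilon_{Lap}}$. The calibration $\textproc{CalibrateBudgetLaplace}$ picks $\epsilon_{Lap}$ precisely so that $\Pr[|\sum_i \lap{1/\epsilon_{Lap}}| > n_{Lap}\alpha] < \beta/2 - \beta_{MC}(N)$; folding in the Monte Carlo slack $\beta_{MC}(N)$ yields $|r_{Lap} - r_{Lap}^*| \le \alpha$ with probability at least $1-\beta/2$.

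For the SV branch, the key reduction is to view the joint SV check as a single \pmwbypass operation on a ``virtual'' database of size $n_{SV} = \sum_{i \in I_{SV}} n_i$ with the aggregated linear query $q_{SV} := \sum_{i \in I_{SV}} (n_i/n_{SV})\, q_i$. Because each $q_i$ takes values in $[0,1]$, so does $q_{SV}$, so its sensitivity on the combined data is $1/n_{SV}$, matching the Laplace scale $1/(\epsilon_{SV} n_{SV})$ used in both the SV threshold and the fallback answer. Contiguity of $I_{SV}$, enforced by \textproc{LargestContiguousSubset}, ensures the virtual database is a well-defined block, so block composition accounts for it correctly and the sensitivity argument applies. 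Setting $\epsilon_{SV} = 4\ln(2/\beta)/(n_{SV}\alpha)$ is exactly the calibration of \Thm\ref{thm:pmwbypass:accuracy} with $\beta \leftarrow \beta/2$; applying \Lem\ref{lem:pmw-accuracy} to the virtual query then gives $|r_{SV} - r_{SV}^*| \le \alpha$ with probability $1-\beta/2$, covering both the SV-pass case (where the noisy histogram estimate survives the threshold check) and the SV-fail case (where the Laplace-sampled answer is returned).

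The main obstacle is the SV-branch reduction: I must justify that (i) jointly SV-checking an aggregate over several sub-histograms is equivalent to running one \pmwbypass on the concatenated block with query $q_{SV}$, and (ii) the update rule---which pushes every sub-histogram in $I_{SV}$ in the same direction determined by the sign of $r_{SV} - r_H$---does not break this equivalence for the purposes of \emph{accuracy} (privacy is handled by \Thm\ref{thm:pmwbypasstree:privacy}). Once this reduction is in place, the remainder is a routine union bound over the two branches, with the edge cases $I_{SV} = \emptyset$ or $I_{Lap} = \emptyset$ handled trivially since the corresponding branch contributes zero error probability.
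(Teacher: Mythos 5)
Your proposal is correct and follows essentially the same route as the paper's proof: decompose $R$ into the SV-branch and Laplace-branch aggregates, bound each to error $\alpha$ with failure probability $\beta/2$ (the Laplace branch via the scaling property and the Monte Carlo calibration, the SV branch by invoking the \pmwbypass per-query accuracy guarantee with $\epsilon_{SV}=4\ln(2/\beta)/n_{SV}\alpha$), and combine with a union bound over the convex combination. The only difference is presentational: you spell out the ``virtual database'' reduction for the SV branch that the paper leaves implicit when it simply cites \Thm\ref{thm:pmwbypass:accuracy}, and you correctly observe that the same-direction histogram updates are irrelevant to per-query accuracy.
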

\begin{proof}


    The binary search for $\epsilon_{Lap}$ gives that with probability $1- \beta_{MC}(N)$ over the Monte Carlo randomness we have:

    $$\Pr[ |\sum_{i=1}^{|I_{Lap}|} \lap{1/\epsilon_{Lap}}| > n_{Lap} \alpha ] < \beta/2 - \beta_{MC}(N)$$

    Since $X \sim \lap{b} \implies kX \sim \lap{kb}$ for $k,b > 0$ we have, with the triangle inequality:

    \begin{align*}
        * & := \Pr[|r_{Lap}^* - r_{Lap}| > \alpha]                                                                \\
          & = \Pr[ |\textproc{Agg}(\{r_i^*, i \in I_{Lap}\}) - \textproc{Agg}(\{r_i, i \in I_{Lap}\})| > \alpha ] \\
          & = \Pr[|\sum_{i \in I_{Lap}} \frac{n_i}{n_{Lap}} (r_i^*- r_i)| > \alpha]                               \\
          & = \Pr [|\sum_{i \in I_{Lap}} \frac{n_i}{n_{Lap}} \lap{1/\epsilon_{Lap}n_i}| > \alpha ]                \\
          & = \Pr[ |\sum_{i=1}^{|I_{Lap}|} \lap{1/\epsilon_{Lap}}| > n_{Lap} \alpha ]                             \\
          & < \beta/2 - \beta_{MC}(N)
    \end{align*}

    A union bound gives that with probability $1- (\beta_{MC}(N) + \beta/2 - \beta_{MC}(N))$ over the simulation and the Laplace randomness we have:

    $$ \Pr[|r_{Lap}^* - r_{Lap}| > \alpha]    < \beta/2$$
    Thanks to Theorem \ref{thm:pmwbypass:accuracy}, for $\epsilon_{SV} = 4\ln(2/\beta)/n\alpha$ we have $\Pr[|r_{SV} - r_{SV}^*| > \alpha] < \beta/2$.

    Finally, a union bound and the fact that $n_{SV} + n_{Lap} = n$ gives:

    \begin{align*}
        * & := \Pr[|R^* - R| > \alpha]                                                                        \\
          & = \Pr[| \frac{n_{Lap}}{n} (r_{Lap}^* - r_{Lap}) +\frac{n_{SV}}{n}(r_{SV}^* - r_{SV}) | >\alpha  ] \\
          & \le \Pr[ \frac{n_{Lap}}{n} |r_{Lap}^* - r_{Lap}| +\frac{n_{SV}}{n}|r_{SV}^* - r_{SV}|> \alpha]    \\
          & \le \Pr[ \{|r_{Lap}^* - r_{Lap}| > \alpha\} \cup \{|r_{SV}^* - r_{SV}|> \alpha\}]                 \\
          & \le \beta/2 + \beta/2 = \beta
    \end{align*}

\end{proof}

\begin{theorem}[Convergence of \Pmwbypasstree on a bounded number of partitions]
    \label{thm:pmwbypasstree:convergence_bounded}
    Consider a partitioned dataset with $T = 2^m$ timestamps, with $m \in \mathbb{N}^*$, containing $n$ datapoints.
    We note $n_{\min}$ the size of the smallest partition.
    At each update $t$, each node $i = (a,b) \in \mathcal{I}$ of the \Pmwbypasstree contains a histogram $h_t^i$ approximating the true distribution $p_i$ of the data on timestamp range $[a,b]$.
    For $i \in \mathcal{I}$ we note $\lambda_i := \frac{n_i}{(m+1)n}$.

    Let $\betaconv \in (0,1)$ be a parameter for per-update failure probability.
    Suppose that $\eta/2\alpha + 2m \frac{\ln(1/\betaconv) }{\ln(2/\beta)} < \tau \le 1/2$.
    Then:

    \begin{enumerate}
        \item After each update we have, with probability $1-2m \betaconv$:

              \begin{align*} \sum_{i \in \mathcal{I}} \lambda_{i} \big\{ D(p^{i} \parallel h_{t+1}^{i}) - D(p^{i} \parallel h_{t}^{i}) \big\} \\
                  \le -   \frac{\eta n_{\min}}{(m+1)n}\left(\tau \alpha - 2m \frac{\ln(1/\betaconv)\alpha }{\ln(2/\beta)} -  \eta /2 \right)
              \end{align*}

        \item Moreover, for $k$ an upper bound on the number of queries over the whole tree, with probability $1-2km \betaconv$ the number of updates we perform is at most:
              $$\frac{(m+1) n\ln |\mathcal{X}|}{\eta n_{\min} ( \tau \alpha  - 2m \frac{\ln(1/\betaconv)\alpha }{\ln(2/\beta)} - \eta/2)}$$
        \item If $n_{\min} = n/T$ (partitions with equal size), we can set $\betaconv$ such that if $\eta/\alpha < \tau \le 1/2$ we have at most $\frac{(m+1) T \ln |\mathcal{X}|}{\eta ( \tau \alpha - \eta)/2}$ updates.
    \end{enumerate}

\end{theorem}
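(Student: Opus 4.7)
The plan is to lift the single-histogram potential argument from Theorem~\ref{thm:pmwbypass:convergence} to a weighted sum of KL divergences over the tree. Define
$$\Phi_t := \sum_{i \in \mathcal{I}} \lambda_i D(p^i \| h_t^i),$$
with $\lambda_i = n_i/((m+1)m)$ chosen so that the aggregate signal observed by the SV lines up precisely with the weighted per-node contributions. I would then analyze the decrease in $\Phi_t$ separately for the two update modes generated by \A\ref{alg:pmwbypass-tree}: SV-branch updates, where all $i \in I_{SV}$ are modified with a common sign $s_t$, and Bypass-branch updates, where a single node $i \in I_{Lap}$ is modified based on its own Laplace result.

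For an SV-branch update, applying the single-node multiplicative-weight inequality (equation~\ref{eq:potential_decrease}) to each $i \in I_{SV}$ and summing with weights $\lambda_i$ gives
$$\sum_{i \in I_{SV}} \lambda_i \bigl[ D(p^i\|h_{t+1}^i) - D(p^i\|h_t^i) \bigr] \le \frac{n_{SV}}{(m+1)m}\bigl[-s_t\eta(r_{SV}^* - r_H) + \eta^2/2\bigr],$$
using the identity $\sum_{i \in I_{SV}} \lambda_i(q_i\cdot p^i - q_i\cdot h^i) = \frac{n_{SV}}{(m+1)m}(r_{SV}^*-r_H)$. Invoking the SV failure condition, the calibration $\epsilon_{SV} = 4\ln(2/\beta)/(n_{SV}\alpha)$, and a union bound over the three SV Laplace noises ($X$, $Y$, $Z$ from \S\ref{sec:appendix:notation}), I would conclude that whp $-s_t(r_{SV}^*-r_H) \le -(\alpha/2 - \alpha\ln(1/\betaconv)/(2\ln(2/\beta)))$ and that $s_t$ has the right sign, yielding the required per-update decrease.

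For a Bypass-branch update, the same per-node inequality applied to the unique updated $i \in I_{Lap}$, combined with the Bypass threshold $|r_i - q_i(h_i)| > \tau\alpha$ and a Laplace tail bound on $\lap{1/\epsilon_{Lap} n_i}$, gives an analogous signal bound. The $2m$ factor in the theorem's condition on $\tau$ traces to this branch: because $\textproc{CalibrateBudgetLaplace}$ must keep the weighted sum of up to $|I_{Lap}| \le 2m$ independent Laplace variables within $\alpha$ whp, the per-node noise magnitude inflates by a factor of $O(m)$ relative to the single-histogram case. Scaling the per-node bound by $\lambda_i$ and summing matches the SV-case bound of $\frac{\eta}{m+1}(\tau\alpha - 2m\alpha\ln(1/\betaconv)/\ln(2/\beta) - \eta/2)$, establishing item~(1) whp.

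To conclude items~(2) and~(3), I would telescope the per-update bound over the $t$ updates induced by a stream of at most $k$ queries, use $D(p^i\|h_t^i) \ge 0$, and bound $\Phi_0 \le \ln|\mathcal{X}|$ via $D(p^i\|h_0^i) \le \ln|\mathcal{X}|$ under the normalization built into $\lambda_i$. A union bound over at most $2m$ noise variables per update and $k$ queries delivers the stated $1 - 2km\betaconv$ failure probability, and dividing $\Phi_0$ by the per-update decrease gives the $(m+1)\ln|\mathcal{X}|/(\eta(\tau\alpha - \cdots))$ bound. Item~(3) then follows by choosing $\betaconv$ so that $2m\alpha\ln(1/\betaconv)/\ln(2/\beta) \le \eta/2$, collapsing the condition on $\tau$ to $\eta/\alpha < \tau \le 1/2$. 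The main obstacle will be the tree-specific noise bookkeeping: tracing the $2m$ factor cleanly through $\epsilon_{Lap}$, verifying that the aggregate SV sign correctly orients all simultaneous updates on $I_{SV}$, and composing the SV-branch and Bypass-branch probabilistic guarantees without blowing the failure budget across the nodes updated per query.
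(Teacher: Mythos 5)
Your proposal follows the paper's proof essentially step for step: you use the same weighted potential $\Phi_t=\sum_i\lambda_i D(p^i\|h_t^i)$, decompose updates into SV-branch and Bypass-branch, prove the SV-branch decrease by summing the per-node multiplicative-weight inequality and invoking the sign argument on the \emph{aggregate} $r_{SV}^*-r_H$ (which is exactly how the paper handles the subtlety that the common SV sign need not be correct per node), bound the Bypass-branch contribution via the calibration lower bound on $\epsilon_{Lap}$ and $|I_{Lap}|\le 2m$, and telescope using $\Phi_0\le\ln|\mathcal{X}|$ from $\sum_i\lambda_i=1$. The only deviations are cosmetic (a stray constant $2$ vs.\ $4$ in the SV error term, and a flipped $s_t$ sign that actually matches the algorithm's convention rather than the typo in the paper's own bullet), so the argument is sound and no genuinely different route is taken.
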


\begin{proof}

    First, we have $\sum_{i \in \mathcal{I}} \lambda_i = \frac{\sum_i n_i}{(m+1)n} = 1$ because each of the $m+1$ layers of the binary tree contains nodes covering the $n$ datapoints of the range $[0,T-1]$.

    Consider an update $t$. Any node $i$ belongs to one of three sets:
    \begin{enumerate}
        \item If $i$ hasn't been updated, $h_{t+1}^{i} = h_{t}^{i}$ and  $D(p^{i} \parallel h_{t+1}^{i}) =  D(p^{i} \parallel h_{t}^{i})$.
        \item If $i$ has been updated by a Bypass branch with sign $s_t^{i}$, Equation \ref{eq:potential_decrease_without_sign} shows that with probability $1- \betaconv$ we have:
              \begin{align}
                  D(p^{i} \parallel h_{t+1}^{i}) & -  D(p^{i} \parallel h_{t}^{i}) \le -\eta (\tau \alpha - \ln(1/\betaconv)/n_i\epsilon_{Lap}^t) + \eta^2/2                         \nonumber  \\
                                                 & \le -\eta\tau \alpha + \eta \frac{\ln(1/\betaconv)}{n_i\frac{\ln(2/\beta)}{n_{Lap} \alpha}} + \eta^2/2                             \nonumber \\
                                                 & = -\eta\tau \alpha + \eta \frac{n_{Lap} \ln(1/\betaconv)\alpha }{n_i\ln(2/\beta)} + \eta^2/2 \label{eq:potential_decrease_laplace}
              \end{align}

              where $\frac{\ln(2/\beta)}{n_{Lap} \alpha}$ is a lower bound on $\epsilon_{Lap}$ because \\
              $ \Pr[\sum_{i \in \mathcal{I}_{Lap}} |\lap{ \frac{n_{Lap} \alpha}{\ln(2/\beta)}}| >  n_{Lap} \alpha] \ge \\ \Pr[ |\lap{ \frac{n_{Lap} \alpha}{\ln(2/\beta)}}| >  n_{Lap} \alpha] = \beta /2$.

        \item Otherwise, $i$ has been updated by the SV branch.  We detail this case below.
    \end{enumerate}

    Note $s_t^{(SV)}$ the sign of the global update.
    For each $i \in I_{SV}$, Equation \ref{eq:potential_decrease} gives:

    \begin{align}
        \label{eq:potential_decrease_sv_node}
        D(p^{i} \parallel h_{t+1}^{i}) \le  D(p^{i} \parallel h_{t}^{i}) - s_t^{(SV)} \eta (q \cdot p^{i} - q \cdot h_t^{i})  + \eta^2/2
    \end{align}

    However, we can't reuse Equation \ref{eq:update_sign} here to show that the potential $ D(p^{i} \parallel h_{t}^{i})$ of every single node decreases.
    Indeed, a single SV is used to update all the nodes in $I_{SV}$ in the same direction, so $s_t^{(SV)} (q \cdot p^{i} - q \cdot h_t^{i})$ can have either sign.
    In other words, if one node $i \in I_{SV}$ is close to the answer but others are far from it, $i$ might witness an {\em increase} in potential because the SV check considers only the aggregated state of the nodes.
    Instead, we show that the combined potential of $I_{SV}$ decreases.

    We note $\lambda := \sum_{i \in I_{SV}} \lambda_i$, $\forall i \in I_{SV}, \lambda_i' := \lambda_i/\lambda$, $p^{SV} :=  \sum_{i \in I_{SV}} \lambda_i' p^i$ and $h_t^{SV} :=  \sum_{i \in I_{SV}} \lambda_i' h_t^i$.
    Thanks to Equation \ref{eq:potential_decrease_sv_node} we have:
    \begin{align*}
        * & := \sum_{i \in I_{SV}} \lambda_i' D(p^{i} \parallel h_{t+1}^{i})                                                                              \\
          & \le  \sum_{i \in I_{SV}}\lambda_i'  \left(D(p^{i} \parallel h_{t}^{i}) - s_t^{(SV)} \eta (q \cdot p^{i} - q \cdot h_t^{i})  + \eta^2/2\right) \\
          & = \left(\sum_{i \in I_{SV}}\lambda_i'  D(p^{i} \parallel h_{t}^{i})\right) - s_t^{(SV)} \eta (q \cdot p^{SV} - q \cdot h_t^{SV}) +\eta^2/2
    \end{align*}

    because $\sum_{i \in I_{SV}} \lambda_i'  = 1$ and $q$ is a linear query.

    Since $q\cdot p^{SV}$ is the true result of the query on $I_{SV}$, $q \cdot h_t^{SV}$ is the combined histogram estimate and $\tau <1/2$, we can apply Equation \ref{eq:update_sign}.
    With probability $1-\betaconv$ we have:
    $
        \sum_{i \in I_{SV}} \lambda_i' D(p^{i} \parallel h_{t+1}^{i}) \le \left(\sum_{i \in I_{SV}}\lambda_i'  D(p^{i} \parallel h_{t}^{i})\right) -\eta (\tau \alpha - \ln(1/\betaconv)/n_{SV}\epsilon_{SV}^t) + \eta^2/2
    $
    \ie:

    \begin{align}
        \label{eq:potential_decrease_sv_combined}
        \sum_{i \in I_{SV}} \lambda_i' D(p^{i} \parallel h_{t+1}^{i}) - \left(\sum_{i \in I_{SV}}\lambda_i'  D(p^{i} \parallel h_{t}^{i})\right) \nonumber \\ \le -\eta \tau \alpha + \frac{\eta \ln(1/\betaconv) \alpha}{4\ln(2/\beta)} + \eta^2/2
    \end{align}

    since $\epsilon_{SV} = 4\ln(2/\beta)/n_{SV}\alpha$.\\

    Now we can bound the global drop in potential.
    Equations \ref{eq:potential_decrease_sv_combined} and \ref{eq:potential_decrease_laplace} followed by a union bound show that, with probability $1 - (|I_{Lap}| + 1) \betaconv$ we have:

    \begin{align}
        * & := \sum_{i \in \mathcal{I}} \lambda_{i} D(p^{i} \parallel h_{t+1}^{i}) -  \sum_{i \in \mathcal{I}} \lambda_{i} D(p^{i} \parallel h_{t}^{i})                          \nonumber \\
          & \le \sum_{i \in I_{Lap}} \lambda_{i}(-\eta\tau \alpha + \eta \frac{n_{Lap} \ln(1/\betaconv)\alpha }{n_i\ln(2/\beta)} + \eta^2/2) +                                   \nonumber \\
          & \sum_{i \in I_{SV}}\lambda \cdot  \lambda_i' D(p^{i} \parallel h_{t+1}^{i}) - \left(\sum_{i \in I_{SV}}\lambda \cdot \lambda_i'  D(p^{i} \parallel h_{t}^{i})\right) \nonumber \\
          & \frac{n_{SV}}{n(m+1)}\left(-\eta \tau \alpha + \frac{\eta \ln(1/\betaconv) \alpha}{4\ln(2/\beta)} + \eta^2/2\right)
        \label{eq:potential_drop_sv_lap}
    \end{align}

    We have  $1/4 \le 2m$ and $|I_{Lap}| \le 2m$, because $2m$ is the maximum number of nodes required to cover a contiguous range of $[0,2^m-1]$.
    Since we have $\tau \alpha -  2m \frac{\ln(1/\betaconv)\alpha }{\ln(2/\beta)} - \eta/2 > 0$ by assumption,
    Equation \ref{eq:potential_drop_sv_lap} becomes:

    \begin{align*}
        * & \le -\eta \cdot \frac{n_{SV} + n_{Lap}}{n(m+1)} \cdot (\tau \alpha -  2m \frac{\ln(1/\betaconv)\alpha }{\ln(2/\beta)} - \eta/2) \\
          & \le -\eta \cdot \frac{n_{\min}}{n(m+1)}\cdot (\tau \alpha -  2m \frac{\ln(1/\betaconv)\alpha }{\ln(2/\beta)} - \eta/2)
    \end{align*}

    where the last inequality comes from $n_{SV} + n_{Lap} \ge n_{\min}$.

    Now, let's use this per-update potential drop to obtain a global bound on the number of updates.
    By convexity and positivity of the relative entropy \cite{entropy}, we have:

    $$0 \le D\left(\sum_{i} \lambda_{i} p^{i} \parallel \sum_{i} \lambda_{i} h^{i}_{t+1}\right) \le  \sum_{i} \lambda_{i} D(p^{i} \parallel h_{t+1}^{i}) $$

    If we have at most $k$ queries we have, with probability $1-k\cdot 2m \betaconv$:

    \begin{align*}
        t \le \frac{\ln |\mathcal{X}|}{\eta ( \tau \alpha  - 2m \frac{\ln(1/\betaconv)\alpha }{\ln(2/\beta)} - \eta/2)}\frac{n(m+1)}{n_{\min}}
    \end{align*}

    with the same reasoning as in Equation \ref{eq:convergence_sum} and the observation that $\sum_{i} \lambda_{i} D(p^{i} \parallel h_{0}^{i}) \le \ln |\mathcal{X}|$.

\end{proof}

\def\tmax{T}
\def\kmax{m}

\begin{theorem}[Convergence of \Pmwbypasstree on an unbounded number of partitions]
    \label{thm:pmwbypasstree:convergence}


    Suppose that we have only a bound $\tmax = 2^{\kmax}$ on the number of contiguous partitions a query can request (but not necessarily any bound on the total number of partitions in the database).
    Take $\nu_{\min}$ the smallest fraction of datapoints in one partition from any contiguous set of $2T$ partitions ($\nu_{\min} = 1/2T$ if all the partitions hold the same number of points).

    We change $\textproc{SplitQuery}$ and the histogram structure in \A\ref{alg:pmwbypass-tree} as follows:
    \begin{itemize}
        \item  For $\kappa \in \mathbb{N}$ we note $\mathcal{I}_\kappa$ the binary tree of depth $\kmax + 1$ whose leftmost leaf is $\kappa\tmax$ (thus covering $[\kappa \tmax, (\kappa + 2) \tmax -1]$):
              $\mathcal{I}_\kappa := \{(\kappa, \kappa \tmax + a, \kappa \tmax + b) \in \{\kappa \} \times [0, 2\tmax-1] \times  [0, 2\tmax-1] | \exists k \in [0,\kmax+1] : b - a + 1 = 2^k \wedge a \equiv 0 \mod 2^k\}$
        \item $\mathcal{I}$ becomes $\cup_{\kappa \in \mathbb{N}} \mathcal{I}_\kappa$. Note that these trees overlap: for $(a,b) \in [\kappa\tmax, \kappa\tmax + \tmax-1]^2$ with $\kappa \ge 1$ we have $(\kappa,a,b) \in \mathcal{I}$ but also $(\kappa -1, a, b) \in \mathcal{I}_{\kappa - 1}$.
        \item Consider a query requesting a range $(a,b) \in [\kappa\tmax, \kappa\tmax + \tmax-1] \times [\kappa \tmax, \kappa \tmax + 2\tmax -1]$ with $b-a+1 \le \tmax$. As noted above, $(a,b)$ might be covered by 2 trees (some windows are covered by a single tree, namely if $a < (\kappa+1)\tmax \le b$), so by convention we pick the rightmost tree. $\textproc{SplitQuery}$ returns $I \subseteq \mathcal{I}_{\kappa}$ the smallest set of nodes covering $[a,b]$.
    \end{itemize}

    Then, for all $\kappa \in \mathbb{N}$, for $k$ an upper bound on the number of queries allocated to $\mathcal{I}_\kappa$, with probability $1-2k(m+1) \betaconv$ the number of updates we perform on $\mathcal{I}_\kappa$ is at most:
    $$\frac{(m+2)\ln |\mathcal{X}|}{\eta \nu_{\min} ( \tau \alpha  - 2(m+1) \frac{\ln(1/\betaconv)\alpha }{\ln(2/\beta)} - \eta/2)}$$

\end{theorem}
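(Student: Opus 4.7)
The key observation is that the rightmost-tree convention in $\textproc{SplitQuery}$ allocates every incoming query $(a,b)$ with $b-a+1 \le T$ to exactly one $\mathcal{I}_\kappa$, so the histogram updates it triggers touch only nodes of that single tree. Consequently, the update sequence inside any fixed $\mathcal{I}_\kappa$ can be analyzed in isolation from the rest of the system, and the unbounded theorem reduces to a per-tree application of Theorem~\ref{thm:pmwbypasstree:convergence_bounded} with the substitution $m \leftarrow m+1$ (since each $\mathcal{I}_\kappa$ is a binary tree of depth $m+1$ spanning $2T = 2^{m+1}$ partitions, hence $m+2$ layers).

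Concretely, fix $\kappa$ and restrict attention to the subsequence of queries allocated to $\mathcal{I}_\kappa$. I would replay the potential argument of Theorem~\ref{thm:pmwbypasstree:convergence_bounded} on $\mathcal{I}_\kappa$ with the weights $\lambda_i := n_i / ((m+2) n_\kappa)$, where $n_\kappa$ denotes the total datapoints in the range $[\kappa T, (\kappa+2)T-1]$; this choice normalizes $\sum_{i \in \mathcal{I}_\kappa} \lambda_i = 1$ exactly as before. The convexity/positivity facts $0 \le D(\sum_i \lambda_i p^i \| \sum_i \lambda_i h_t^i) \le \sum_i \lambda_i D(p^i \| h_t^i)$ and $\sum_i \lambda_i D(p^i \| h_0^i) \le \ln|\mathcal{X}|$ transfer verbatim with $h_0^i$ uniform.

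For the per-update drop, the case split is unchanged: nodes are either untouched, updated via the Bypass branch (Equation~\ref{eq:potential_decrease_laplace}), or updated via the SV branch (Equation~\ref{eq:potential_decrease_sv_combined}). The only arithmetic change is the bound on $|I_{Lap}|$: a contiguous range of at most $T = 2^m$ partitions inside a tree of depth $m+1$ is coverable by at most $2(m+1)$ nodes in the worst case, so the aggregated per-update decrease becomes
\begin{align*}
\sum_{i \in \mathcal{I}_\kappa} \lambda_i \bigl\{ D(p^i \| h_{t+1}^i) - D(p^i \| h_t^i) \bigr\} \le \tfrac{1}{m+2}\bigl(-\eta\tau\alpha + 2(m+1)\eta \tfrac{\ln(1/\betaconv)\alpha}{\ln(2/\beta)} + \eta^2/2\bigr)
\end{align*}
with per-update failure probability $2(m+1)\betaconv$ (coming from the union bound over $|I_{Lap}| + 1 \le 2(m+1)$ sub-updates). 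Telescoping across updates and a union bound over the $k$ queries allocated to $\mathcal{I}_\kappa$ then gives the stated bound $(m+2)\ln|\mathcal{X}| / \bigl(\eta(\tau\alpha - 2(m+1)\ln(1/\betaconv)\alpha/\ln(2/\beta) - \eta/2)\bigr)$ with confidence $1 - 2k(m+1)\betaconv$.

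The main technical point to verify is that the per-tree decoupling is genuinely clean. Two subtleties arise: (i) the SV state is keyed by the subset $I_{SV}$, which is always a subset of a single $\mathcal{I}_\kappa$ because $\textproc{SplitQuery}$ returns nodes from one tree, so distinct trees maintain distinct SV instances and no privacy-accountant state leaks across trees; and (ii) although a physical data partition may lie inside two adjacent trees $\mathcal{I}_{\kappa-1}$ and $\mathcal{I}_\kappa$, each Laplace query issued against that data uses fresh independent noise, so the per-tree signs $s_t$ inherit independent randomness and the per-tree potential argument proceeds unchanged. Once these two decoupling facts are observed, the rest of the proof is the substitution-based application sketched above and does not require any new analytical ingredient beyond what was established for the bounded case.
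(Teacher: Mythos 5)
Your proposal is correct and follows essentially the same route as the paper: fix a tree $\mathcal{I}_\kappa$, observe that the rightmost-tree convention in $\textproc{SplitQuery}$ means every update to a histogram in $\mathcal{I}_\kappa$ comes from a query allocated to that tree alone, and then apply \Thm\ref{thm:pmwbypasstree:convergence_bounded} to $\mathcal{I}_\kappa$ viewed as a bounded tree over $2T = 2^{m+1}$ partitions, which substitutes $m \leftarrow m+1$ in the bound. The paper states the reduction in three lines and leaves the replayed potential argument implicit; your write-up spells out the same substitution and the two decoupling observations that justify it, but adds no new technical ingredient.
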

\begin{proof}


    Consider $\kappa \in \mathbb{N}$.
    Consider the set of queries allocated to $\mathcal{I}_\kappa$.
    They all request data from a partitioned dataset with $2\tmax$ timestamps.
    Moreover, any update to a histogram in  $\mathcal{I}_\kappa$ must come from that set of queries.
    Hence we can apply \Thm\ref{thm:pmwbypasstree:convergence_bounded} on a tree of size $2\tmax$.
\end{proof}

\subsection{Warm-start}

\begin{theorem}[Warm-start]
    \label{thm:warm-start}

    Consider \A\ref{alg:pmw-bypass} using for histogram initialization a distribution $h_0$ instead of uniform.
    Suppose there exists $\lambda \ge 1$ such that $\forall x \in \mathcal{X}, h_0(x) \ge \frac{1}{\lambda |\mathcal{X}|}$.
    Then if  $\eta/\alpha < \tau \le 1/2$ the number of updates we perform is at most:

    $$\frac{\ln(\lambda |\mathcal{X}|)}{ \eta(\tau\alpha - \eta)/2}$$

    All the other convergence results from \Thm\ref{thm:pmwbypass:convergence}, \ref{thm:pmwbypasstree:convergence_bounded} and \ref{thm:pmwbypasstree:convergence} hold with $ \ln(\lambda |\mathcal{X}|)$ instead of $ \ln(|\mathcal{X}|)$.

\end{theorem}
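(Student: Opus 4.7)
The plan is to observe that the warm-start theorem is fundamentally a re-bound on the initial potential, with the rest of the potential-decrease machinery inherited unchanged from Theorems \ref{thm:pmwbypass:convergence}, \ref{thm:pmwbypasstree:convergence_bounded} and \ref{thm:pmwbypasstree:convergence}. The key observation is that in those proofs, $\ln|\mathcal{X}|$ enters at exactly one point: the upper bound on $D(p \| h_0)$ where $h_0$ is taken to be uniform over $\mathcal{X}$. Every other step (the per-update decrease $-\eta(\tau\alpha - \ln(1/\betaconv)/n\epsilon) + \eta^2/2$, the sign analysis of the multiplicative update, and the union bound) depends only on $p$, $q_t$, and the update dynamics, not on the initialization. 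So the strategy is: bound $D(p \| h_0)$ under the warm-start assumption, then plug the new bound into the existing telescoping argument.

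First I would establish the replacement bound on initial potential. By the warm-start hypothesis, $1/h_0(x) \le \lambda|\mathcal{X}|$ for all $x \in \mathcal{X}$. Therefore
\begin{align*}
D(p \| h_0) &= \sum_x p(x) \ln \frac{p(x)}{h_0(x)} \\
&\le \sum_x p(x) \bigl[\ln p(x) + \ln(\lambda|\mathcal{X}|)\bigr] \\
&\le \ln(\lambda|\mathcal{X}|),
\end{align*}
where the last inequality uses $p(x)\ln p(x) \le 0$ on $[0,1]$ together with $\sum_x p(x) = 1$, exactly as in the uniform case.

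Next I would plug this bound into Equation~\eqref{eq:convergence_sum} of the proof of Theorem~\ref{thm:pmwbypass:convergence}. That equation upper-bounds the number of updates $t$ by $D(p \| h_0)$ divided by the per-update potential decrease. Replacing $\ln|\mathcal{X}|$ by $\ln(\lambda|\mathcal{X}|)$ in the numerator yields the stated bound $\frac{\ln(\lambda|\mathcal{X}|)}{\eta(\tau\alpha - \eta)/2}$ under the condition $\eta/\alpha < \tau \le 1/2$ (which is the specialization from part~3 of Theorem~\ref{thm:pmwbypass:convergence} obtained by setting $\betaconv = \exp(-\tau n \alpha \epsilon/2)$).

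For the tree-structured extensions (Theorems~\ref{thm:pmwbypasstree:convergence_bounded} and~\ref{thm:pmwbypasstree:convergence}), the same substitution propagates transparently: at each node $i \in \mathcal{I}$ the initial potential $D(p^i \| h_0^i)$ is bounded by $\ln(\lambda|\mathcal{X}|)$ by the same argument above, and since $\sum_i \lambda_i = 1$ we get $\sum_i \lambda_i D(p^i \| h_0^i) \le \ln(\lambda|\mathcal{X}|)$. This is the only place where the uniform initialization enters the proof of those theorems, so replacing $\ln|\mathcal{X}|$ by $\ln(\lambda|\mathcal{X}|)$ throughout gives the claimed bounds. I do not anticipate any real obstacle: the warm-start assumption is specifically chosen so that $h_0$ differs from uniform only by a bounded multiplicative factor per bin, which is exactly what is needed to keep the KL-divergence finite and close to its uniform value. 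The only mild care required is to note that the warm-start hypothesis $h_0(x) \ge 1/(\lambda|\mathcal{X}|)$ must hold for the \emph{actual} starting histogram used by \A\ref{alg:pmw-bypass}, which is guaranteed in \sysname by either leaf-copy or child-averaging procedures described in \S\ref{sec:warm-start}, both of which preserve this lower bound whenever each source histogram satisfies it.
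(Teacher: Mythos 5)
Your proposal is correct and matches the paper's proof essentially verbatim: both bound the initial potential by $D(p\|h_0) \le \ln(\lambda|\mathcal{X}|)$ using $h_0(x) \ge 1/(\lambda|\mathcal{X}|)$ together with $\sum_x p(x)\ln p(x) \le 0$, and then substitute this into Equation~\ref{eq:convergence_sum} (and its analogues for the tree case), leaving the per-update potential-drop analysis untouched.
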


\begin{proof}
    The initial relative entropy is:

    \begin{align*}
        D(p \| h_0) & = \sum_x p(x)\ln (p(x)/h(x)) \le  \sum_x p(x)\ln (p(x)\lambda |\mathcal{X}|)      \\
                    & = \ln(\lambda |\mathcal{X}|) + \sum_x p(x)\ln(p(x)) <  \ln(\lambda |\mathcal{X}|)
    \end{align*}

    We can use Equation~\ref{eq:convergence_sum} from \Thm\ref{thm:pmwbypass:convergence} with $ \ln(\lambda |\mathcal{X}|)$ instead of $ \ln(|\mathcal{X}|)$.
\end{proof}

\subsection{Bounding the total privacy budget}
\label{sec:appendix:bounded_budget}

\Thm\ref{thm:pmwbypass:convergence}'s convergence bound is expressed as a maximum number of {\em updates}, \ie number of queries that alter the state of the histogram.
However, this does not directly bound the total privacy budget used to answer a workload, because some queries can cost budget without updating the histogram.
Indeed, apart from the first SV initialization, there are two ways a query $q$ can cost budget: either $q$ goes through the histogram branch in \A\ref{alg:pmw-bypass} and triggers an SV reset, or $q$ goes through the Bypass branch and pays for a Laplace query. While an SV reset always yields an update, some Laplace queries cost budget without triggering an external update (\A\ref{alg:pmw-bypass}, l.\ref{line:external_update}).

Thus, it is theoretically possible to craft a workload that consumes an unbounded amount of budget (up to the privacy filter enforced maximum), by issuing queries that go through the Bypass branch and cost budget without yielding external updates.
We do not observe this phenomenon in our evaluation, but it is straightforward to prevent this problem by simply deactivating the Bypass branch after a predetermined number of queries.
More precisely, in \A\ref{alg:pmw-bypass}, l.\ref{line:heuristic}, we can modify \textsc{Heuristic.IsHistogramReady} to always return \textsc{True} after $k$ queries, for some parameter $k$.
Therefore, at most $k$ queries can cost budget without yielding an update, and the result from \Thm\ref{thm:pmwbypass:convergence} directly bounds the number of times we spend budget, up to a constant factor.
The same reasoning applies to \pmwbypasstree.

\subsection{Gaussian mechanism, R\'enyi DP and Approximate DP}
\label{sec:appendix:gaussian_rdp}

The Gaussian mechanism~\cite{privacybook} has desirable properties to answer workloads of DP queries.
We show how to modify \A\ref{alg:pmw-bypass} to use it as an alternative for the Laplace mechanism.
Since the Gaussian mechanism does not sastify $(\epsilon, 0)$-DP, we first need to introduce a more general privacy accountant.

\heading{RDP accounting.}
\A\ref{alg:pmw-bypass} can be modified to use R\'enyi DP (RDP) \cite{8049725}, a form of accounting that offers better composition than basic composition under pure $(\epsilon, 0)$-DP guarantees (even for workloads of pure DP mechanisms such as the Laplace or the SV mechanisms).
RDP accounting works particularly well for the Gaussian mechanism.
The modifications are as follows:

\begin{itemize}
    \item $\textproc{PrivacyAccountant}$ is an RDP filter (\Thm\ref{thm:rdp_filter_adaptive_concurrent_corollary}) instead of a pure DP filter.
    \item For a Laplace mechanism $\lap{1/\epsilon n}$ on a query with $\ell_1$ sensitivity $1/n$ (like our linear queries), instead of paying $\epsilon$, the budget for each RDP order $a >1$ is:

          $$ \frac{1}{a-1}\ln\left\{ \frac{a}{2a-1}\exp(\epsilon(a-1)) +\frac{a-1}{2a-1} \exp(-\epsilon a)  \right\}$$
          This comes directly from the RDP curve of the Laplace mechanism \cite{8049725}.
    \item For an SV initialization where each internal Laplace uses  $\lap{1/\epsilon n}$, the budget for each RDP order $a >1$ is:

          $$ \frac{1}{a-1}\ln\left\{ \frac{a}{2a-1}\exp(\epsilon(a-1)) +\frac{a-1}{2a-1} \exp(-\epsilon a)  \right\} + 2\epsilon$$

          This comes from the RDP analysis of the SV mechanism \cite{renyi_sv}. More precisely, we use Algorithm 2 of \cite{renyi_sv} with $c=1$, $\Delta =1/n$, $\epsilon_1 = \epsilon$ and $\epsilon_2 = 2\epsilon$.
          $M_\rho = \lap{\Delta/\epsilon_1}$ is $\epsilon_\rho(\alpha) = \epsilon_{\lambda_1}(\alpha)$-RDP for queries with sensitivity $\Delta$, with  $\epsilon_\lambda(\alpha)$ the Laplace RDP curve with $\lambda_1 = 1/\epsilon_1$.
          $M_\nu = \lap{2\Delta/\epsilon_2}$ is $\epsilon_\nu(\alpha) = \epsilon_{\lambda_2}(\alpha)$-RDP for queries with sensitivity $2\Delta$, with $\lambda_2 = 1/\epsilon_2$.
          Since $\epsilon_\rho(\infty) = \epsilon_2 < \infty$ we can use Point 3 of Theorem 8, so SV is $\epsilon_\nu(\alpha) +  \epsilon_2$-RDP.
    \item For a Gaussian mechanism adding noise from $\mathcal{N}(0, \sigma^2/n^2)$ on a query with $\ell_2$ sensitivity $1/n$, the budget for each RDP order $a >1$ is $\frac{a}{2\sigma^2}$.
          The default version of \A\ref{alg:pmw-bypass} does not use any Gaussian mechanism, but we will add some below.
\end{itemize}

Finally, we can use the RDP-to-DP conversion formula \cite{8049725} to obtain $(\epsilon, \delta)$-DP guarantees for \pmwbypass for $\epsilon, \delta >0$.

\heading{Gaussian mechanism.}
\A\ref{alg:pmw-bypass} can now be modified to use the Gaussian mechanism to answer queries directly. We keep the internal Laplace random variables used by the SV protocol, although there exists SV protocols with purely Gaussian noise~\cite{renyi_sv}.
The modifications are as follows:

\begin{itemize}
    \item Line \ref{line:calibrate} becomes  $\epsilon \gets \textproc{CalibrateBudget}(\alpha, \beta); \sigma \gets \textproc{CalibrateBudgetGaussian}(\alpha, n, \epsilon, \tau)$ for a function defined next.
    \item $\textproc{CalibrateBudgetGaussian}(\alpha, n, \epsilon, \tau)$ returns:
          $$\sigma = \frac{\tau \alpha}{\sqrt{18\ln 2 + 3 \tau n \alpha \epsilon}}$$
    \item In Lines \ref{line:R2} and \ref{line:R3}, we replace $\lap{1/\epsilon n}$ by $\mathcal{N}(0, \sigma^2/n^2)$
\end{itemize}

At a high level, we need to calibrate the noise of DP queries answered with the Gaussian mechanism so that they are compatible with the failure probabilities of the Laplace queries. That is, Gaussian mechanism queries need to have the same (or lower) error bound $\alpha$ with the same (or lower) failure probability $\betaconv$.
The following result shows the Gaussian noise variance $\sigma^2/n^2$ to use for this calibration:

\begin{lemma}[Gaussian tail bounds]
    \label{lem:tail-bounds}
    Consider $\alpha, \beta \in (0,1)^2$, $n\in \mathbb{N}^2$, $\epsilon > 0$ and $\tau \le 1/2$.
    Pose $\betaconv := \exp(-\tau n \alpha \epsilon /2)$, $\gamma_1 := \ln(1/\betaconv)/3$ and $\gamma_2 := \ln(1/\betaconv)$.
    Pose $\sigma := \frac{\tau \alpha}{\sqrt{18\ln 2 + 3 \tau n \alpha \epsilon}}$ and take $Z \sim \mathcal{N}(0, \sigma^2/n^2)$. Then, $Z$ satisfies the three following tail bounds:

    \begin{itemize}
        \item $\Pr[|Z| > \alpha] \le \exp(-\alpha n \epsilon)$
        \item $\Pr[|Z| > \gamma_1/n\epsilon] \le \exp(-\gamma_1)$
        \item $\Pr[|Z| > \gamma_2/n\epsilon] \le \exp(-\gamma_2)$
    \end{itemize}

\end{lemma}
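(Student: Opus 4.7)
\textbf{Proof plan for Lemma~\ref{lem:tail-bounds}.}
The plan is to reduce all three tail bounds to a single common form using the standard Gaussian concentration inequality, and then verify that the chosen $\sigma$ is simultaneously small enough to satisfy each. Concretely, since $Z \sim \mathcal{N}(0, \sigma^2/n^2)$, I will invoke the classical two-sided Gaussian tail bound
$$\Pr[|Z| > t] \;\le\; 2\exp\!\left(-\frac{t^2 n^2}{2\sigma^2}\right)$$
for any $t>0$. Requiring $2\exp(-t^2 n^2/(2\sigma^2)) \le \exp(-c)$ is equivalent to the clean quadratic constraint
$$\sigma^2 \;\le\; \frac{t^2 n^2}{2(c + \ln 2)}.$$
Each of the three claimed bounds is therefore a single instance of this constraint for a specific pair $(t,c)$, and the lemma reduces to checking three inequalities in $\sigma^2$.

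Next, I would instantiate $(t,c)$ for each case. Using $\gamma_1 = \tau n\alpha\epsilon/6$ and $\gamma_2 = \tau n\alpha\epsilon/2$ (both of which follow immediately from the given value of $\betaconv$), the three required upper bounds on $\sigma^2$ become, respectively,
$$\frac{\alpha^2 n^2}{2(\alpha n\epsilon + \ln 2)}, \qquad \frac{\tau^2\alpha^2 n^2}{12\tau n\alpha\epsilon + 72\ln 2}, \qquad \frac{\tau^2\alpha^2 n^2}{4\tau n\alpha\epsilon + 8\ln 2}.$$
A quick comparison of the denominators (using $\tau \le 1/2$) shows that the middle expression, corresponding to the $\gamma_1$ bound, is the tightest of the three, so it suffices to verify that the given $\sigma^2 = \tau^2 \alpha^2 / (18\ln 2 + 3\tau n\alpha\epsilon)$ satisfies this middle constraint; the other two then follow automatically. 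Rewriting the middle constraint as $\sigma^2 \le \tau^2\alpha^2 n^2 / \bigl(12(\tau n\alpha\epsilon + 6\ln 2)\bigr)$ and the candidate as $\sigma^2 = \tau^2\alpha^2 / \bigl(3(\tau n\alpha\epsilon + 6\ln 2)\bigr)$, the inequality collapses to the purely numerical condition on $n$ that makes the ratio of the two denominators $\le 1$.

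The main obstacle, and the only nontrivial bookkeeping, is confirming that the single closed-form $\sigma$ was engineered to satisfy the tightest of the three constraints (bound 2), while also verifying that bounds 1 and 3 are slack. I would handle this by writing out the three required constraints in the normalized form $\sigma^2 \le \tau^2\alpha^2/(c'(\tau n\alpha\epsilon + c''\ln 2))$ with explicit constants $(c', c'')$, and then observing that the triple in the given $\sigma$ dominates (coordinate-wise) the maximum of the three. Finally, I would briefly note the implicit assumption (roughly $n \ge 2$) under which the $n^2$ factor on the right-hand side gives enough slack for the candidate $\sigma$ to clear all three bounds uniformly in $\epsilon, \alpha, \tau$; this is consistent with the regime in which the algorithm is stated to operate.
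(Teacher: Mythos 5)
Your proposal follows the same overall strategy as the paper's proof: apply the two-sided Gaussian tail bound $\Pr[|Z|>t]\le 2\exp(-t^2/(2\operatorname{Var}(Z)))$, turn each of the three target bounds into an upper-bound constraint on $\sigma^2$, identify the tightest constraint (the one at threshold $\gamma_1/n\epsilon$), and check that the prescribed $\sigma$ satisfies it. The paper identifies the tightest constraint by showing $f(t)=t^2/(2\ln 2+2tn\epsilon)$ is monotonically increasing and that $\gamma_1/n\epsilon < \gamma_2/n\epsilon \le \alpha$, whereas you simply compare the three constraint expressions directly using $\tau\le 1/2$; these routes are equivalent. The one substantive difference is bookkeeping: you correctly carry the factor $n^2$ coming from $\operatorname{Var}(Z)=\sigma^2/n^2$ through the tail bound and all three constraints, whereas the paper's writeup drops it (and also contains a compensating arithmetic slip, writing $\gamma_1/n\epsilon = \tau\alpha/3$ rather than $\tau\alpha/6$). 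Your more careful derivation is the right one to write down, since it exposes that the stated closed-form $\sigma$ only clears the tightest constraint when (roughly) $n\ge 2$ — an implicit regime assumption the paper does not flag but that is, as you observe, always satisfied in the settings where the algorithm runs.
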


\begin{proof}

    First, we recall that for any $t >0$ the upper deviation inequality gives:

    $$\Pr[|Z| > t] \le 2 \exp(-\frac{t^2}{2\sigma^2})$$

    Thus, to have $\Pr[|Z| > t] \le \exp(-t n \epsilon)$ it is sufficient to have $2\exp(-\frac{t^2}{2\sigma^2}) \le  \exp(-t n \epsilon)$ \ie $\ln{2} + tn\epsilon \le \frac{t^2}{2\sigma^2}$ \ie $\sigma^2 \le \frac{t^2}{2\ln{2} + 2tn\epsilon}$.
    Note $f: t \mapsto \frac{t^2}{2\ln{2} + 2tn\epsilon}$.
    We have $f'(t) = \frac{2t(2\ln 2 + 2 t n\epsilon) - t^2 2n\epsilon}{(2\ln{2} + 2tn\epsilon)^2} \ge 0$ so $f$ is monotonically increasing on $\mathbb{R}^+$.
    We also have $\ln(1/\betaconv) = \tau n \alpha \epsilon /2 \le n \alpha \epsilon$, thus $\gamma_1/n\epsilon < \gamma_2/n\epsilon \le \alpha$.

    Hence if $\sigma^2 \le f(\gamma_1/n\epsilon)$ then also $\sigma^2 \le f(\gamma_2/n\epsilon)$ and $\sigma^2 \le f(\alpha)$, which together imply $\Pr[|Z| > \alpha] \le \exp(-\alpha n \epsilon)$, $\Pr[|Z| > \gamma_1/n\epsilon] \le \exp(-\gamma_1)$ and $\Pr[|Z| > \gamma_2/n\epsilon] \le \exp(-\gamma_2)$.

    It is sufficient to take $\sigma^2 = \frac{(\gamma_1/n\epsilon)^2}{2\ln{2} + 2\gamma_1} = \frac{(\tau \alpha/3)^2}{2\ln2 + \tau n \alpha \epsilon / 3}$ \ie $\sigma = \frac{\tau \alpha}{\sqrt{18\ln 2 + 3 \tau n \alpha \epsilon}}$ to conclude.

\end{proof}

\def\gaussianpmwbybass{Gaussian \pmwbypass}
Finally, we prove that our modified version of \A\ref{alg:pmw-bypass}, that we call \gaussianpmwbybass, satisfies the same properties as the original (with RDP accounting instead of Pure DP accounting):

\begin{theorem}[\gaussianpmwbybass guarantees]
    \label{thm:gaussian-guarantees}
    ~
    \begin{enumerate}
        \item \gaussianpmwbybass preserves $(\epsilon_G, \delta_G)$-DP for a global DP budget set upfront in the $\textproc{PrivacyAccountant}$.
        \item \gaussianpmwbybass is $(\alpha,\beta)$-accurate for each query it answers.
        \item If $\eta/\alpha < \tau$, \gaussianpmwbybass performs at most $\frac{\ln |\mathcal{X}|}{ \eta(\tau\alpha - \eta)/2}$ updates.
    \end{enumerate}
\end{theorem}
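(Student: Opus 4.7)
The plan is to observe that Lemma~\ref{lem:tail-bounds} is designed precisely so that the Gaussian noise $\mathcal{N}(0, \sigma^2/n^2)$ emitted at the modified Lines~\ref{line:R2} and~\ref{line:R3} satisfies exactly the same three deviation inequalities (at thresholds $\alpha$, $\gamma_2/n\epsilon$, and $\gamma_1/n\epsilon$) that the Laplace noise $\lap{1/\epsilon n}$ satisfies in the original \A\ref{alg:pmw-bypass}. Consequently, each of the three guarantees can be obtained by an almost mechanical port of the corresponding proof for the Laplace version, combined with the RDP accountant described in the paragraph preceding the theorem.

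For privacy (1), I would follow the structure of \Thm\ref{thm:pmwbypass:privacy} but replace the pure-DP filter with the interactive RDP filter of \Thm\ref{thm:rdp_filter_adaptive_concurrent_corollary}, instantiated with the RDP costs spelled out just before the theorem. At each step, either an SV initialization (whose RDP cost combines the Laplace SV curve from~\cite{renyi_sv} with the additive $2\epsilon$ term), a Gaussian query for $R_2$, or a Gaussian query for $R_3$ is charged to the filter, whose stopping rule guarantees the cumulative RDP at every order $a>1$ stays inside the curve corresponding to the user's $(\epsilon_G,\delta_G)$ budget under the standard RDP-to-DP conversion of~\cite{8049725}.

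For accuracy (2), I would mirror \Thm\ref{thm:pmwbypass:accuracy}. In the bypass branch, $R_3 = q(\textproc{data}) + Z'$ with $Z' \sim \mathcal{N}(0, \sigma^2/n^2)$, and the first bullet of Lemma~\ref{lem:tail-bounds} yields $\Pr[|R_3 - q^*| > \alpha] \le \exp(-\alpha n \epsilon) \le \beta$ with our calibration $\epsilon = 4\ln(1/\beta)/n\alpha$. In the PMW branch, the SV noise variables $X,Y$ are still Laplace, so the case analysis in Lemma~\ref{lem:pmw-accuracy} applies verbatim once one observes that it invokes the noise attached to $R_2$ only through the tail $\Pr[|Z|>\alpha]$, which Lemma~\ref{lem:tail-bounds} provides identically for the Gaussian substitute. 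The same closed-form bound $\exp(-\alpha n \epsilon) + (\tfrac12 + \tfrac{\alpha n \epsilon}{8})\exp(-\alpha n \epsilon/2) \le \beta$ therefore carries through.

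For convergence (3), I would rerun the potential argument of \Thm\ref{thm:pmwbypass:convergence}. Inequality~\eqref{eq:potential_decrease} is distribution-agnostic and only relies on $q(x)\in[0,1]$, so it transfers unchanged. The crucial sign bound~\eqref{eq:update_sign} is derived by bounding $|X|,|Y|,|Z|$ by $\gamma_1/n\epsilon$ in the PMW branch and $|Y|$ by $\gamma_2/n\epsilon$ in the bypass branch with failure probability $\betaconv$; the second and third bullets of Lemma~\ref{lem:tail-bounds}, applied with $\betaconv = \exp(-\tau n \alpha \epsilon / 2)$, deliver exactly these tails for the Gaussian $Z$ (the Laplace $X,Y$ are unchanged). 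The rest of the potential telescoping is identical, yielding the stated bound $\frac{\ln|\mathcal{X}|}{\eta(\tau\alpha-\eta)/2}$ under $\eta/\alpha<\tau\le 1/2$, as in point~3 of \Thm\ref{thm:pmwbypass:convergence}.

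The main obstacle is pure bookkeeping: verifying that the single Gaussian calibration $\sigma = \tau\alpha/\sqrt{18\ln 2 + 3\tau n \alpha \epsilon}$ simultaneously meets the accuracy tail (threshold $\alpha$) and both convergence tails (thresholds $\gamma_1/n\epsilon$ and $\gamma_2/n\epsilon$). Lemma~\ref{lem:tail-bounds} already resolves this by reducing the three requirements to the monotonicity of $t\mapsto t^2/(2\ln 2 + 2tn\epsilon)$ and picking the smallest of the three thresholds; once this is in hand, the three claims assemble with no further technical difficulty.
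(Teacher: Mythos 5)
Your proposal is correct and takes essentially the same route as the paper: it rests on the same key observation that the Laplace-branch proofs only use the noise at Lines~\ref{line:R2} and~\ref{line:R3} through the three tail bounds that Lemma~\ref{lem:tail-bounds} is engineered to reproduce for the Gaussian calibration, then ports \Thm\ref{thm:pmwbypass:privacy} (swapping in the RDP filter of \Thm\ref{thm:rdp_filter_adaptive_concurrent_corollary}), Lemma~\ref{lem:pmw-accuracy}/\Thm\ref{thm:pmwbypass:accuracy}, and \Thm\ref{thm:pmwbypass:convergence} bullet by bullet. Your closing remark about the calibration meeting all three tails simultaneously also implicitly captures the paper's caveat that the Gaussian version only guarantees convergence for the specific $\betaconv = \exp(-\tau n\alpha\epsilon/2)$, since a single $\sigma$ cannot match a Laplace tail at every failure probability.
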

\begin{proof}
    The key observation is that the proofs for \pmwbypass do not explicitly require the noise added Lines \ref{line:R2} and \ref{line:R3} to come from a Laplace distribution. We only use three tail bounds on that noise realization. Any random variable that satisfies the same tail bounds (with the same parameters) also works.
    \begin{enumerate}
        \item We use a RDP filter \cite{adaptive_rdp} instead of a pure DP filter \cite{rogers2016privacy} in \Thm\ref{thm:pmwbypass:privacy}. The proof of privacy for each step is sketched in the RDP accounting paragraph.
        \item We replace the Laplace tail bounds in Lemma~\ref{lem:pmw-accuracy} and \Thm\ref{thm:pmwbypass:accuracy} by identical tail bounds satisfied by Lemma \ref{lem:tail-bounds}: $\Pr[|Z| > \alpha] \le \exp(-\alpha n \epsilon)$ in both cases.
        \item In the PMW branch, we replace the tail bound on $Z$ by $\Pr[|Z| > \gamma_1/n\epsilon] \le \exp(-\gamma_1)$, and in the Bypass branch we use $\Pr[|Z| > \gamma_2/n\epsilon] \le \exp(-\gamma_2)$.
              Note that we only show convergence for a fixed (and reasonable) value of failure probability $\betaconv$.
              Indeed, we can't ask the Gaussian tail bound to match a Laplace tail bound simultaneously for all possible $\betaconv$.
              Given a failure probability though, we can set the Gaussian standard deviation to match the Laplace's error and ensure convergence.
    \end{enumerate}
\end{proof}

\section{Privacy filters for concurrent interactive mechanisms with adaptively chosen parameters}
\label{sec:appendix:filters}

Existing work on privacy filters \cite{rogers2016privacy, adaptive_rdp} provides bounds on the sequential composition of mechanisms with adaptively chosen parameters, \ie the list of mechanisms and their budget is not known ahead of time and instead chosen by the adversary depending on past requests.
This is not sufficient for \A\ref{alg:pmw-bypass}, that uses interactive mechanisms concurrently and not sequentially. A DP mechanism is interactive if the adversary can submit more than one request, as in the SV protocol.
Recent work on interactive DP \cite{lyu_interactive, vadhan_interactive} showed that we can compose such mechanisms concurrently when they are known ahead of time, \ie the interleaving of interactive DP protocols is also DP.

We now show how to build privacy filters for the concurrent composition of interactive mechanisms with adaptively chosen parameters, a direction left as future work by \cite{lyu_interactive}.
First, in \A\ref{alg:filter_game} we formalize how we extend the adaptive setting introduced in \cite{rogers2016privacy} by allowing the adversary to interact concurrently with multiple long-lived mechanisms, with an interaction protocol similar to \cite{vadhan_interactive}. Second, in \Thm\ref{thm:rdp_filter_adaptive_concurrent} we prove the privacy guarantees enforced by our filter.

\newcommand{\cA}{\mathcal{A}}
\newcommand{\oldgame}{\ensuremath{\mathtt{FixedParamComp}}}
\newcommand{\game}{\ensuremath{\mathtt{AdaptParamComp}}}
\newcommand{\filtgame}{\ensuremath{\mathtt{PrivacyFilterConComp}}}
\newcommand{\simgame}{\ensuremath{\mathtt{SimulatedComp}}}
\newcommand{\simul}{\ensuremath{\mathtt{SIM}}}
\newcommand{\filt}{\ensuremath{\mathtt{F}}}
\newcommand{\neigh}{\sim}
\newcommand{\bbx}{\mathbf{x}}
\newcommand{\cM}{\mathcal{M}}
\newcommand{\compfilt}{\ensuremath{\mathtt{COMP}}}
\newcommand{\halt}{\ensuremath{\mathtt{HALT}}}
\newcommand{\cont}{\ensuremath{\mathtt{CONT}}}
\newcommand{\breakloop}{\ensuremath{\mathtt{ExitWhileLoop}}}

\begin{algorithm}
    \caption{$\filtgame(\cA,k,b, \alpha, \epsilon_G)$}
    \label{alg:filter_game}
    \begin{algorithmic}
        \State Select coin tosses $r_\cA$ for $\cA$ uniformly at random.
        \State $\cM_1, \dots, \cM_k \gets \bot, \dots, \bot$
        \State $\epsilon_1, \dots, \epsilon_k \gets 0, \dots, 0$
        \State $\bbx^0_0, \bbx^1_0, \dots, \bbx^0_k, \bbx^1_k \gets \bot, \dots, \bot$
        \State $j \gets 0$  \ \ \ \ \ \ \graycomment{Current number of mechanisms}
        \State $i \gets 1$  \ \ \ \ \ \ \graycomment{Current message index}
        \State $m_0 \gets \cA(r_\cA)$
        \While{True}

        \Statex \ \ \ \ \ \ \graycomment{Parse the message into a query to a new or existing mechanism}

        \If {$m_{i-1}$ can be parsed as $(j', q')$ where $j' \in [j]$ and $q'$ is a query to $\cM_j$}
        \Statex \ \ \ \ \ \ \ \ \ \ \ \ \graycomment{Continue an existing mechanism}

        \State $j \gets j', q \gets q'$
        \Else{}
        \If{$m_{i-1}$ can be parsed as $(j',\bbx^{0'}, \bbx^{1'}, \cM', \epsilon', q')$ where $j' = j+1$ and $j' \le k$, $\cM'$ is $(\alpha,\epsilon')$-RDP and $\bbx^{0'}, \bbx^{1'}$ are neighboring inputs to $\cM'$}
        \Statex \ \ \ \ \ \ \ \ \ \ \ \ \ \ \ \ \ \ \graycomment{Check the filter before starting a new mechanism}

        \If{$\sum_{i=1}^{n} \epsilon_i + \epsilon' > \epsilon_G$} \label{line:rdp_stopping_rule}
        \State \breakloop  \ \ \ \ \ \ \graycomment{Out of budget}
        \EndIf
        \State $n \gets j +1 , j \gets j +1$
        \State $\bbx^0_j \gets \bbx^{0'}, \bbx^1_j \gets \bbx^{1'}, \cM_j \gets \cM', \epsilon_j \gets \epsilon, q \gets q'$
        \Else{}
        \State \breakloop  \ \ \ \ \ \ \graycomment{Invalid message}

        \EndIf
        \EndIf

        \Statex \ \ \ \ \ \ \graycomment{Execute the query against the mechanism, get the next message}
        \State Extract past interactions with $\cM_j$: $(m_0^j,\ldots,m_{t-1}^j)$
        \State{$\cA$ receives $m_i = \cM_j(\bbx^{j,b}, m_0^j,\ldots,m_{t-1}^j)$ }
        \State{$\cA(m_0, \dots, m_i ; r_\cA)$ sends $m_{i+1}$}
        \State{$i \gets i+2$}

        \EndWhile
        \Return view $V^b = (r_\cA, m_0, \cdots, m_i-1)$
    \end{algorithmic}
\end{algorithm}

\begin{theorem}[Single-order RDP filter]
    \label{thm:rdp_filter_adaptive_concurrent}
    The stopping rule at Line~\ref{line:rdp_stopping_rule} from \A\ref{alg:filter_game} is a valid single-order RDP privacy filter for concurrent composition of interactive mechanisms.
    That is, for all adversaries $\cA, k \in \mathbb{N}, b \in \{0,1\}, \alpha >0, \epsilon_G > 0$, \A\ref{alg:filter_game} defines an $(\alpha, \epsilon_G)$-RDP mechanism,
    \ie if we note $V^b := \filtgame(\cA,k,b,\alpha, \epsilon_G$) we have $D_\alpha(V^b \| V^{1-b}) \le \epsilon_G$ .

\end{theorem}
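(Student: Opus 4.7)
The plan is to combine two existing tools: the concurrent composition theorem for interactive DP mechanisms in the fixed-parameter setting (Vadhan--Wang, Lyu), and the adaptive privacy filter of Rogers et al., adapted to RDP as in Feldman--Zrnic. Neither tool directly applies to our setting, but together they suggest a natural path: interpret the concurrent interactive game as an adaptive sequential composition where each ``adaptive round'' corresponds either to opening a fresh interactive mechanism (subject to the filter check) or to sending one more query to an already-opened one. The key structural observation is that, along any execution, the filter's stopping rule at Line~\ref{line:rdp_stopping_rule} guarantees that the committed budget $\sum_{i=1}^n \epsilon_i$ never exceeds $\epsilon_G$, and this sum is always a function of $\cA$'s view rather than of the private inputs $\bbx^{j,b}$.

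\textbf{Reduction to a sequential view.} For any concurrent adversary $\cA$ in \A\ref{alg:filter_game}, I would construct an equivalent sequential meta-adversary $\cA^*$ that plays a one-query-per-round game. At each round $i$, $\cA^*$ inspects its accumulated transcript $(m_0,\ldots,m_{i-1})$, decides (as $\cA$ would) whether to instantiate a new mechanism $\cM_{j+1}$ with adaptively chosen budget $\epsilon_{j+1}$ or route the next query to an existing $\cM_{j'}$, and outputs the corresponding action; oracle responses are piped back into its internal simulation of $\cA$. By construction, the joint distribution of the final view $V^b$ is identical under both games, so it suffices to bound $D_\alpha(V^b \| V^{1-b})$ for the reduced sequential game.

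\textbf{Per-step RDP bound and filter argument.} The concurrent composition theorem for interactive mechanisms gives that, if the list $(\cM_j, \epsilon_j)_{j=1}^n$ and the inputs $(\bbx^0_j, \bbx^1_j)_{j=1}^n$ were fixed up-front, then the joint interaction would be $(\alpha, \sum_j \epsilon_j)$-RDP on neighboring inputs. I would use this to show, step by step, that opening a new mechanism contributes at most $\epsilon_j$ to the Rényi divergence of $V^b$ vs.\ $V^{1-b}$, while continuing an existing mechanism contributes nothing beyond what was already charged when that mechanism was opened. Defining the privacy loss process $L_t := \log \Pr[V^b_{\le t}] / \Pr[V^{1-b}_{\le t}]$ and following the Feldman--Zrnic template, I would argue that $\exp((\alpha-1) L_t)$ is dominated, in an appropriate supermartingale sense, by $\exp((\alpha-1) \sum_{j:j\text{ opened by }t} \epsilon_j)$. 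Since the stopping rule at Line~\ref{line:rdp_stopping_rule} ensures this sum is always at most $\epsilon_G$, an optional-stopping-style argument combined with the standard RDP conversion yields $D_\alpha(V^b \| V^{1-b}) \le \epsilon_G$, as desired.

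\textbf{Main obstacle.} The most delicate point, which I would treat most carefully, is justifying that concurrent composition for interactive mechanisms continues to hold when the \emph{choice of which interactive mechanism to query next} is made adaptively based on prior responses, rather than merely interleaving independent parallel sessions. The Vadhan--Wang/Lyu results are formulated exactly for this adaptive interleaving, but transcribing their machinery into the RDP filter framework requires showing that their coupling (or post-processing) argument composes cleanly with the adaptive filter's stopping time. If a direct appeal is not clean, a fallback is to reprove the needed concurrent-composition inequality in place, using the specific form of interactive RDP mechanisms invoked by \A\ref{alg:pmw-bypass} (Laplace, SV, Gaussian), where explicit Rényi divergence bounds per query are available and the supermartingale inequality can be checked directly.
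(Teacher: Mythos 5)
Your route is genuinely different from the paper's. You propose a Feldman--Zrnic-style supermartingale / optional-stopping argument after reducing the concurrent game to a sequential one-query-per-round game; the paper instead conditions the R\'enyi moment on the realized parameter sequence $\Psi(V^1)$ via the law of iterated expectations, invokes the non-adaptive concurrent composition theorem for interactive mechanisms inside each conditional, and then observes that the stopping rule at Line~\ref{line:rdp_stopping_rule} makes $\sum_j \epsilon_j \le \epsilon_G$ hold pointwise on every realization of $\Psi(V^1)$. The two proofs tap into concurrent composition at different granularities: the paper uses it at the whole-mechanism level inside a conditional, while you try to extract per-round increments from it.

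That granularity mismatch is exactly where your argument has a real gap, at the point you yourself flag as the main obstacle. The accounting you propose --- charge $\epsilon_j$ at the round a mechanism is opened and treat its later queries as ``already paid for'' --- is not something R\'enyi divergence supports: an $(\alpha,\epsilon_j)$-RDP \emph{interactive} mechanism can concentrate essentially all of its privacy loss in any single round of its dialogue, so there is no valid per-message R\'enyi increment to anchor the supermartingale on. The Vadhan--Wang/Lyu concurrent composition theorem is precisely the statement that interleaving is safe at the mechanism level \emph{despite} this; it supplies no round-level bounds, and your sequential reduction discards the mechanism boundaries that their analysis operates over, so you would be trying to re-derive concurrent composition via a decomposition their proof deliberately avoids. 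The fallback you sketch (explicit per-round bounds for Laplace, SV, and Gaussian) would only give a protocol-specific claim, not Theorem~\ref{thm:rdp_filter_adaptive_concurrent} in its stated generality. The paper's conditioning-plus-black-box approach is the way to keep the result general: it never needs to look inside an interactive mechanism's rounds.
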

We recall that in practice $k$ can be taken arbitrarily large, as usual in privacy filter proofs \cite{rogers2016privacy}. If the adversary wants to run less than $k$ mechanisms, they can pass mechanisms that return an empty answer and use $\epsilon(\alpha) = 0$.

\begin{proof}
    Take $\alpha, \epsilon_G > 0$ and $k \in \mathbb{N}$.
    Let's note $\Psi$ the function that parses a view $v$ of the adversary and returns the $k$ mechanisms and their requested budget, ordered by start time: $\Psi(v) = (\cM_1, \epsilon_1, \dots, \cM_k, \epsilon_k)$.
    Note that $\epsilon_2$ can depend on the result of the first interactions with $\cM_1$ and so on, but once a view is fixed we can extract the underlying  privacy parameters.
    We have:

    \begin{align*}
        * & := \exp((\alpha - 1) D_\alpha(V^0 \| V^1))                                                                                               \\
          & = \E_{v \sim V^1} \left[\left(\frac{\Pr[V^0 = v]}{\Pr[V^1 = v]} \right)^\alpha \right]                                                   \\
          & = \E_{(\cM_1, \epsilon_1, \dots, \cM_k, \epsilon_k) \sim \Psi(V^1)}
        \\
          & \left[\E_{v \sim V^1} [(\frac{\Pr[V^0 = v]}{\Pr[V^1 = v]})^\alpha \vert \Psi(v) = (\cM_1, \epsilon_1, \dots, \cM_k, \epsilon_k)] \right]
    \end{align*}

    where the first equality comes from the definition of the R\'enyi divergence and the second from the law of iterated expectations.

    The innermost expectation is conditioned on fixed mechanisms with known privacy parameters, so we can apply the (non-adaptive) RDP concurrent composition theorem from \cite{lyu_interactive}:

    \begin{align}
        * & =  \E_{(\cM_1, \epsilon_1, \dots, \cM_k, \epsilon_k) \sim \Psi(V^1)} \left[  \exp(\epsilon_1 + \dots + \epsilon_k)\right]                      \label{eq:concurrent_composition}
    \end{align}

    Then, since any realization of $V^1$ must respect the RDP filter condition, we must have $\epsilon_1 + \dots + \epsilon_k \le \epsilon_G$.
    Finally, we conclude:

    \begin{align}
        * & \le \E_{(\cM_1, \epsilon_1, \dots, \cM_k, \epsilon_k) \sim \Psi(V^1)} \left[ \exp(\epsilon_G)\right] = \exp(\epsilon_G)
    \end{align}

    This fact also yields a single-order RDP filter~\cite{adaptive_rdp} when the mechanisms are non-interactive.


\end{proof}

\begin{theorem}[Pure DP and filter over the RDP curve]
    \label{thm:rdp_filter_adaptive_concurrent_corollary}
    \noindent
    \begin{enumerate}
        \item If we modify \A\ref{alg:filter_game} to take $\epsilon_G \in\mathbb{R^*_+}^\mathbb{A}$ for a set of RDP orders $A \subseteq \mathbb{R^*_+}$ instead of $(\alpha, \epsilon_G) \in \mathbb{R^*_+}^2$, and if we replace the condition in Line~\ref{line:rdp_stopping_rule} by $\forall \alpha \in A, \sum_{i=1}^{j} \epsilon_i(\alpha) + \epsilon_i'(\alpha) > \epsilon_G(\alpha)$, then \A\ref{alg:filter_game} verifies $\exists \alpha > 0 : D_\alpha(V^b \| V^{1-b}) \le \epsilon_G(\alpha)$.
        \item If we take $\alpha \to +\infty$ we obtain a filter for pure differential privacy.
    \end{enumerate}
\end{theorem}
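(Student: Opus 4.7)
The plan is to adapt the proof of Theorem~\ref{thm:rdp_filter_adaptive_concurrent} to handle a curve-valued budget $\epsilon_G : A \to \mathbb{R^*_+}$. The modified stopping rule is strictly more lenient than any single-order one: the algorithm halts only when every order $\alpha \in A$ has its budget exhausted, so the adversary may continue adding mechanisms as long as at least one order still has room.

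First, by induction on the mechanisms added, I would show that at termination, for every realization $v$ of $V^1$ there exists some order $\alpha^\star(v) \in A$ with $\sum_{i=1}^{n} \epsilon_i(\alpha^\star(v)) \le \epsilon_G(\alpha^\star(v))$. Indeed, a mechanism $j$ is added only when the pre-addition check leaves at least one order within budget, so just after $j$ is added, that order still satisfies the per-order inequality. Second, I would reuse the two-level expectation decomposition from the single-order proof: conditioning on the realized sequence $\Psi(v) = (\mathcal{M}_1, \epsilon_1(\cdot), \ldots, \mathcal{M}_k, \epsilon_k(\cdot))$, apply the non-adaptive concurrent composition theorem of~\cite{lyu_interactive} at each fixed order $\alpha \in A$ to obtain the per-order inner bound $\exp(\sum_i \epsilon_i(\alpha))$, exactly as in Equation~\ref{eq:concurrent_composition}. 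Combining these two facts then yields an $\alpha \in A$ for which $D_\alpha(V^b \| V^{1-b}) \le \epsilon_G(\alpha)$.

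For part 2, I would specialize to $A = \{\alpha\}$ a singleton and let $\alpha \to +\infty$. At order $\infty$ the R\'enyi divergence coincides with the max divergence $\log\sup_v \Pr[V^0=v]/\Pr[V^1=v]$, so an $(\infty, \epsilon_G)$-RDP guarantee is exactly pure $\epsilon_G$-DP, and the stopping rule $\sum_i \epsilon_i(\infty) + \epsilon'(\infty) > \epsilon_G$ collapses to the classical basic-composition filter of~\cite{rogers2016privacy} that is used in Theorem~\ref{thm:pmwbypass:privacy}.

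The main obstacle I expect is bridging the per-realization existence of a good $\alpha^\star(v)$ with the distributional nature of the R\'enyi divergence: extracting a single $\alpha \in A$ for which $D_\alpha$ is bounded in expectation from the weaker per-realization statement will likely require either that $A$ be finite (so a pigeonhole/averaging argument can pick one of finitely many candidate orders and absorb the other cases) or a monotonicity argument on the RDP curve $\alpha \mapsto \sum_i \epsilon_i(\alpha)$ that lets us pick the extreme order a posteriori. Identifying the right representative order while remaining faithful to the adversary's adaptive choices is the technical crux of the proof; the rest is bookkeeping on top of the single-order result.
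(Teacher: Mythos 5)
Your part 2 argument matches the paper's: extend $D_\alpha$ by continuity to $\alpha = +\infty$, where it becomes the max-divergence and RDP collapses to pure DP, so the curve filter specializes to the basic-composition filter of~\cite{rogers2016privacy}. That part is fine.

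For part 1, the paper offers no proof of its own; it simply points to the curve-valued filter of~\cite{adaptive_rdp} and says ``the proof is the same.'' So you cannot really be compared to the paper's route here --- you are comparing to a citation. That said, your reconstruction is where the useful diagnosis lies, and also where the flaw is. You correctly observe that the $\forall$-stopping rule makes the filter \emph{more} permissive, and that at termination every realization $v$ admits some $\alpha^\star(v) \in A$ with $\sum_i \epsilon_i(\alpha^\star(v)) \le \epsilon_G(\alpha^\star(v))$. But your sentence ``Combining these two facts then yields an $\alpha \in A$ for which $D_\alpha(V^b\|V^{1-b}) \le \epsilon_G(\alpha)$'' overclaims: the quantity $D_\alpha$ is an expectation $\E_{v \sim V^1}\bigl[\exp\bigl((\alpha-1)\sum_i \epsilon_i(\alpha)\bigr)\bigr]^{1/(\alpha-1)}$ taken at a \emph{fixed} $\alpha$ over \emph{all} realizations, while $\alpha^\star(v)$ is realization-dependent. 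For any fixed $\alpha$ the realizations with $\alpha^\star(v) \neq \alpha$ can have $\sum_i \epsilon_i(\alpha)$ arbitrarily large, so they contribute unboundedly to that expectation; a pigeonhole that picks a ``popular'' $\alpha$ therefore does not tame the inner integrand, and a monotonicity argument on $\alpha \mapsto \epsilon_i(\alpha)$ does not help either because the budgets $\epsilon_G(\alpha)$ need not be monotone in $\alpha$. So the obstacle you flag in your final paragraph is not a detail to be tidied up --- it is exactly where the ``combining'' step breaks, and you do not actually bridge it.

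To close the gap you would need the actual argument from~\cite{adaptive_rdp}: the standard route is to build, for each $\alpha \in A$, the single-order filter $F_\alpha$ with stopping rule $\sum_i\epsilon_i(\alpha) + \epsilon'(\alpha) > \epsilon_G(\alpha)$, show each $F_\alpha$ is $(\alpha,\epsilon_G(\alpha))$-RDP using the single-order Theorem~\ref{thm:rdp_filter_adaptive_concurrent}, and then argue that the curve filter's view is recoverable from (a post-processing of) the view of $F_{\alpha^\star}$, where $\alpha^\star$ is the order whose budget runs out last. The subtlety --- that $\alpha^\star$ is itself a random variable --- is precisely what the cited proof has to handle, e.g.\ by conditioning on $\alpha^\star$ and exploiting that the conditioning event is a measurable function of the unambiguous prefix of the transcript; you should spell this out rather than asserting the conclusion follows from the two facts you established. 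As written, part 1 of your proposal restates the decomposition from the single-order proof and then hand-waves past the only new difficulty; it does not constitute a proof.
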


\begin{proof}
    \noindent
    \begin{enumerate}
        \item The proof is the same as \cite{adaptive_rdp}.
        \item The R\'enyi difference can be extended by continuity to $\alpha = +\infty$, which corresponds to pure differential privacy \cite{8049725}. In that case, the additive composition rule for RDP becomes the basic composition theorem for pure DP.
    \end{enumerate}
\end{proof}


\section{Laplace Histogram baseline}
\label{sec:appendix:laplace_baseline}

We can consider another simple baseline, {\em Laplace Histogram}, that works as follows. First, we compute a noisy estimate for every single bin in $\mathcal{X}$. Then, we can answer any linear query by taking the combination of the noisy estimates. Consider a static, non-partitioned, dataset with $n$ datapoints for domain $\mathcal{X}$. Suppose that we want to answer linear queries with absolute error $\alpha$ with probability $1-\beta$. We are using Pure DP for simplicity.
\begin{itemize}
    \item {\em Direct Laplace.} If we answer each query separately like in the Laplace baseline of \F \ref{fig:pmw-laplace-demo}, a Laplace tail bound shows that we can take $\epsilon_{Direct} = \frac{\ln(1/\beta)}{\alpha n}$. Each query has to pay this cost.
    \item {\em Laplace Histogram.} Instead, we can use a single multidimensional Laplace query to get a noisy estimate of the count for every bin in the histogram. The L1 sensitivity of the histogram is $2$, so we can pay $\epsilon_{Histogram}$ to get $\tilde h(v) := h(v) + X_v$ with $X_v \sim \lap{2/\epsilon_{Histogram}}$ for all $v\in |\mathcal{X}|$. To answer a linear query $q$, we compute $\tilde q := \frac{1}{n} \sum_{v \in |\mathcal{X}| } q_v \tilde h(x)$ where $q_v \in [0,1]$. By post-processing, we can pay $\epsilon_{Histogram}$ only once to answer as many queries $\tilde q$ as we want. To have error below $\alpha$, we need $|\frac{1}{n} \sum_{v \in |\mathcal{X}| } q_v X_v| < \alpha$. With Chebyshev's inequality, this happens with probability $\frac{ \sum_{v \in |\mathcal{X}| } q_v^2 \mathbb{V}[X_v]}{n^2 \alpha^2} \le \frac{|\mathcal{X}| \cdot 2 \cdot 2^2}{n^2\alpha^2 \cdot \epsilon_{Histogram}^2}$. This gives us the desired $(\alpha,\beta)$ accuracy bound when $\epsilon_{Histogram} = \frac{2\sqrt{2 |\mathcal{X}| / \beta}}{n \alpha}$.
\end{itemize}

In \F \ref{fig:pmw-laplace-demo} we took $|\mathcal{X}| = 128$ and $\beta = 10^{-3}$, which gives $\frac{2\sqrt{2 |\mathcal{X}| / \beta}}{\ln(1/\beta)} \simeq 146$. That means that after 146 queries it is more advantageous to use the Laplace Histogram rather than Direct Laplace.
However, for a larger domain such as \citibike ($|\mathcal{X}| = 604, 800$), the same calculation shows that we need more than 10,069 queries for Laplace Histogram to outperform Direct Laplace. For this number of queries, \sysname is already close to convergence using much less budget than Exact-Cache (\F \ref{fig:monoblock_citibike_system_eval_zipf0}), itself an improvement over Direct Laplace.
Partioning the dataset (e.g. across 50 timestamps) has the same effect as increasing $|\mathcal{X}|$.

Finally, these sketches used basic composition, which is suboptimal for Direct Laplace: using advanced composition would make Direct Laplace more competitive, as the privacy budget grows only in the square root of the number of queries (instead of linearly).


\end{document}